\numberwithin{equation}{section}
\crefname{assumption}{assumption}{assumptions}
\Crefname{assumption}{Assumption}{Assumptions}
\setlist[itemize]{leftmargin=2.2em}
\setlist[enumerate]{leftmargin=2.2em}
\providecommand{\mathscr}{\mathcal}
\theoremstyle{plain}
\newtheorem{theorem}{Theorem}[section]
\newtheorem{proposition}[theorem]{Proposition}
\newtheorem{lemma}[theorem]{Lemma}
\newtheorem{corollary}[theorem]{Corollary}
\theoremstyle{definition}
\newtheorem{assumption}[theorem]{Assumption}
\newtheorem{remark}[theorem]{Remark}
\newtheorem{example}[theorem]{Example}
\newtheorem{problem}[theorem]{Problem}
\DeclareMathOperator{\KL}{KL} 
\newcommand{\R}{\mathbb{R}}
\newcommand{\E}{\mathbb{E}}
\newcommand{\Prob}{\mathbb{P}}
\newcommand{\Ptwo}{\mathcal{P}_2}
\newcommand{\Var}{\mathrm{Var}}
\title{\LARGE \textbf{Differential Beliefs in Financial Markets Under Information Constraints: \textcolor{black}{A Modeling Perspective}}}
\author{
\textsc{Karen Grigorian}\thanks{Department of Statistics and Applied Probability, University of California Santa Barbara; grigorian@ucsb.edu}
\and
\textsc{Robert A. Jarrow}\thanks{Samuel Curtis Johnson Graduate School of Management, Cornell University; robertjarrow@cornell.edu}
}
\date{\today}
\begin{document}
\maketitle

\begin{abstract}
We apply the theory of McKean-Vlasov-type SDEs to study several problems related to market efficiency in the context of partial information and partially observable financial markets: (i) convergence of reduced-information market price processes to the true price process under an increasing information flow; (ii) a specific mechanism of shrinking biases under increasing information flows; (iii) optimal aggregation of expert opinions by a trader seeking a positive alpha. All these problems are studied by means of (conditional) McKean-Vlasov-type SDEs, Wasserstein barycenters, KL divergence and relevant tools from convex optimization, optimal control and nonlinear filtering. We supply the theoretical results in (i)-(iii) with concrete simulations demonstrating how the proposed models can be applied in practice to model financial markets under information constraints and the arbitrage-seeking behavior of traders with differential beliefs. 
\end{abstract}

\noindent\textbf{Keywords:} differential beliefs; partially observable financial markets; McKean--Vlasov SDE; Wasserstein distance; Wasserstein barycenter; nonlinear filtering; measure--valued control.\\[2mm]
\noindent\textbf{MSC 2020:} 49K45; 60H10; 93E11; 91G80.

\begin{color}{black}
\section{Introduction}

Our general goal is to study the interplay between the notions of market efficiency, partial information and partial observability from a modeling perspective. This is accomplished in a sequence of three successively more specialized models which, as we show, are highly amenable to practical implementation and simulation.  

The original inspiration for some of the suggested models came from several examples of model uncertainty presented in \cite{OksendalSulem} and the interesting application of the notion of barycenters in \cite{JaimungalPesenti:KLbarycentre}, as well as the authors' own published research \cite{GKJR1, GKJR2, GKJR3, GKJR4, GKJR5, GKJR6, JarrowLarsson} which naturally suggested a deeper study of financial markets under information constraints.

In the first application, the most general setting, we propose a model
based on a McKean-Vlasov-type stochastic differential equation (MKVSDE)
with a barycentric measure input that explicitly describes how increasing
information flows impact stock prices and market efficiency. Here,
we first construct a hypothetical market with price $S$ that satisfies
No Free Lunch with Vanishing Risk (NFLVR) and No Dominance (ND) with
respect to the information set $\mathbb{F}$, which represents all
the information available in the market. These properties imply that
there exists an equivalent martingale measure (EMM) with respect to
$\mathbb{F}$ in the hypothetical market, see \cite{Jarrow} for the
definitions of NFLVR, ND, and EMM and the stated theorems. Using the
definition of an efficient market as in Jarrow and Larsson \cite{JarrowLarsson},
by construction, this hypothetical market is efficient with respect
to $\mathbb{F}$ in the sense of strong-form efficiency \cite{Jarrow},
Ch.16.

Next, we consider a sequence of actual markets, indexed by $n$. The
markets differ by the private information available to the $m$ traders,
indexed by $i$, and denoted $\mathbb{G}^{i,n}\subset\mathbb{F}$.
We assume that the \textit{true} price process \textbf{$S$} is not
observable in the actual market, hence it is not $\mathbb{G}^{i,n}$-adapted. We call it the true market price because it is the market
price that would exist if all the traders knew $\mathbb{F}$. For
the actual market, the total information available (in the sense of strong-form
efficiency) is $\mathbb{H}^{n}:=\bigvee_{i=1}^{m}\mathbb{G}^{i,n}$.
The market price for the stock in the actual market is denoted $\tilde{S}^{n}$.

For the actual market with price $\tilde{S}^{n}$, we no longer assume
that there exist an EMM $\mathbb{\tilde{Q}}$, hence the actual market
may violate either NFLVR or ND. It follows that the actual market
may be inefficient with respect to $\mathbb H^n$. We show that
as the information increases in a market, i.e. $\mathbb{H}^{n}\uparrow\mathbb{F}$,
the actual markets approach the hypothetical market that satisfies
NFLVR and ND, i.e. they approach an efficient market with respect
to $\mathbb{F}$.

The second and third applications studied are motivated by the observation
that in financial markets, traders (hedge funds, investment and commercial
banks, pension funds, insurance companies) use different factor models
to identify mispriced securities looking for arbitrage opportunities
(see Jarrow and Protter \cite{JarrowProtter}), this is called \textquotedblleft the
search for positive alphas.\textquotedblright{} The search for positive
alphas is the search for a security's ``true'' drift (expected return). 

For the second application, using the same market set-up as in the
first application, the $n$th market may not satisfy NFLVR or ND
with respect to $\mathbb{G}^{n}$. Here, we represent the search for
positive alphas by a single trader using their private information
to estimate the drift of the actual market price. We study how increasing
this information impacts the market price process. We show that the
search for a positive alpha removes FLVR and dominated assets in the
market, thereby increasing market efficiency.

Finally, in the third application, we study an optimal aggregation problem where a trader seeking arbitrage opportunities (positive alphas) is acting under information constraints $\mathbb G \subset \mathbb F$ and consults a (possibly continuous) set of experts $\Lambda$, who propose corrections to the observed drift, represented by expert-indexed random field $\rho^\lambda$, which the trader wishes to aggregate to obtain the best estimate of the unknown true drift $a$. The trader tries to minimize the distance between the aggregated correction term $\bar\rho$ and the trader's own estimate $\widehat a$ based on available information $\mathbb G$, taking into account their own prior beliefs on the expert community, represented by a flow $\pi$ of probability measures over $\Lambda$. Here as well, we have that the hypothetical market with price $S$ satisfies NFLVR and
ND and is efficient with respect to $\mathbb{F}$.
As in the preceding case, we now no longer assume that there exist an EMM $\mathbb{\tilde{Q}}$ for the market price $\tilde{S}$ in the actual market, hence the actual market may violate either NFLVR or ND. It follows that the actual market may be inefficient with respect to $\mathbb{G}$.

This analysis begins from purely financial arguments, and eventually arrives at well-known variational formulas for KL divergence, exponential tilting, and KL-regularized decision/control. 
The Gibbs measure form of the optimizer follows from the variational characterization of log-partition functions and the Donsker–Varadhan (DV) variational principle for relative entropy; see standard treatments of large deviations \cite{WainwrightJordan2008,DemboZeitouni,DupuisEllis,CoverThomas}.
The geometry of $I$–projections (KL projections) onto convex sets under linear constraints is classical and gives exponential-family solutions closely related to our characterization, see \cite{Csiszar1975}.

KL-based distributionally robust optimization (DRO) provides another close analogue: worst-case expectations over a KL-ball ambiguity set admit dual solutions that are exponential tilts of the nominal distribution, similar to our optimal Gibbs measure-valued controls and KL constraints, see \cite{HuHong2013}.

Aggregation and opinion pooling also often lead to exponential reweighting. The logarithmic opinion pool and its characterizations (and weighting schemes) provide aggregation rules that formally coincide with our optimal Gibbs measures, but most commonly use KL-barycetners, see, e.g., \cite{Heskes, GenestMcConwaySchervish}. However, the core ideas of this paper are inherently financial in nature and admit very explicit interpretation in the context of partial information and partially observable financial markets. We end each section with explicit simulations showing how the proposed models can be applied in finance.

    We acknowledge the use of ChatGPT 5 Pro in obtaining the code for simulations as well as in testing out a large number of model hypotheses which often led to dead ends, and hence expediting the creative part of research. It helped with transferring handwritten notes to LaTex, and finding connections to information theory. It also suggested proof strategies and arguments in several technical lemmas and propositions that eventually worked after our guided fixing and helping it identify its own errors. While in most of these cases we continually steered it away from erroneous suggestions and arguments, it was very helpful nevertheless and saved from hours of cumbersome work. 
    
The paper is structured as follows: section 2 introduces a general McKean-Vlasov-type SDE-based model of price dynamics affected by traders' beliefs and studies the convergence of a sequence of reduced-information markets to the market with full information; section 3 proposes a specialized model to analyze the evolution of individual biases under increasing information; section 4 investigates the problem of optimal aggregation of experts' opinions by a trader seeking to find an arbitrage opportunity or a dominated asset; section 5 concludes, and the appendix contains some well-known facts about the measurability properties of flows of probability kernels. 
\end{color}

\section{\textcolor{black}{Differential Beliefs and Convergence to an Efficient Market}}

\subsection{Preliminaries and Model Set-Up}

We study a financial market supported on some complete filtered probability space $(\Omega,\mathcal F,\mathbb F, \mathbb P)$ endowed with a filtration $\mathbb F=(\mathcal F_t)_{t\in[0,T]}$ for a fixed finite horizon $T>0$ which satisfies the usual conditions. Let $W$ be a $D$--dimensional $\mathbb F$--Brownian motion. The $d$--dimensional \emph{true price} process $S$ is the $\mathbb F$--adapted Markov diffusion
\begin{equation}\label{eq:trueS}
dS_t=\bar b(t,S_t)\,dt+\bar\sigma(t,S_t)\,dW_t,\qquad S_0\in L^2,\qquad
\mathbb E\Big[\sup_{t\le T}|S_t|^2\Big]<\infty,
\end{equation}
where $\bar b:[0,T]\times\mathbb R^d\to\mathbb R^d$ and $\bar\sigma:[0,T]\times\mathbb R^d\to\mathbb R^{d\times D}$ satisfy the usual global Lipschitz and linear--growth conditions ensuring well--posedness.

The true price process $S$ is not fully observable and each of the $m$ \textcolor{black}{traders} has access to \textcolor{black}{their respective} information flow defined by a right--continuous, complete subfiltration $\mathbb G^{i,n}=(\mathcal G^{i,n}_t)_{t\in[0,T]}$ of $\mathbb F$, \textcolor{black}{$i\in\{1,\dots,m\}, n\in\mathbb N$ fixed,} which \textcolor{black}{determines} their individual opinions/proposals on the drift and volatility, assumed to be some functions of the flows $\pi^i$ of conditional distributions of the price process given their respective information flow. \textcolor{black}{Thus, for each $(i,n)$ and $t\in[0,T]$, the \emph{i}th \emph{trader's beliefs} are given by 
\[
\pi_{t}^{i,n}:=\mathcal{L}(S_{t}\mid\mathcal{G}_{t}^{i,n}),
\]
the conditional law of $S_t$ given $\mathcal G^{i,n}_t$, i.e. a $\mathcal P_2(\mathbb R^d)$-valued random variable, where $\mathcal P_2(\mathbb R^d)$ is the set of all probability measures on $\mathbb R^d$ with finite second moments.} 

\textcolor{black}{We stress that the traders have different conditional beliefs, given their information, and the conditional law is based on the statistical probability measure $\mathbb{P}$. When only drifts are affected by the views, this is not a serious restriction and, with obvious modifications of the arguments below, the results can easily be generalized to the case when the the individual conditional laws also depend on the traders' probability measures
$\mathbb{P}^{i}$ that are equivalent to $\mathbb{P}$, and hence the drifts can be changed by Girsanov's theorem. When volatilities are affected, the argument has to be refined, see \cite{JaimungalPesenti:KLbarycentre} for one approach.}

\begin{color}{black}
The market combines their respective beliefs into a single price process $\tilde S$, whose drift and volatility depend on the barycenter of the $m$ traders' views, denoted by $\tilde \pi$. Hence, the total available information in the market is 
\[
\mathbb{H}^{n}:=\bigvee_{i=1}^{m}\mathbb{G}^{i,n},\qquad\mathcal{H}_{t}^{n}:=\bigvee_{i=1}^{m}\mathcal{G}_{t}^{i,n}.
\]

\medskip{}

The aggregate belief of the market, called the \emph{market beliefs},
is defined by

\begin{equation}
\tilde{\pi}_{t}^{\,n}\in\text{argmin}_{\nu\in\mathcal{P}_{2}(\mathbb{R}^{d})}\ \sum_{i=1}^{m}w_{i,t}^{(n)}\,W_{2}^{2}\big(\nu,\pi_{t}^{i,n}\big),\qquad w_{i,t}^{(n)}>0, \qquad t\in[0,T],\label{eq:tildepi}
\end{equation}
where $W_2$ is the 2-Wasserstein distance between two probability measures. Thus, \eqref{eq:tildepi} is the standard Wasserstein barycenter of probability measures, see \cite{AguehCarlier} for details on existence and properties. Existence of minimizers for \eqref{eq:tildepi} is classical for a finite family of inputs in $(\mathcal P_2(\mathbb R^d),W_2)$, and, assuming the individual flows of measures are $\mathbb G^{i,n}$-progressively measurable, by a measurable selection theorem one may choose $(t,\omega)\mapsto\tilde\pi^{\,n}_t(\omega)$ to be $\mathbb H^n$--progressively measurable. The argument is nontrivial, but standard and rests on checking the properties of normal integrands and using a measurable selection theorem, as in \cite{RockafellarWets}, Thm.~14.37. Thus, our market beliefs are represented by the minimizer of a weighted average of $W_2$-distances to the traders' beliefs. The weights $w_{i,t}^{(n)}>0$
imply that each trader has a positive impact on the market beliefs. This condition and the technical assumption of at least one of the measures $\pi_{t}^{i,n}$ being absolutely continuous wrt the Lebesgue measure ensure the uniqueness of the barycenter measure $\tilde{\pi}_{t}^{\,n}$, \cite{AguehCarlier} Prop. 3.5. Our analysis, however, does not rely on uniqueness and only requires the existence of one such measure.

The weight selection mechanism is not explicitly given. However, intuitively
it is generated by each trader's market impact on prices either through
the magnitude of the trader's trade size or the trader's influence
on the market via media communication and online followers (e.g. Warren
Buffet).

\medskip{}

We define the \emph{market price }to be 
\begin{equation}
d\tilde{S}_{t}^{(n)}=b\big(t,\tilde{S}_{t}^{(n)},\tilde{\pi}_{t}^{\,n}\big)\,dt+\sigma\big(t,\tilde{S}_{t}^{(n)},\tilde{\pi}_{t}^{\,n}\big)\,dW_{t},\qquad\tilde{S}_{0}^{(n)}=S_{0},\label{eq:tildeS}
\end{equation}
driven by the \emph{same} Brownian motion $W$. This assumption is not restrictive and will be relaxed in the subsequent sections. We adopt it in this section to focus our analysis on the pure impact of differential beliefs.  We assume the Lipschitz and linear--growth conditions
\begin{equation}\label{eq:coeff-Lip}
\begin{aligned}
&\exists L\ge 1\ \text{s.t.}\ \forall t\in[0,T],\ x,y\in\mathbb R^d,\ \mu,\nu\in\mathcal P_2(\mathbb R^d):\\
&\quad |b(t,x,\mu)-b(t,y,\nu)| + \|\sigma(t,x,\mu)-\sigma(t,y,\nu)\|
\ \le\ L\big(|x-y|+W_2(\mu,\nu)\big),\\
&\quad |b(t,x,\mu)|^2+\|\sigma(t,x,\mu)\|^2\ \le\ L\big(1+|x|^2+\int|z|^2\,\mu(dz)\big),
\end{aligned}
\end{equation}
together with the \emph{compatibility condition} linking \eqref{eq:trueS} and \eqref{eq:tildeS}:
\begin{equation}\label{eq:compat}
\bar b(t,x)=b\big(t,x,\delta_x\big),\qquad \bar\sigma(t,x)=\sigma\big(t,x,\delta_x\big),\qquad \forall (t,x)\in[0,T]\times\mathbb R^d.
\end{equation}
Under \eqref{eq:coeff-Lip}, \eqref{eq:tildeS} is well--posed and $\mathbb E\sup_{t\le T}|\tilde S^{(n)}_t|^2<\infty$.

 The true price is the market price that would exist if all the traders knew \textbf{$\mathbb{F}$.} \textbf{$\mathbb{F}$ }represents all the available information in the sense of strong form efficiency \cite{Jarrow}, Ch.16.

We consider a sequence of markets, indexed by $n$. The markets differ by the information available to the $m$ traders, indexed by $i$ with private information $\mathbb{G}^{i,n}\subset\mathbb{F}$. We want to study how increasing the information available to traders impacts market efficiency, i.e. the convergence of prices as $n\rightarrow\infty$.

The mechanism that generates the market price based on the trader's
beliefs and trading strategies is outside the model's structure. This
is a ``reduced form'' model. This contrasts with a ``structural
model'' that determines the market price given a specification of
each trader's endowments, portfolio and consumption optimization problem,
and market clearing mechanism. In the classical asset pricing literature,
this is given by a Radner equilibrium. In the market microstructure
literature, this would be based on a Nash equilibrium.

We note that our reduced form specification of the price process is consistent with these structural models, and possibly other market clearing mechanisms. However, the converse is also true. The
market price process need not be an equilibrium price with respect
to the standard paradigms mentioned above.

We assume that the filtration generated by $\tilde{S}_{t}^{(n)}$
is \textcolor{black}{contained in $\mathbb{H}^{n}$}. Thus, we have constructed two markets: a hypothetical market with price $S$ and
the actual market with price $\tilde{S}$. We assume that these markets
have the standard asset pricing structure, trading strategies, etc.
as in \cite{JarrowLarsson}. We assume that there exists an equivalent martingale measure
(EMM) $\mathbb{Q}$ for the true price $S$ in the hypothetical market
constructed above. This hypothetical market satisfies No Free Lunch with Vanishing Risk (NFLVR) and No Dominance (ND), see \cite{Jarrow}.
The EMM need not be unique, so the hypothetical market can be incomplete.

Using the definition of an efficient market in Jarrow and Larsson
\cite{JarrowLarsson}, the hypothetical market is efficient with respect to $\mathbb{F}$, i.e. it is strong-form efficient.
It is also efficient with respect to smaller information sets, so
it is semi-strong form and weak-form efficient as well. Hence, it is the ideal market.

We do not assume that there exists an EMM $\mathbb{\tilde{Q}}$
for the market price $\tilde{S}$ in the actual market constructed
above. Hence, the actual market may violate either NFLVR or ND. If it exists, the EMM need not
be unique, so the actual market can be incomplete. By the definition
of an efficient market in \cite{JarrowLarsson}, the actual
market may be inefficient with respect to $\mathbb{H}^{n}$, and is efficient
with respect to $\mathbb{F}$ if for some finite $n$
\[
\tilde{S}^{(n)}=S.
\]

In our case, the actual market may be inefficient with respect to $\mathbb{H}^{n}$, and is inefficient with respect to $\mathbb{F}$. In the actual market, the information from $\mathbb{F}$ could generate
arbitrage opportunities, as discussed in \cite{GKJR6}. This issue in studied in section 2.
We also note that if there is an EMM in the actual market and one trader $i$ for which \textbf{$\mathbb{G}^{i,n}=\mathbb{F}$},
then, because $\mathbb{H}^{n}$ includes $\mathbb{G}^{i,n}=\mathbb{F}$,
the $n$th market is efficient with respect to $\mathbb{F}$ immediately. This implies the interesting structure is where no individual trader knows $\mathbb{F}$, therefore we assume that
\[
\mathbb{G}^{i,n}\subsetneq\mathbb{F}
\]
 for all $i$ and all $n$.

\medskip{}

\end{color}

In this section we do not posit a specific functional form for the individual/combined proposed drift and volatility, and their explicit dependence on traders' beliefs. In the subsequent sections we study more specialized models where these dependencies are given explicitly. Finaly, we provide explicit simulations of our theoretical results and demonstrate how they can be implemented in practice.

\medskip

We shall repeatedly use the following stability estimate; its proof is standard and included for completeness.

\begin{lemma}\label{lem:MV-stability}
Assume \eqref{eq:coeff-Lip} and \eqref{eq:compat}. Then for each $n$ and all $t\in[0,T]$,
\begin{equation}\label{eq:MV-stability}
\mathbb E\Big[\sup_{s\le t}\big|\tilde S^{(n)}_s-S_s\big|^2\Big]
\ \le\ C_{L,T}\int_0^t \mathbb E\,W_2^2\big(\tilde\pi^{\,n}_u,\delta_{S_u}\big)\,du,
\end{equation}
for a constant $C_{L,T}<\infty$ depending only on $L$ and $T$.
\end{lemma}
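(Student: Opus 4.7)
The plan is to write the difference $\tilde S^{(n)}_t-S_t$ as an It\^o process and then close a Gronwall-type inequality. By the compatibility condition \eqref{eq:compat}, I can rewrite $S$ as $dS_t=b(t,S_t,\delta_{S_t})\,dt+\sigma(t,S_t,\delta_{S_t})\,dW_t$, so that both $\tilde S^{(n)}$ and $S$ are driven by the \emph{same} Brownian motion with coefficients evaluated in the common McKean--Vlasov form. Subtracting \eqref{eq:tildeS} from this gives
\[
\tilde S^{(n)}_t-S_t=\int_0^t\!\big[b(u,\tilde S^{(n)}_u,\tilde\pi^{\,n}_u)-b(u,S_u,\delta_{S_u})\big]\,du
+\int_0^t\!\big[\sigma(u,\tilde S^{(n)}_u,\tilde\pi^{\,n}_u)-\sigma(u,S_u,\delta_{S_u})\big]\,dW_u.
\]

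Next, I would take the supremum in $s\le t$ of the squared norm and apply the elementary inequality $(a+b)^2\le 2a^2+2b^2$ to split the drift and diffusion contributions. For the drift term, Cauchy--Schwarz in time yields a factor $t\le T$ in front of $\int_0^t|b(\cdot)-b(\cdot)|^2\,du$. For the diffusion term, Doob's maximal inequality together with the It\^o isometry (or the Burkholder--Davis--Gundy inequality) bounds $\mathbb E\sup_{s\le t}\big|\int_0^s[\sigma-\sigma]\,dW_u\big|^2$ by a constant multiple of $\mathbb E\int_0^t\|\sigma(\cdot)-\sigma(\cdot)\|^2\,du$. The Lipschitz assumption \eqref{eq:coeff-Lip} then gives, pointwise in $(u,\omega)$,
\[
|b(u,\tilde S^{(n)}_u,\tilde\pi^{\,n}_u)-b(u,S_u,\delta_{S_u})|^2+\|\sigma(u,\tilde S^{(n)}_u,\tilde\pi^{\,n}_u)-\sigma(u,S_u,\delta_{S_u})\|^2
\le 4L^2\big(|\tilde S^{(n)}_u-S_u|^2+W_2^2(\tilde\pi^{\,n}_u,\delta_{S_u})\big).
\]

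Combining these estimates produces
\[
\varphi(t)\ :=\ \mathbb E\Big[\sup_{s\le t}\big|\tilde S^{(n)}_s-S_s\big|^2\Big]
\ \le\ C_1\!\int_0^t \varphi(u)\,du + C_2\!\int_0^t \mathbb E\,W_2^2(\tilde\pi^{\,n}_u,\delta_{S_u})\,du,
\]
for constants $C_1,C_2$ depending only on $L$ and $T$. The standard Gronwall lemma then yields the claim \eqref{eq:MV-stability} with $C_{L,T}=C_2\,e^{C_1 T}$.

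The only nontrivial preliminary is ensuring $\varphi(t)<\infty$ so that Gronwall applies, which would be verified either by a stopping-time localization $\tau_N=\inf\{t:|\tilde S^{(n)}_t|+|S_t|\ge N\}$ and monotone convergence, or directly from the $L^2$ bounds $\mathbb E\sup_{t\le T}|S_t|^2<\infty$ and $\mathbb E\sup_{t\le T}|\tilde S^{(n)}_t|^2<\infty$ stated after \eqref{eq:coeff-Lip}. I expect no real obstacle here; the argument is the standard MKV stability template, with the only mild subtlety being the use of the compatibility condition to express $S$ in MKV form so that the single Lipschitz constant $L$ controls both differences simultaneously.
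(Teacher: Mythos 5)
Your proof is correct and follows essentially the same route as the paper's: write $\Delta_t = \tilde S^{(n)}_t - S_t$ using the compatibility condition, estimate the drift and stochastic integral via Jensen/Cauchy--Schwarz and Burkholder--Davis--Gundy, apply the Lipschitz bound, and close with Gronwall. Your added remark about verifying $\varphi(t)<\infty$ before invoking Gronwall is a nice bit of care the paper leaves implicit, but the substance of the argument is identical.
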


\begin{proof}
    Let $\Delta_t:=\tilde S^{(n)}_t-S_t$. Using \eqref{eq:tildeS}, \eqref{eq:trueS}, and the compatibility \eqref{eq:compat},
\[
d\Delta_t=\Big(b\big(t,\tilde S^{(n)}_t,\tilde\pi^{\,n}_t\big)-b\big(t,S_t,\delta_{S_t}\big)\Big)\,dt
+\Big(\sigma\big(t,\tilde S^{(n)}_t,\tilde\pi^{\,n}_t\big)-\sigma\big(t,S_t,\delta_{S_t}\big)\Big)\,dW_t,\qquad \Delta_0=0.
\]
By Burkholder--Davis--Gundy, Jensen, and \eqref{eq:coeff-Lip}, for some $C=C_{L,T}$,
\begin{align*}
\mathbb E\Big[\sup_{s\le t}|\Delta_s|^2\Big]
&\le C\,\mathbb E\int_0^t \Big|b\big(u,\tilde S^{(n)}_u,\tilde\pi^{\,n}_u\big)-b\big(u,S_u,\delta_{S_u}\big)\Big|^2\,du\\
&\quad + C\,\mathbb E\int_0^t \big\|\sigma\big(u,\tilde S^{(n)}_u,\tilde\pi^{\,n}_u\big)-\sigma\big(u,S_u,\delta_{S_u}\big)\big\|^2\,du\\
&\le C\,\mathbb E\int_0^t \Big(|\Delta_u|^2+W_2^2(\tilde\pi^{\,n}_u,\delta_{S_u})\Big)\,du.
\end{align*}
Gronwall's lemma yields
\[
\mathbb E\Big[\sup_{s\le t}|\Delta_s|^2\Big]\ \le\ C_{L,T}\int_0^t \mathbb E\,W_2^2\big(\tilde\pi^{\,n}_u,\delta_{S_u}\big)\,du,
\]
as claimed.
\end{proof}

We also record a basic identity and a simple domination bound that will be used repeatedly.

\begin{lemma}\label{lem:W2-identity}
Let $X\in L^2$ and $\mathcal G\subset\mathcal F$ be a sub-$\sigma$-algebra. Then
\[
W_2^2\big(\mathcal{L}(X\mid\mathcal G),\delta_X\big)
=\int |y-X|^2\,\mathcal{L}(X\mid\mathcal G)(dy)
=\mathrm{Var}(X\mid \mathcal G)+\big|X-\mathbb E[X\mid \mathcal G]\big|^2.
\]
In particular, $\mathbb E\,W_2^2\big(\mathcal{L}(X\mid\mathcal G),\delta_X\big)=2\,\mathbb E\big|X-\mathbb E[X\mid \mathcal G]\big|^2\le 2\,\mathbb E|X|^2$.
\end{lemma}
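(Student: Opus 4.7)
The plan is to verify both equalities pointwise in $\omega$ using standard properties of regular conditional distributions, and then to read off the final bound by taking expectations. There are no analytic difficulties here; the only minor point is to fix a regular version of $\mathcal L(X\mid\mathcal G)$ so that the random quantity $W_2^2(\mathcal L(X\mid\mathcal G),\delta_X)$ is a well-defined $\mathcal G\vee\sigma(X)$-measurable function of $\omega$.

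First, I would invoke the basic Wasserstein identity: for any $\mu\in\mathcal P_2(\mathbb R^d)$ and any $a\in\mathbb R^d$, the unique coupling of $\mu$ with the Dirac mass $\delta_a$ is the product $\mu\otimes\delta_a$, so $W_2^2(\mu,\delta_a)=\int|y-a|^2\,\mu(dy)$. Applying this $\omega$-wise with $\mu_\omega=\mathcal L(X\mid\mathcal G)(\omega)$ and $a=X(\omega)$ yields the leftmost equality. For the second equality I would expand $|y-X|^2=|y|^2-2\langle y,X\rangle+|X|^2$ under the conditional integral and use the defining properties $\int y\,\mathcal L(X\mid\mathcal G)(dy)=\mathbb E[X\mid\mathcal G]$ and $\int|y|^2\,\mathcal L(X\mid\mathcal G)(dy)=\mathbb E[|X|^2\mid\mathcal G]$; adding and subtracting $|\mathbb E[X\mid\mathcal G]|^2$ regroups the right-hand side precisely into $\mathrm{Var}(X\mid\mathcal G)+|X-\mathbb E[X\mid\mathcal G]|^2$.

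For the ``in particular'' statement, I would take expectations of the identity just obtained. A one-line tower-property computation gives $\mathbb E\langle X,\mathbb E[X\mid\mathcal G]\rangle=\mathbb E|\mathbb E[X\mid\mathcal G]|^2$, hence $\mathbb E\,\mathrm{Var}(X\mid\mathcal G)=\mathbb E|X|^2-\mathbb E|\mathbb E[X\mid\mathcal G]|^2=\mathbb E|X-\mathbb E[X\mid\mathcal G]|^2$, so the two summands contribute equally and the factor $2$ emerges. The final bound $\mathbb E|X-\mathbb E[X\mid\mathcal G]|^2\le\mathbb E|X|^2$ is then immediate from the Pythagorean identity $\|X\|_{L^2}^2=\|X-\mathbb E[X\mid\mathcal G]\|_{L^2}^2+\|\mathbb E[X\mid\mathcal G]\|_{L^2}^2$, i.e.\ the $L^2$-contractivity of conditional expectation.
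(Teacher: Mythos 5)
Your proof is correct, and since the paper simply declares the lemma ``Standard'' without giving an argument, your write-out is exactly the routine computation the authors have in mind: the Wasserstein-to-Dirac identity, $\omega$-wise expansion under a regular conditional distribution, and the Pythagorean identity for conditional expectation. Nothing to add.
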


\begin{proof}
    Standard. 
\end{proof}

\begin{lemma}\label{lem:bar-dom}
Let $\mu_1,\dots,\mu_m\in\mathcal P_2(\mathbb R^d)$, $w\in\Delta_m$, and let $\mathrm{Bar}_w(\mu_1,\dots,\mu_m)$ be any minimizer in \eqref{eq:tildepi}. Then, for every $\rho\in\mathcal P_2(\mathbb R^d)$,
\[
W_2^2\big(\mathrm{Bar}_w(\mu_1,\dots,\mu_m),\rho\big)\ \le\ 4\,\max_{1\le i\le m} W_2^2(\mu_i,\rho).
\]
\end{lemma}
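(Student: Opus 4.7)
The plan is to exploit only two ingredients: the variational (minimizing) characterization of $\mathrm{Bar}_w(\mu_1,\dots,\mu_m)$ and the triangle inequality in the metric space $(\mathcal P_2(\mathbb R^d),W_2)$. Write $B:=\mathrm{Bar}_w(\mu_1,\dots,\mu_m)$. Since $\rho\in\mathcal P_2(\mathbb R^d)$ is a legitimate competitor in \eqref{eq:tildepi}, the defining minimality of $B$ immediately yields
\[
\sum_{i=1}^m w_i\,W_2^2(B,\mu_i)\ \le\ \sum_{i=1}^m w_i\,W_2^2(\rho,\mu_i)\ \le\ \max_{1\le i\le m} W_2^2(\mu_i,\rho),
\]
where the last step uses $w\in\Delta_m$. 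This is the only place the barycentric optimality is used.

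Next, I would convert a distance bound to $B$ into a distance bound from $B$ to $\rho$. For each $i$ the metric triangle inequality $W_2(B,\rho)\le W_2(B,\mu_i)+W_2(\mu_i,\rho)$ combined with the elementary estimate $(a+b)^2\le 2a^2+2b^2$ gives
\[
W_2^2(B,\rho)\ \le\ 2\,W_2^2(B,\mu_i)+2\,W_2^2(\mu_i,\rho).
\]
Taking the weighted average over $i$ with weights $w_i$ (and noting the left-hand side is independent of $i$) yields
\[
W_2^2(B,\rho)\ \le\ 2\sum_{i=1}^m w_i\,W_2^2(B,\mu_i)\ +\ 2\sum_{i=1}^m w_i\,W_2^2(\mu_i,\rho).
\]

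Finally, plug in the two bounds already obtained: the first sum is at most $\max_i W_2^2(\mu_i,\rho)$ by the variational inequality displayed above, and the second is trivially at most $\max_i W_2^2(\mu_i,\rho)$ since $w\in\Delta_m$. Adding the two contributions gives the claimed factor $4$. There is no real obstacle here; the one mild subtlety is that the argument is purely metric and uses neither uniqueness of $B$ nor any regularity of the $\mu_i$, which is consistent with the remark in the text that the subsequent analysis does not depend on uniqueness of the barycenter.
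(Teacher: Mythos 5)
Your proof is correct and uses the same two ingredients as the paper's (the variational optimality of the barycenter and the triangle inequality with $(a+b)^2\le 2a^2+2b^2$); the only cosmetic difference is that you take a $w$-weighted average of the triangle-inequality bound over all indices, whereas the paper applies it once to the single index $i^\star$ minimizing $W_2(\bar\mu,\mu_i)$. Both routes give the same factor $4$ with the same amount of work.
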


\begin{proof}
Let $\bar\mu:=\mathrm{Bar}_w(\mu_1,\dots,\mu_m)$. Pick $i^\star$ minimizing $W_2(\bar\mu,\mu_i)$. By optimality of $\bar\mu$,
\[
\sum_{i=1}^m w_i\,W_2^2(\bar\mu,\mu_i)\ \le\ \sum_{i=1}^m w_i\,W_2^2(\rho,\mu_i)\ \le\ \max_i W_2^2(\rho,\mu_i).
\]
Hence $W_2^2(\bar\mu,\mu_{i^\star})\le \max_i W_2^2(\rho,\mu_i)$. By the triangle inequality and $(a+b)^2\le 2(a^2+b^2)$,
\[
W_2^2(\bar\mu,\rho)\ \le\ 2\,W_2^2(\bar\mu,\mu_{i^\star})+2\,W_2^2(\mu_{i^\star},\rho)\ \le\ 4\,\max_i W_2^2(\mu_i,\rho).\qedhere
\]
\end{proof}

\subsection{Convergence Under Uniformly Increasing Information}

The following assumption plays a key role in ensuring convergence to the true price process.

\begin{assumption}\label{ass:all-improve}
For each $i\in\{1,\dots,m\}$ and each $t\in[0,T]$, the $\sigma$--algebras increase in $n$ and exhaust $\mathcal F_t$:
\[
\mathcal G^{i,1}_t\subseteq \mathcal G^{i,2}_t\subseteq\cdots,\qquad \sigma\Big(\bigcup_{n\ge 1}\mathcal G^{i,n}_t\Big)=\mathcal F_t\quad\text{(up to $\mathbb P$--null sets)}.
\]
\end{assumption}

\begin{theorem}\label{thm:A-bar-to-Dirac}
Under Assumption~\ref{ass:all-improve}, for each $t\in[0,T]$,
\[
W_2\big(\tilde\pi^{\,n}_t,\delta_{S_t}\big)\ \xrightarrow[n]{\ L^1(\Omega)\ }\ 0.
\]
\end{theorem}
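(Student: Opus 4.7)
The plan is to reduce the convergence of the barycenter to the convergence of each trader's conditional law to the Dirac mass at $S_t$, and then invoke martingale convergence for the conditional expectation $\mathbb E[S_t\mid\mathcal G^{i,n}_t]$. The stability lemmas \ref{lem:W2-identity} and \ref{lem:bar-dom} are tailor-made for exactly this chain of inequalities, so almost all of the work is to plug them together and bring in Lévy's upward theorem.

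First, I would apply Lemma~\ref{lem:bar-dom} pointwise in $\omega$ with $\rho=\delta_{S_t(\omega)}$ and $\mu_i=\pi^{i,n}_t(\omega)$. This yields
\[
W_2^2\big(\tilde\pi^{\,n}_t,\delta_{S_t}\big)\ \le\ 4\,\max_{1\le i\le m} W_2^2\big(\pi^{i,n}_t,\delta_{S_t}\big)\ \le\ 4\,\sum_{i=1}^m W_2^2\big(\pi^{i,n}_t,\delta_{S_t}\big),
\]
where the second inequality is used only to turn the maximum into a finite sum so that expectation commutes cleanly with the bound. Next, I would apply Lemma~\ref{lem:W2-identity} with $X=S_t$ and $\mathcal G=\mathcal G^{i,n}_t$, giving
\[
\mathbb E\,W_2^2\big(\pi^{i,n}_t,\delta_{S_t}\big)\ =\ 2\,\mathbb E\big|S_t-\mathbb E[S_t\mid\mathcal G^{i,n}_t]\big|^2.
\]

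The key analytic step is then Lévy's upward martingale convergence theorem in $L^2$. Since $S_t\in L^2(\mathcal F_t)$ and Assumption~\ref{ass:all-improve} gives $\mathcal G^{i,n}_t\uparrow\mathcal F_t$ (up to $\mathbb P$-null sets), the martingale $(\mathbb E[S_t\mid\mathcal G^{i,n}_t])_{n\ge 1}$ converges $\mathbb P$-a.s.\ and in $L^2$ to $\mathbb E[S_t\mid\mathcal F_t]=S_t$. Consequently, for each fixed $i$ and $t$,
\[
\mathbb E\big|S_t-\mathbb E[S_t\mid\mathcal G^{i,n}_t]\big|^2\ \xrightarrow[n\to\infty]{}\ 0.
\]
Summing over the finitely many traders and using the first display of the proof yields $\mathbb E\,W_2^2(\tilde\pi^{\,n}_t,\delta_{S_t})\to 0$, and then Jensen's (or Cauchy--Schwarz's) inequality gives the claimed $L^1$ convergence
\[
\mathbb E\,W_2(\tilde\pi^{\,n}_t,\delta_{S_t})\ \le\ \big(\mathbb E\,W_2^2(\tilde\pi^{\,n}_t,\delta_{S_t})\big)^{1/2}\ \to\ 0.
\]

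The only conceptual subtlety is justifying the pointwise use of Lemma~\ref{lem:bar-dom} for the random measures $\pi^{i,n}_t(\omega)$ and the random target $\delta_{S_t(\omega)}$; however, that lemma is a purely analytic inequality in $\mathcal P_2(\mathbb R^d)$, so it applies $\omega$-by-$\omega$ on the measurable set where the barycenter is realized by the chosen measurable selection, and the finite-sum bound on the right-hand side removes any residual worry about interchanging expectation and maximum. Everything else is bookkeeping: the finiteness of $m$ converts a trader-wise $L^2$ bound into a uniform one, and Assumption~\ref{ass:all-improve} is exactly what powers the martingale convergence for each trader separately.
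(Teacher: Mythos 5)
Your proposal is correct and matches the paper's proof essentially step for step: Lemma~\ref{lem:bar-dom} with $\rho=\delta_{S_t}$, the $\max\le\sum$ reduction, Lemma~\ref{lem:W2-identity}, $L^2$ martingale convergence under Assumption~\ref{ass:all-improve}, and Cauchy--Schwarz to pass from $L^2$ to $L^1$. The only cosmetic difference is ordering (you chain the inequalities before invoking martingale convergence, the paper does the reverse), which changes nothing of substance.
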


\begin{proof}
Fix $t\in[0,T]$. By the martingale convergence theorem,
\[
\mathbb E\big|S_t-\mathbb E[S_t\mid \mathcal G^{i,n}_t]\big|^2\ \xrightarrow[n]{}\ 0,\qquad i=1,\dots,m.
\]
By Lemma~\ref{lem:W2-identity}, $\mathbb E\,W_2^2(\pi^{i,n}_t,\delta_{S_t})=2\,\mathbb E|S_t-\mathbb E[S_t\mid \mathcal G^{i,n}_t]|^2\to 0$. Using Lemma~\ref{lem:bar-dom} with $\rho=\delta_{S_t}$,
\[
W_2^2\big(\tilde\pi^{\,n}_t,\delta_{S_t}\big)\ \le\ 4\,\max_{1\le i\le m} W_2^2\big(\pi^{i,n}_t,\delta_{S_t}\big),
\]
and hence, taking expectations and using $\max\le \sum$,
\[
\mathbb E\,W_2^2\big(\tilde\pi^{\,n}_t,\delta_{S_t}\big)\ \le\ 4\sum_{i=1}^m \mathbb E\,W_2^2\big(\pi^{i,n}_t,\delta_{S_t}\big)\ \xrightarrow[n]{}\ 0.
\]
By Cauchy--Schwarz, $\mathbb E\,W_2(\tilde\pi^{\,n}_t,\delta_{S_t})\le \sqrt{\mathbb E\,W_2^2(\tilde\pi^{\,n}_t,\delta_{S_t})}\to 0$.
\end{proof}

\begin{corollary}\label{cor:A-L2C}
If \eqref{eq:coeff-Lip} and \eqref{eq:compat} hold, then
\[
\mathbb E\Big[\sup_{t\le T}\big|\tilde S^{(n)}_t-S_t\big|^2\Big]\ \xrightarrow[n]{}\ 0.
\]
\end{corollary}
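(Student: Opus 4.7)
The plan is to chain the stability estimate of \Cref{lem:MV-stability} with the mode of convergence established in \Cref{thm:A-bar-to-Dirac}, and conclude by dominated convergence. Writing $t=T$ in \eqref{eq:MV-stability} gives
\[
\mathbb E\Big[\sup_{s\le T}|\tilde S^{(n)}_s-S_s|^2\Big]
\ \le\ C_{L,T}\int_0^T \mathbb E\,W_2^2\big(\tilde\pi^{\,n}_u,\delta_{S_u}\big)\,du,
\]
so it suffices to prove that the right-hand side vanishes as $n\to\infty$.

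The first step is pointwise convergence of the integrand. Inspecting the proof of \Cref{thm:A-bar-to-Dirac}, the argument actually establishes the stronger statement $\mathbb E\,W_2^2(\tilde\pi^{\,n}_u,\delta_{S_u})\to 0$ for every fixed $u\in[0,T]$ (not merely $L^1$ convergence of $W_2$ itself); I would invoke this directly. The second step is to produce a uniform-in-$n$, integrable-in-$u$ majorant so that Lebesgue's dominated convergence theorem applies to the $du$-integral. Combining \Cref{lem:bar-dom} (with $\rho=\delta_{S_u}$) and \Cref{lem:W2-identity} yields
\[
\mathbb E\,W_2^2\big(\tilde\pi^{\,n}_u,\delta_{S_u}\big)
\ \le\ 4\sum_{i=1}^m \mathbb E\,W_2^2\big(\pi^{i,n}_u,\delta_{S_u}\big)
\ =\ 8\sum_{i=1}^m \mathbb E\big|S_u-\mathbb E[S_u\mid \mathcal G^{i,n}_u]\big|^2
\ \le\ 8m\,\mathbb E|S_u|^2,
\]
which is bounded uniformly in $n$ by $8m\,\mathbb E[\sup_{u\le T}|S_u|^2]<\infty$, a constant (hence integrable) majorant on $[0,T]$.

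Applying dominated convergence to $\int_0^T \mathbb E\,W_2^2(\tilde\pi^{\,n}_u,\delta_{S_u})\,du$ then gives the integral goes to zero, and multiplying by $C_{L,T}$ yields the claim. The only non-routine point is confirming that \Cref{thm:A-bar-to-Dirac}'s proof supplies $L^2$-type pointwise convergence of $\mathbb E\,W_2^2$ (used as domination input) rather than only the weaker $L^1$-of-$W_2$ statement quoted; once this is observed, the uniform bound above makes dominated convergence immediate, and no further technical obstacle arises.
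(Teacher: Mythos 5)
Your proof is correct and follows essentially the same route as the paper's: both derive, for each fixed $u$, the pointwise convergence $\mathbb E\,W_2^2(\tilde\pi^{\,n}_u,\delta_{S_u})\to 0$ from Lemmas~\ref{lem:bar-dom} and \ref{lem:W2-identity} plus martingale convergence, both produce the $n$-uniform majorant $8m\,\mathbb E|S_u|^2$ (the paper keeps it $u$-dependent but integrable, you bound it further by the constant $8m\,\mathbb E\sup_{u\le T}|S_u|^2$ — either works), and both finish by dominated convergence and Lemma~\ref{lem:MV-stability}. Your remark that the proof of \Cref{thm:A-bar-to-Dirac} actually yields the stronger $\mathbb E\,W_2^2\to 0$ is exactly the observation the paper's proof implicitly reuses.
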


\begin{proof}
By Lemma~\ref{lem:bar-dom} with $\rho=\delta_{S_t}$,
\[
W_2^2\big(\tilde\pi^{\,n}_t,\delta_{S_t}\big)
\le 4\,\max_{1\le i\le m} W_2^2\big(\pi^{i,n}_t,\delta_{S_t}\big)
\le 4\sum_{i=1}^m W_2^2\big(\pi^{i,n}_t,\delta_{S_t}\big).
\]
Taking expectations and using Lemma~\ref{lem:W2-identity} together with the martingale convergence theorem (applied under Assumption~\ref{ass:all-improve}) gives, for each fixed $t$,
\[
\mathbb E\,W_2^2\big(\tilde\pi^{\,n}_t,\delta_{S_t}\big)
\le 4\sum_{i=1}^m \mathbb E\,W_2^2\big(\pi^{i,n}_t,\delta_{S_t}\big)\ \xrightarrow[n]{}\ 0.
\]
Moreover, still by Lemma~\ref{lem:W2-identity},
\[
\mathbb E\,W_2^2\big(\tilde\pi^{\,n}_t,\delta_{S_t}\big)
\ \le\ 4\sum_{i=1}^m \mathbb E\,W_2^2\big(\pi^{i,n}_t,\delta_{S_t}\big)
\ \le\ 8m\,\mathbb E|S_t|^2,
\]
and $t\mapsto \mathbb E|S_t|^2$ is integrable on $[0,T]$ by \eqref{eq:trueS}. Hence
\[
\int_0^T \mathbb E\,W_2^2\big(\tilde\pi^{\,n}_s,\delta_{S_s}\big)\,ds\ \xrightarrow[n]{}\ 0
\]
by dominated convergence. Inserting this into \eqref{eq:MV-stability} yields
$\mathbb E\big[\sup_{t\le T}|\tilde S^{(n)}_t-S_t|^2\big]\to 0$.
\end{proof}

\subsection{Failure of Convergence Under Non-Uniformly Increasing Information}

We first show that improvement of the combined information pool $\mathbb H^n$ alone is not sufficient to ensure convergence of the barycenter. 

\begin{assumption}\label{ass:join-only}
For each $t\in[0,T]$, $\mathcal H^{1}_t\subseteq \mathcal H^{2}_t\subseteq\cdots$ and $\sigma\!\big(\bigcup_{n\ge 1}\mathcal H^n_t\big)=\mathcal F_t$ ($\mathbb P$-a.s.), but for at least one expert $i$,
$\sigma\!\big(\bigcup_{n\ge 1}\mathcal G^{i,n}_t\big)\subsetneq \mathcal F_t$ ($\mathbb P$-a.s.).
\end{assumption}

\begin{example}\label{ex:failure-join}
Let $d=1$ and fix $t\in(0,T]$. On a product space supporting an $\mathbb F$--Brownian motion $W$, take independent Rademacher variables $U,V$ with $\mathbb P(U=\pm1)=\mathbb P(V=\pm1)=\tfrac12$, independent of $W$. Define $S_s:=UV$ for all $s\in[0,T]$, which solves \eqref{eq:trueS} with $\bar b\equiv 0=\bar\sigma$. Define
\[
\mathcal F_t:=\sigma(U,V)\ \vee\ \sigma(W_r:\,r\le t)\quad\text{(completed and right--continuous)},\qquad t\in[0,T].
\]
For all $n$ set
\[
\mathcal G^{1,n}_t:=\sigma(U)\ \vee\ \sigma(W_r:\,r\le t),\qquad
\mathcal G^{2,n}_t:=\sigma(V)\ \vee\ \sigma(W_r:\,r\le t),
\]
and, without changing notation, let $\mathbb G^{i,n}$ be their right--continuous, complete versions. Then
$\mathcal H^n_t=\sigma(U,V)\vee\sigma(W_r:\,r\le t)=\mathcal F_t$ for all $n$, so the join is already full, while
$\sigma(\bigcup_n\mathcal G^{1,n}_t)=\sigma(U)\vee\sigma(W_r:\,r\le t)\subsetneq\mathcal F_t$ and similarly for $i=2$.
Since $S_t=UV$ is independent of $W$ and $\mathbb P(V=\pm1\mid\sigma(U))=\tfrac12$, we have
\[
\pi^{1,n}_t=\pi^{2,n}_t=\tfrac12\delta_{-1}+\tfrac12\delta_{+1}\quad\text{a.s.,}\qquad
\delta_{S_t}\in\{\delta_{-1},\delta_{+1}\}\ \text{a.s.}
\]
Therefore, for any weights $w^{(n)}_{1,t},w^{(n)}_{2,t}\in\Delta_2$,
\[
\tilde\pi^{\,n}_t
=\mathrm{Bar}_{w^{(n)}_t}\Big(\tfrac12\delta_{-1}+\tfrac12\delta_{+1},\,\tfrac12\delta_{-1}+\tfrac12\delta_{+1}\Big)
=\tfrac12\delta_{-1}+\tfrac12\delta_{+1},
\]
and hence
\[
W_2\!\big(\tilde\pi^{\,n}_t,\delta_{S_t}\big)
=\begin{cases}
\sqrt{2}, & S_t=+1,\\
\sqrt{2}, & S_t=-1,
\end{cases}
\]
so $W_2(\tilde\pi^{\,n}_t,\delta_{S_t})=\sqrt{2}$ a.s.\ and in particular does not converge to $0$.
\end{example}

It is easy to show that this may cause the actual price process not to converge to the true price process. We do not pursue this here.

\subsection{Simulation of Convergence under Increasing Information}

We assume $d=1$ and a filtered space $(\Omega,\mathcal F,\mathbb F,\mathbb P)$ carrying a one--dimensional Brownian motion $W$. The \emph{true} price $S$ is a geometric Brownian motion (GBM)
\begin{equation}\label{eq:true-gbm}
dS_t=\bar b(t,S_t)\,dt+\bar\sigma(t,S_t)\,dW_t
\quad\text{with}\quad
\bar b(t,x)=\mu_\star x,\ \ \bar\sigma(t,x)=\sigma_\star x,\qquad S_0>0,
\end{equation}
where $\mu_\star$ and $\sigma_\star>0$ are some constants. Fix $m\in\mathbb N$ \textcolor{black}{traders} and information levels $n\in\mathbb N$. Let $X_t:=\log S_t$ and suppose \textcolor{black}{trader} $i$ at level $n$ observes
\[
Y^{i,n}_t \;=\; X_t + \varepsilon^{i,n}_t,\qquad \varepsilon^{i,n}_t\sim\mathcal N\!\big(0,\tau_i^2/n\big),
\]
independent of $W$ and across $(i,n)$. Then $\pi^{i,n}_t:=\mathcal L(S_t\mid Y^{i,n}_t)$ is lognormal. For weights $w^{(n)}\in\Delta_m$, define the (one--dimensional) $W_2$--barycenter of the traders at time $t$ by
\begin{equation}\label{eq:bar-def}
\tilde\pi^{\,n}_t \in \operatorname{argmin}_{\nu\in\Ptwo(\R_+)} 
\sum_{i=1}^m w^{(n)}_i\, W_2^2\!\big(\nu,\pi^{i,n}_t\big).
\end{equation}
In $d=1$, $\tilde\pi^{\,n}_t$ is the quantile average (comonotone coupling), which we evaluate in closed form via its first two moments.

\[
\tilde\pi^{\,n}_t=\mathcal L\!\Big(\sum_{i=1}^m w^{(n)}_{i,t}\,\mathrm{e}^{\,m^{i,n}_t+s^{i,n}_t Z}\Big),\quad Z\sim\mathcal N(0,1),\ \ m^{i,n}_t:=\mathbb E[\log S_t\mid\mathcal G^{i,n}_t],\ \ (s^{i,n}_t)^2:=\Var(\log S_t\mid\mathcal G^{i,n}_t).
\]

We assume the drift and volatility depend on the barycentric mean and standard deviation:
\[
m_1(\mu):=\int y\,\mu(dy),\qquad s(\mu):=\sqrt{\int (y-m_1(\mu))^2\,\mu(dy)}.
\]
For $\kappa_d,\kappa_v\ge 0$, set for $(t,x,\mu)\in[0,T]\times\R_+\times\Ptwo(\R_+)$
\begin{equation}\label{eq:coeffs}
b(t,x,\mu)\;:=\; x\!\left(\mu_\star + \kappa_d \log\!\frac{m_1(\mu)}{x}\right),\qquad
\sigma(t,x,\mu)\;:=\; x\,\sigma_\star\!\left(1+\kappa_v\,\frac{s(\mu)}{m_1(\mu)}\right).
\end{equation}
The \textcolor{black}{actual market} price $\tilde S^{(n)}$ follows the McKean-Vlasov-type SDE driven by the same $W$:
\begin{equation}\label{eq:synthetic}
d\tilde S^{(n)}_t \;=\; b\!\big(t,\tilde S^{(n)}_t,\tilde\pi^{\,n}_t\big)\,dt
\;+\; \sigma\!\big(t,\tilde S^{(n)}_t,\tilde\pi^{\,n}_t\big)\,dW_t,
\qquad \tilde S^{(n)}_0=S_0.
\end{equation}
By construction,
\begin{equation}\label{eq:compat}
b(t,x,\delta_x)=\mu_\star x=\bar b(t,x),\qquad 
\sigma(t,x,\delta_x)=\sigma_\star x=\bar\sigma(t,x),
\end{equation}
so \eqref{eq:synthetic} is \emph{compatible} with \eqref{eq:true-gbm} in the sense $b(\cdot,\cdot,\delta_x)=\bar b$, $\sigma(\cdot,\cdot,\delta_x)=\bar\sigma$. As $n\to\infty$ the posterior variances scale as $\tau_i^2/n$, the barycenter $\tilde\pi^{\,n}_t\Rightarrow\delta_{S_t}$, and under standard Lipschitz and linear growth conditions of \eqref{eq:coeffs} one has convergence of $\tilde S^{(n)}$ to $S$ in $L^2(\Omega;C)$.
We simulate $30$ paths on $[0,1]$ year with daily steps. For each $n\in\{1,10,100,1000\}$ we draw a common Brownian path per row and plot: left (blue) the true $S$, right (red) the synthetic $\tilde S^{(n)}$ built from the $W_2$--barycenter \eqref{eq:bar-def}. Parameters used in the figure: 
$S_0=100$, $\mu_\star=8\%$, $\sigma_\star=60\%$, $m=4$, $w=(0.4,0.3,0.2,0.1)$, $\tau=(2.0,1.2,2.5,1.5)$, $\kappa_d=0.35$, $\kappa_v=2.75$.


\medskip

It must be noted that other choices are possible, as long as they satisfy the compatibility condition. Some suggestions are given below. For $\mu\in\mathcal P_2(\mathbb R_+)$ set
$m_1(\mu):=\int y\,\mu(dy)$,
$s(\mu):=\sqrt{\int (y-m_1(\mu))^2\,\mu(dy)}$,
$\mathrm{cv}(\mu):= s(\mu)/(m_1(\mu)+\varepsilon)$ for small $\varepsilon>0$.
Let $\bar b,\bar\sigma$ be the true coefficients from \eqref{eq:trueS}.
All parameters $\kappa,\kappa_d,\kappa_v,\lambda\ge 0$.

\begin{table}[htbp]
\centering
\footnotesize
\begin{tabularx}{\linewidth}{>{\raggedright\arraybackslash}p{0.8em} >{\raggedright\arraybackslash}X >{\raggedright\arraybackslash}X}
\toprule
\# & Drift $b(t,x,\mu)$ & Volatility $\sigma(t,x,\mu)$ \\
\midrule
1 &
$\displaystyle \bar b(t,x)\;+\;\kappa\big(m_1(\mu)-x\big)$ &
$\displaystyle \bar\sigma(t,x)$ \\[0.35em]
2 &
$\displaystyle \bar b(t,x)$ &
$\displaystyle \bar\sigma(t,x)\!\left(1+\kappa\,\mathrm{cv}(\mu)\right)$ \\[0.35em]
3 &
$\displaystyle \bar b(t,x)\;+\;\kappa_d\big(m_1(\mu)-x\big)$ &
$\displaystyle \bar\sigma(t,x)\!\left(1+\kappa_v\,\mathrm{cv}(\mu)\right)$ \\[0.35em]
4 &
$\displaystyle \bar b(t,x)\Big(1+\kappa\!\left(\tfrac{m_1(\mu)}{x}-1\right)\Big)$ &
$\displaystyle \bar\sigma(t,x)$ \\[0.35em]
5 &
$\displaystyle \bar b(t,x)\;-\;\lambda\,x\,\mathrm{cv}(\mu)^2$ &
$\displaystyle \bar\sigma(t,x)$ \\[0.35em]
6 &
$\displaystyle \bar b(t,x)$ &
$\displaystyle \bar\sigma(t,x)\,\sqrt{\,1+\kappa\!\left(\tfrac{m_1(\mu)-x}{m_1(\mu)+\varepsilon}\right)^{\!2}}$ \\
\bottomrule
\end{tabularx}
\caption{Each pair satisfies the compatibility condition: for $\mu=\delta_x$ one has $m_1(\delta_x)=x$, $s(\delta_x)=\mathrm{cv}(\delta_x)=0$, hence $b(t,x,\delta_x)=\bar b(t,x)$ and $\sigma(t,x,\delta_x)=\bar\sigma(t,x)$.}
\label{tab:compatible-simple}
\end{table}

However, it is obvious that any choice of the drift and volatility must be meaningful from the modeling perspective, i.e. admit financial interpretation. In the next sections, we will provide more concrete (and much more sophisticated) structures that admit clear interpretation and generate a wide range of nontrivial implications and results.

\begin{figure}[p]  
    \centering
    \includegraphics[width=\linewidth,height=0.9\textheight,keepaspectratio]{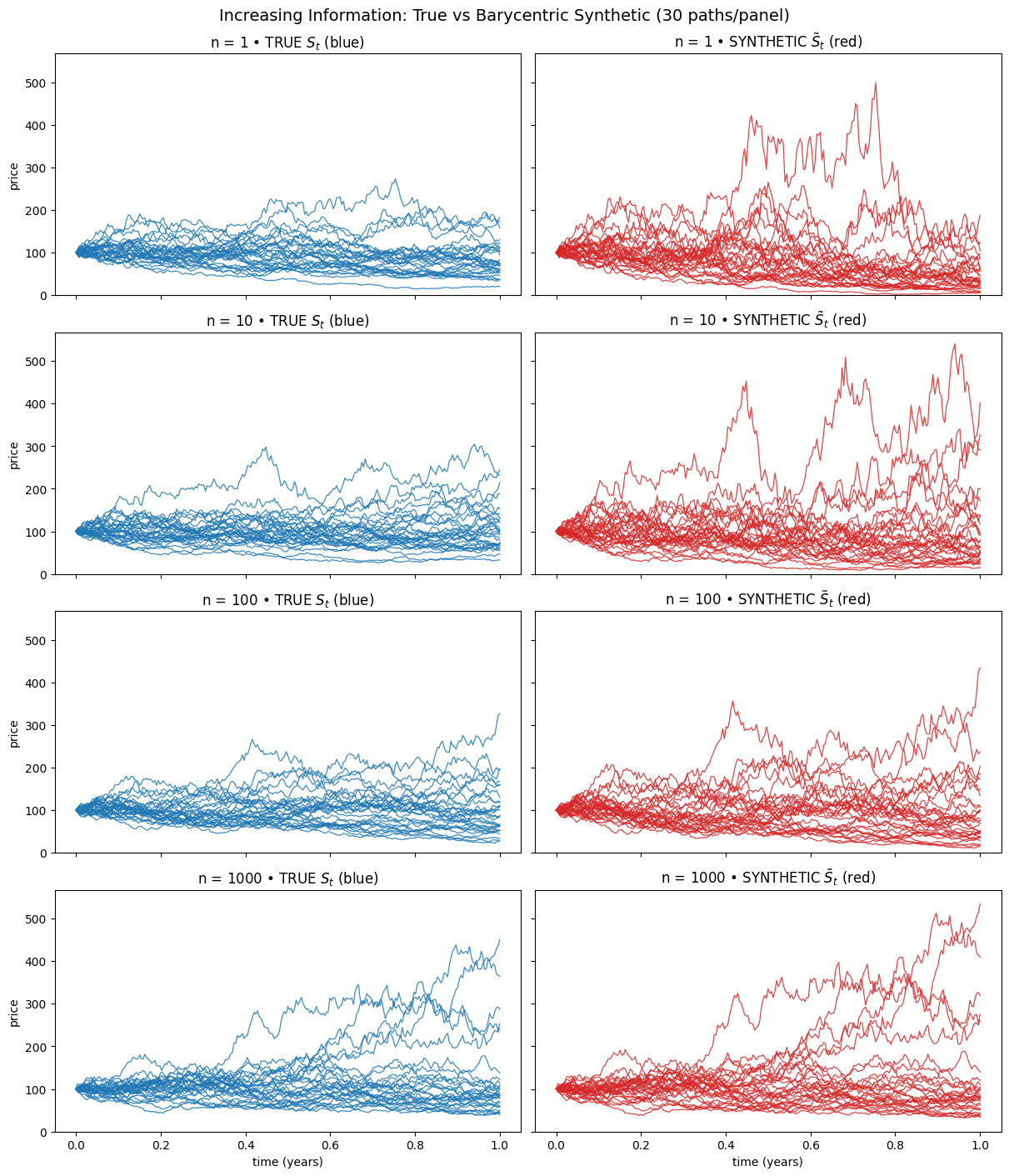}
    \caption{\textbf{Increasing information.} Rows $n=1,10,100,1000$; left: true price $S$ (blue); right: synthetic $\tilde S^{(n)}$ (red). Common Brownian shocks per row; we observe convergence as the dispersion in traders' posteriors vanishes.}
    \label{fig:Sim1}
\end{figure}

\clearpage

\section{\textcolor{black}{Individual} Biases under Increasing Information}\label{sec:info-drift}

\subsection{Preliminaries and Model Set-Up}\label{subsec:setup}

In this section, we propose a much more explicit and financially motivated drift perturbation scheme capturing model uncertainty in which \textcolor{black}{a single trader's} belief is represented by \emph{a drift perturbation term $\rho$} which is convexly combined with the true drift $\alpha$,
with a (random, time–varying) \emph{bias weight $\beta$ that shrinks to $0$ as information increases}. \textcolor{black}{Thus, we want to study how a single trader uses the information available to them, combining observations with personal biases to arrive at a candidate price process. In the next section we will study how a trader seeking arbitrage opportunities should aggregate the individual (biased) proposal processes.} This is an elaboration on the ideas of the previous section, showing more explicitly how an increasing information flow implies convergence of the proposal process to the true price process \textcolor{black}{for a single trader}. In this setting, the weight of the bias depends on the distance between the conditional distribution of the price $S_t$ given the available information $\mathcal G^n_t$ and the value of the optimal estimate of $S_t$ provided by the filtered $\widehat S_t$. This structure corresponds to the intuition that the \emph{the distance between the conditional distribution and the estimated value captures ambiguity} regarding the true value of the price process, and this ambiguity shrinks as more information is revealed, as $n\to \infty$, eventually collapsing to a Dirac measure at the correct value. \textcolor{black}{Thus, this more specialized model also incorporates a fairly novel way of measuring the impact of a trader's intuitive sense of ambiguity regarding the true value of a partially observed price process.}

Let $(\Omega,\mathcal F, \mathbb F, \mathbb P)$ be a complete probability space
with a right–continuous, complete filtration
$\mathbb F=(\mathcal F_t)_{t\in[0,T]}$ supporting a $D$–dimensional Brownian motion $W$.
The true $d$–dimensional price process $S$ is the unique strong solution of
\begin{equation}\label{eq:trueS-drift}
dS_t=\alpha(t,S_t)\,dt+\sigma(t,S_t)\,dW_t,\qquad S_0\in L^2(\Omega;\mathbb R^d),
\end{equation}
where the coefficients $\alpha:[0,T]\times\mathbb R^d\to\mathbb R^d$ and
$\sigma:[0,T]\times\mathbb R^d\to\mathbb R^{d\times D}$ satisfy:

\medskip
\begin{assumption}\label{ass:LS}
There exists $L\ge 1$ such that for all $t\in[0,T]$, $x,y\in\mathbb R^d$,
\[
\begin{aligned}
&|\alpha(t,x)-\alpha(t,y)|+\|\sigma(t,x)-\sigma(t,y)\|\le L|x-y|,\\
&|\alpha(t,x)|^2+\|\sigma(t,x)\|^2\le L\,(1+|x|^2).
\end{aligned}
\]
\end{assumption}
Under Assumption~\ref{ass:LS}, \eqref{eq:trueS-drift} is well posed and
$\mathbb E\!\left[\sup_{t\le T}|S_t|^2\right]<\infty$.

\medskip
We model increasingly informative observers by a fixed index $n\in\mathbb N$
and a subfiltration $\mathbb G^n=(\mathcal G^n_t)_{t\in[0,T]}$ of $\mathbb F$ such that for each $t\in[0,T]$
\begin{equation}\label{eq:Y-increasing}
\mathcal G^1_t\subseteq \mathcal G^2_t\subseteq\cdots,\qquad
\sigma\Big(\bigcup_{n\ge 1}\mathcal G^n_t\Big)=\mathcal F_t\quad(\text{up to }\mathbb P\text{-null sets}).
\end{equation}
We assume right–continuity and completeness of each $\mathbb G^n$ without changing notation.

Fix $n$ and $t\in[0,T]$. Set
\[
\widehat S^{(n)}_t:=\mathbb E[S_t\mid \mathcal G^n_t],\qquad
\pi^{n}_t:=\mathcal{L}(S_t\mid \mathcal G^n_t)\in\mathcal P_2(\mathbb R^d).
\]
Recall the identity (see Lemma~\ref{lem:W2-identity})
\begin{equation}\label{eq:W2-is-Var}
W_2^2\big(\pi^{n}_t,\delta_{\widehat S^{(n)}_t}\big)
=\mathbb E\!\left[\,|S_t-\widehat S^{(n)}_t|^2\ \Big|\ \mathcal G^n_t\right]
=\mathrm{Var}\big(S_t\mid \mathcal G^n_t\big).
\end{equation}
Define the \emph{measure of ambiguity} and the \emph{bias weight} correspondingly by
\begin{equation}\label{eq:gamma-beta-def}
\gamma^{(n)}_t:=\sqrt{\mathrm{Var}(S_t\mid \mathcal G^n_t)}
= W_2\!\big(\pi^{n}_t,\delta_{\widehat S^{(n)}_t}\big),\qquad
\beta^{(n)}_t:=\beta\!\left(\gamma^{(n)}_t\right),
\end{equation}
where $\beta:[0,\infty)\to[0,1]$ satisfies:

\medskip
\begin{assumption}\label{ass:beta}
$\beta$ is continuous at $0$ with $\beta(0)=0$, bounded by $1$, and locally Lipschitz on $[0,\infty)$.
\end{assumption}

Define the \emph{bias-perturbed drift} by
\begin{equation}\label{eq:convex-drift}
\alpha^{(n)}_\beta(t,x,\omega)
:=(1-\beta^{(n)}_t(\omega))\,\alpha(t,x)+\beta^{(n)}_t(\omega)\,\rho^{(n)}_t(\omega).
\end{equation}
Consider \emph{the proposed synthetic price process} $\widetilde S^{(n)}$ given by:
\begin{equation}\label{eq:tildeS-beta}
d\widetilde S^{(n)}_t=\alpha^{(n)}_\beta\big(t,\widetilde S^{(n)}_t\big)\,dt
+\sigma\big(t,\widetilde S^{(n)}_t\big)\,dW_t,\qquad \widetilde S^{(n)}_0=S_0.
\end{equation}

\noindent Note $\gamma^{(n)}_t$ is $\mathcal G^n_t$–measurable by \eqref{eq:W2-is-Var}. By considering the $\mathbb G^n$-progressively measurable versions of $\pi_t^n$ (see the remark \ref{progressiveversions} in section 4) and the optional modifications of $\widehat S^{(n)}_t$, we obtain that  $\beta^{(n)}$ can be chosen to be $\mathbb G^n$–progressively measurable.

\begin{lemma}\label{lem:beta-vanish}
Assume \eqref{eq:Y-increasing}. Then for each $t\in[0,T]$,
\[
\mathbb E\big[(\gamma^{(n)}_t)^2\big]=\mathbb E\big|S_t-\widehat S^{(n)}_t\big|^2\ \xrightarrow[n]{}\ 0,
\qquad
\mathbb E\big[(\beta^{(n)}_t)^2\big]\ \xrightarrow[n]{}\ 0,
\]
and by dominated convergence,
\begin{equation}\label{eq:int-beta2-to-0}
\int_0^T \mathbb E\big[(\beta^{(n)}_t)^2\big]\,dt\ \xrightarrow[n]{}\ 0.
\end{equation}
\end{lemma}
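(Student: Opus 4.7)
The plan is to chain together three short arguments: a martingale convergence step for $\gamma^{(n)}_t$, a continuous-mapping step combined with boundedness to pass to $\beta^{(n)}_t$, and finally a dominated-convergence step to integrate in $t$.

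First I would dispatch the identity. By the tower property, $\mathbb E[\mathrm{Var}(S_t\mid\mathcal G^n_t)]=\mathbb E|S_t-\widehat S^{(n)}_t|^2$, which together with \eqref{eq:W2-is-Var} gives the first equality in the statement. Since $S_t\in L^2(\Omega;\mathbb R^d)$ by Assumption~\ref{ass:LS} and the filtrations satisfy $\sigma(\bigcup_n\mathcal G^n_t)=\mathcal F_t$ under \eqref{eq:Y-increasing}, the $L^2$ martingale convergence theorem applied to the Doob martingale $\widehat S^{(n)}_t=\mathbb E[S_t\mid \mathcal G^n_t]$ yields $\widehat S^{(n)}_t\to \mathbb E[S_t\mid \mathcal F_t]=S_t$ in $L^2(\Omega;\mathbb R^d)$. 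This gives $\mathbb E[(\gamma^{(n)}_t)^2]\to 0$.

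Next I would pass from $\gamma^{(n)}_t$ to $\beta^{(n)}_t$. Since $\gamma^{(n)}_t\to 0$ in $L^2$, it converges to $0$ in probability. By Assumption~\ref{ass:beta}, $\beta$ is continuous at $0$ with $\beta(0)=0$, so the continuous mapping theorem gives $\beta^{(n)}_t=\beta(\gamma^{(n)}_t)\to 0$ in probability. Because $\beta$ is bounded by $1$, the random variables $(\beta^{(n)}_t)^2$ are uniformly bounded by the constant $1$, hence uniformly integrable, and the bounded convergence theorem gives $\mathbb E[(\beta^{(n)}_t)^2]\to 0$ for each fixed $t$.

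Finally, for the integrated version \eqref{eq:int-beta2-to-0}, I would invoke dominated convergence on $[0,T]$: the integrand $t\mapsto \mathbb E[(\beta^{(n)}_t)^2]$ is bounded by $1$ pointwise in $t$ and $n$, which is integrable on $[0,T]$, and by the previous step converges pointwise to $0$. Dominated convergence then yields $\int_0^T\mathbb E[(\beta^{(n)}_t)^2]\,dt\to 0$.

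There is essentially no technical obstacle here; the only point requiring minor care is the measurability/version issue already addressed in the paragraph preceding the lemma (so that $\beta^{(n)}_t$ is a bona fide $\mathbb G^n$–progressively measurable process and the joint measurability on $[0,T]\times\Omega$ needed to apply Fubini/dominated convergence is legitimate). Beyond that, the proof is a clean three-line combination of martingale convergence, continuous mapping with bounded convergence, and dominated convergence on $[0,T]$.
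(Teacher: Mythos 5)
Your proof is correct, and the first and last steps (martingale convergence for $\gamma^{(n)}_t$, dominated convergence with dominating function $1$ on $[0,T]$) coincide with the paper's. The middle step, however, takes a genuinely different route. You argue: $\gamma^{(n)}_t\to0$ in $L^2$ implies convergence in probability, continuity of $\beta$ at $0$ with $\beta(0)=0$ plus the continuous mapping theorem gives $\beta^{(n)}_t\to0$ in probability, and boundedness by $1$ plus bounded convergence upgrades this to $L^2$. The paper instead exploits the local Lipschitz part of Assumption~\ref{ass:beta}: for a threshold $\delta>0$ it splits $\mathbb E[(\beta^{(n)}_t)^2]$ on $\{|\gamma^{(n)}_t|\le\delta\}$ and $\{|\gamma^{(n)}_t|>\delta\}$, bounds the first piece by $L_\delta^2\,\mathbb E[(\gamma^{(n)}_t)^2]$ and the second by Chebyshev, obtaining the explicit quantitative estimate $\mathbb E[(\beta^{(n)}_t)^2]\le(L_\delta^2+\delta^{-2})\,\mathbb E[(\gamma^{(n)}_t)^2]$. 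Your version is a bit cleaner and needs only continuity of $\beta$ at $0$ (not the local Lipschitz property), so it proves the lemma under a slightly weaker hypothesis; the paper's version buys a pointwise rate bound of the kind that recurs in the rate section (Proposition~\ref{prop:rate}), which is presumably why the authors chose it even though it is not needed for the qualitative statement at hand. Either argument is valid.
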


\begin{proof}
By \eqref{eq:Y-increasing} and the martingale convergence theorem,
$\widehat S^{(n)}_t\to S_t$ in $L^2$, hence $\mathbb E(\gamma^{(n)}_t)^2=\mathbb E|S_t-\widehat S^{(n)}_t|^2\to 0$.
By Assumption~\ref{ass:beta}, $\beta$ is locally Lipschitz at $0$ and $\beta(0)=0$, hence
$(\beta^{(n)}_t)^2\le C\,(\gamma^{(n)}_t)^2$ for all $\gamma^{(n)}_t$ sufficiently small. 
Set $\gamma_n:=\gamma_t^{(n)}$ and $\beta_n:=\beta(\gamma_n)$; fix $\delta>0$ and let $L_\delta$ be the Lipschitz constant of $\beta$ on $[-\delta,\delta]$. Then, using $0\le(\beta^{(n)}_t)^2\le 1$ and Chebyshev's inequality, we have
\[
\mathbb{E}[\beta_n^2]
=\mathbb{E}\big[\beta_n^2 \mathbf{1}_{\{|\gamma_n|\le\delta\}}\big]
+\mathbb{E}\big[\beta_n^2 \mathbf{1}_{\{|\gamma_n|>\delta\}}\big]
\le L_\delta^2\,\mathbb{E}\big[\gamma_n^2 \mathbf{1}_{\{|\gamma_n|\le\delta\}}\big]
+\mathbb{P}(|\gamma_n|>\delta)
\le \big(L_\delta^2+\delta^{-2}\big)\,\mathbb{E}[\gamma_n^2]
\xrightarrow[n\to\infty]{}0.
\]
Hence \eqref{eq:int-beta2-to-0} follows by dominated convergence.
\end{proof}

Let $\rho^{(n)}=(\rho^{(n)}_t)_{t\in[0,T]}$ be a $\mathbb G^n$–progressively measurable $\mathbb R^d$–valued process interpreted as the \textcolor{black}{trader's} opinion on the correct drift. We assume:

\begin{assumption}\label{ass:rho}
There exists $p>1$ such that
\[
\sup_{n\ge 1}\ \int_0^T \mathbb E\big[\,|\rho^{(n)}_t|^{2p}\,\big]\,dt\ <\ \infty.
\]
\end{assumption}

\begin{proposition}\label{prop:wellposed}
Under Assumptions~\ref{ass:LS},~\ref{ass:beta},~\ref{ass:rho}, for every $n$ the SDE \eqref{eq:tildeS-beta} admits a unique strong solution
with
\[
\mathbb E\Big[\sup_{t\le T}\big|\widetilde S^{(n)}_t\big|^2\Big]\ \le\
C\Big(1+\mathbb E|S_0|^2+\mathbb E\int_0^T |\rho^{(n)}_t|^2\,dt\Big),
\]
for a constant $C=C(L,T)$ independent of $n$.
\end{proposition}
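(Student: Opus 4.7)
The plan is to read \eqref{eq:tildeS-beta} as an Itô SDE with a \emph{random but deterministically Lipschitz} drift and an autonomous Lipschitz diffusion, then invoke the standard strong existence-uniqueness theorem for SDEs with progressively measurable random Lipschitz coefficients, followed by a routine Itô--BDG--Grönwall moment estimate. The key structural observation is that
$\alpha^{(n)}_\beta(t,x,\omega)=(1-\beta^{(n)}_t(\omega))\,\alpha(t,x)+\beta^{(n)}_t(\omega)\,\rho^{(n)}_t(\omega)$
is a convex combination in which only the first summand depends on $x$. Hence, using $\beta^{(n)}_t(\omega)\in[0,1]$ and Assumption~\ref{ass:LS},
\[
|\alpha^{(n)}_\beta(t,x,\omega)-\alpha^{(n)}_\beta(t,y,\omega)|=(1-\beta^{(n)}_t(\omega))\,|\alpha(t,x)-\alpha(t,y)|\le L\,|x-y|,
\]
so the Lipschitz constant in $x$ is the \emph{deterministic} constant $L$, uniformly in $(t,\omega)$ and in $n$.

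First I would check joint measurability: the map $(t,x,\omega)\mapsto\alpha^{(n)}_\beta(t,x,\omega)$ is progressively measurable in $(t,\omega)$ for each $x$, since $\beta^{(n)}$ and $\rho^{(n)}$ are $\mathbb{G}^n$-progressive (hence $\mathbb{F}$-progressive as $\mathbb{G}^n\subseteq\mathbb{F}$) and $\alpha$ is jointly measurable, and it is Lipschitz (hence continuous) in $x$. Then I would record the linear-growth bound
\[
|\alpha^{(n)}_\beta(t,x,\omega)|^2+\|\sigma(t,x)\|^2\ \le\ 3L\,(1+|x|^2)+2\,|\rho^{(n)}_t(\omega)|^2,
\]
where $\int_0^T\mathbb{E}|\rho^{(n)}_t|^2\,dt<\infty$ by Assumption~\ref{ass:rho} and Hölder's inequality (since $p>1$). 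These three ingredients -- progressive measurability, deterministic Lipschitz in $x$, and $L^2(\Omega\times[0,T])$-integrable random intercept -- are exactly the hypotheses of the classical Picard-iteration argument in the Banach space $\{X\ \mathbb{F}\text{-progressive}:\ \mathbb{E}\!\int_0^T|X_t|^2\,dt<\infty\}$, equipped with the exponentially weighted norm $\|X\|^2_\lambda:=\mathbb{E}\!\int_0^T e^{-\lambda t}|X_t|^2\,dt$; contraction for $\lambda$ sufficiently large yields a unique strong solution $\widetilde S^{(n)}$ with square-integrable paths.

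For the moment bound I would apply Itô's formula to $|\widetilde S^{(n)}_t|^2$, take supremum over $s\le t$, control the stochastic integral via the Burkholder--Davis--Gundy inequality, absorb the resulting $\tfrac12\mathbb{E}\sup_{s\le t}|\widetilde S^{(n)}_s|^2$ into the left-hand side, use the linear-growth estimate above, and conclude by Grönwall's lemma to obtain
\[
\mathbb{E}\sup_{s\le t}|\widetilde S^{(n)}_s|^2\ \le\ C_{L,T}\Big(1+\mathbb{E}|S_0|^2+\mathbb{E}\!\int_0^t|\rho^{(n)}_s|^2\,ds\Big),
\]
with $C_{L,T}$ independent of $n$. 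The main (and really only) subtle point is the need to keep the Lipschitz constant in $x$ deterministic and $n$-independent so that both the Picard contraction radius and the Grönwall constant are uniform in $n$; this is handled automatically by the convex-combination structure together with $\beta^{(n)}\in[0,1]$, so no additional argument beyond the elementary identity for $\alpha^{(n)}_\beta(t,x)-\alpha^{(n)}_\beta(t,y)$ is required.
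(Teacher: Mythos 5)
Your proof is correct and takes essentially the same approach as the paper: you observe that $\beta^{(n)}_t\in[0,1]$ and the $x$-independence of $\rho^{(n)}_t$ make the drift globally Lipschitz in $x$ with the deterministic, $n$-uniform constant $L$, derive the linear-growth bound with a random $|\rho^{(n)}_t|$ intercept, and then invoke standard Picard iteration plus Itô--BDG--Grönwall. The paper's proof makes the same two key observations and then simply cites ``standard SDE estimates''; your version just fills in the joint-measurability check and the contraction/Grönwall bookkeeping in more detail.
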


\begin{proof}
The drift $x\mapsto \alpha^{(n)}_\beta(t,x)$ is globally Lipschitz with the same constant $L$
as $\alpha$, since $\beta^{(n)}_t\in[0,1]$ and $\rho^{(n)}_t$ does not depend on $x$.
Moreover,
\[
|\alpha^{(n)}_\beta(t,x)|\le |\alpha(t,x)|+|\rho^{(n)}_t|
\le C(1+|x|)+|\rho^{(n)}_t|
\]
with $C=C(L)$. Standard SDE estimates (e.g.\ Itô, BDG, Gronwall) yield the moment bound under Assumption~\ref{ass:rho}.
\end{proof}

\subsection{Stability and Convergence to the True Process}\label{subsec:stability}

Set $\Delta^{(n)}_t:=\widetilde S^{(n)}_t-S_t$. Using \eqref{eq:trueS-drift} and \eqref{eq:tildeS-beta},
\begin{equation}\label{eq:Delta-eq}
d\Delta^{(n)}_t=\Big(\alpha\big(t,\widetilde S^{(n)}_t\big)-\alpha(t,S_t)\Big)\,dt
+\Big(\sigma\big(t,\widetilde S^{(n)}_t\big)-\sigma(t,S_t)\Big)\,dW_t
+\beta^{(n)}_t\Big(\rho^{(n)}_t-\alpha\big(t,\widetilde S^{(n)}_t\big)\Big)\,dt.
\end{equation}

\begin{theorem}\label{thm:stab}
Under Assumptions~\ref{ass:LS}, ~\ref{ass:beta},~\ref{ass:rho}, there exists $C=C(L,T)$ such that for every $n$ and $t\in[0,T]$,
\begin{equation}\label{eq:stab-est}
\mathbb E\Big[\sup_{s\le t}\big|\Delta^{(n)}_s\big|^2\Big]
\ \le\ C\int_0^t \mathbb E\Big[\big|\Delta^{(n)}_u\big|^2+(\beta^{(n)}_u)^2\big(1+|\rho^{(n)}_u|^2+|S_u|^2\big)\Big]\,du.
\end{equation}
\end{theorem}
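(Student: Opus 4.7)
My approach is the standard SDE \emph{a priori} estimate: write $\Delta^{(n)}$ in integral form from \eqref{eq:Delta-eq}, split into the Lipschitz drift piece, the Lipschitz volatility piece, and the bias perturbation piece, and bound each separately. Because the conclusion \eqref{eq:stab-est} is itself an integral inequality (with $|\Delta^{(n)}_u|^2$ retained on the right-hand side), no Gronwall argument is required here — the estimate is meant as an input to the subsequent convergence theorem. The core task is the bias piece, where the drift $\alpha$ evaluated at $\widetilde S^{(n)}$ must be re-expressed in terms of $S$ and $\Delta^{(n)}$ only so that $\widetilde S^{(n)}$ does not leak into the final bound.

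I would start from $\Delta^{(n)}_0=0$, apply $(a+b+c)^2\le 3(a^2+b^2+c^2)$ to the three terms of the integrated version of \eqref{eq:Delta-eq}, and take $\sup_{s\le t}$ and then expectations. For the Lipschitz drift piece, Cauchy--Schwarz in time together with the Lipschitz part of Assumption~\ref{ass:LS} gives a bound of the form $C\int_0^t \mathbb E|\Delta^{(n)}_u|^2\,du$. For the martingale piece, Doob's maximal inequality combined with Burkholder--Davis--Gundy and the Lipschitz property of $\sigma$ yields the same type of bound. This step is legitimate because $\mathbb E[\sup_{s\le t}|\Delta^{(n)}_s|^2]<\infty$, which follows from Proposition~\ref{prop:wellposed} applied to both $\widetilde S^{(n)}$ and $S$ together with Assumption~\ref{ass:rho}.

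The main calculation is the bias term. Using Cauchy--Schwarz in time and $|\rho-\alpha|^2\le 2(|\rho|^2+|\alpha|^2)$, I would dominate it by $2T\int_0^t (\beta^{(n)}_u)^2\big(|\rho^{(n)}_u|^2+|\alpha(u,\widetilde S^{(n)}_u)|^2\big)\,du$. The linear-growth half of Assumption~\ref{ass:LS} then lets me replace $|\alpha(u,\widetilde S^{(n)}_u)|^2$ by $L(1+|\widetilde S^{(n)}_u|^2)\le L\big(1+2|S_u|^2+2|\Delta^{(n)}_u|^2\big)$, which produces exactly the three ingredients $1$, $|\rho^{(n)}_u|^2$ and $|S_u|^2$ that appear inside the factor $(\beta^{(n)}_u)^2$ in \eqref{eq:stab-est}. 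The residual term $(\beta^{(n)}_u)^2|\Delta^{(n)}_u|^2$ is handled by $\beta^{(n)}\le 1$ from Assumption~\ref{ass:beta} and is absorbed into the $|\Delta^{(n)}_u|^2$ integrand already present. Collecting constants into a single $C=C(L,T)$ yields \eqref{eq:stab-est}. There is no real obstacle — the calculation is routine — but the one point worth executing carefully is precisely this swap from $|\widetilde S^{(n)}|^2$ to $|S|^2+|\Delta^{(n)}|^2$, so that no term involving $\widetilde S^{(n)}$ appears on the right-hand side, which is what makes the estimate usable in conjunction with Lemma~\ref{lem:beta-vanish} in the convergence argument to follow.
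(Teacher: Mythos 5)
Your proposal is correct and follows essentially the same route as the paper's proof: integrate \eqref{eq:Delta-eq}, bound the Lipschitz drift and diffusion pieces with Cauchy--Schwarz and BDG, use linear growth of $\alpha$ together with $|\widetilde S^{(n)}_u|^2\le 2|S_u|^2+2|\Delta^{(n)}_u|^2$ to re-express the bias term, and absorb the residual $(\beta^{(n)}_u)^2|\Delta^{(n)}_u|^2$ via $\beta^{(n)}\le 1$. The extra care you take to note that the finiteness $\mathbb E[\sup_{s\le t}|\Delta^{(n)}_s|^2]<\infty$ justifies the maximal inequality is a good habit, though the paper leaves it implicit.
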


\begin{proof}
Applying BDG and the Lipschitz properties of $(\alpha,\sigma)$,
\[
\mathbb E\!\left[\sup_{s\le t}\big|\Delta^{(n)}_s\big|^2\right]
\le C\int_0^t \mathbb E\big|\Delta^{(n)}_u\big|^2\,du
+ C\int_0^t \mathbb E\!\left[(\beta^{(n)}_u)^2\,\big|\rho^{(n)}_u-\alpha(u,\widetilde S^{(n)}_u)\big|^2\right]\,du.
\]
Using $(a+b)^2\le 2(a^2+b^2)$ and the linear growth of $\alpha$,
\[
\big|\rho^{(n)}_u-\alpha(u,\widetilde S^{(n)}_u)\big|^2
\le 2|\rho^{(n)}_u|^2+2\,|\alpha(u,\widetilde S^{(n)}_u)|^2
\le C\big(1+|\widetilde S^{(n)}_u|^2+|\rho^{(n)}_u|^2\big).
\]
By Proposition~\ref{prop:wellposed} and $(a+b)^2\le 2(a^2+b^2)$,
$|\widetilde S^{(n)}_u|^2\le 2|S_u|^2+2|\Delta^{(n)}_u|^2$, from which we obtain
\[
\big|\rho^{(n)}_u-\alpha(u,\widetilde S^{(n)}_u)\big|^2
\le C\big(1+|S_u|^2+|\Delta^{(n)}_u|^2+|\rho^{(n)}_u|^2\big).
\]
Inserting this bound and absorbing the resulting $(\beta^{(n)}_u)^2|\Delta^{(n)}_u|^2$ term
into the first integral (since $(\beta^{(n)}_u)^2\le 1$), we obtain \eqref{eq:stab-est}.
\end{proof}

\begin{theorem}\label{thm:conv}
Assume ~\ref{ass:LS}, ~\ref{ass:beta}, ~\ref{ass:rho} and \eqref{eq:Y-increasing}. Then
\[
\mathbb E\Big[\sup_{t\le T}\big|\widetilde S^{(n)}_t-S_t\big|^2\Big]\ \xrightarrow[n]{}\ 0.
\]
\end{theorem}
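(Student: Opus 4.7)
The plan is to reduce the convergence, via the stability estimate of Theorem~\ref{thm:stab} and Gr\"onwall's inequality, to showing that the forcing integral
\[
I_n \;:=\; \int_0^T \E\Big[(\beta^{(n)}_u)^2\big(1+|\rho^{(n)}_u|^2+|S_u|^2\big)\Big]\,du
\]
tends to zero. Indeed, writing $f_n(t):=\E\sup_{s\le t}|\Delta^{(n)}_s|^2$, Theorem~\ref{thm:stab} gives $f_n(t)\le C\int_0^t f_n(u)\,du+CI_n$, and Gr\"onwall then yields $f_n(T)\le Ce^{CT}I_n$. The proof therefore reduces to estimating the three summands inside $I_n$ separately.

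The summand $\int_0^T\E[(\beta^{(n)}_u)^2]\,du$ vanishes directly by Lemma~\ref{lem:beta-vanish}. For the $|S_u|^2$ summand, since $(\beta^{(n)}_u)^2\le 1$, the truncation
\[
\E[(\beta^{(n)}_u)^2|S_u|^2]\ \le\ K^2\,\E[(\beta^{(n)}_u)^2]\ +\ \E\big[\,|S_u|^2\,\mathbf{1}_{\{|S_u|>K\}}\big]
\]
combined with Lemma~\ref{lem:beta-vanish} (send $n\to\infty$ for fixed $K$) and uniform integrability of $\{|S_u|^2: u\in[0,T]\}$, which is dominated by $\sup_{t\le T}|S_t|^2\in L^1(\Prob)$ (send $K\to\infty$), shows that $\int_0^T\E[(\beta^{(n)}_u)^2|S_u|^2]\,du\to 0$.

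The main obstacle is the $|\rho^{(n)}_u|^2$ summand, since $\rho^{(n)}$ itself varies with $n$ and need not be uniformly bounded in it, so neither dominated convergence nor the simple truncation above applies. This is where Assumption~\ref{ass:rho} is used crucially: with the exponent $p>1$ there given and its conjugate $q=p/(p-1)$, a double application of H\"older's inequality (first in $\omega$ and then in $u$), combined with the pointwise bound $(\beta^{(n)}_u)^{2q}\le(\beta^{(n)}_u)^2$ (since $\beta\in[0,1]$ and $q\ge 1$), yields
\[
\int_0^T\E\big[(\beta^{(n)}_u)^2|\rho^{(n)}_u|^2\big]\,du
\ \le\ \Bigl(\int_0^T\E[(\beta^{(n)}_u)^2]\,du\Bigr)^{1/q}\Bigl(\int_0^T\E|\rho^{(n)}_u|^{2p}\,du\Bigr)^{1/p}.
\]
The first factor tends to zero by Lemma~\ref{lem:beta-vanish}, while the second is uniformly bounded in $n$ by Assumption~\ref{ass:rho}. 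Combining the three bounds gives $I_n\to 0$, and the Gr\"onwall step from the first paragraph completes the proof.
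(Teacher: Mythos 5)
Your proof is correct and follows essentially the same route as the paper: reduce via Theorem~\ref{thm:stab} and Gr\"onwall to the forcing integral, handle the three summands separately, with Lemma~\ref{lem:beta-vanish} for the plain term, a truncation/uniform-integrability argument for the $|S_u|^2$ term, and H\"older with $(\beta^{(n)})^{2q}\le(\beta^{(n)})^2$ plus Assumption~\ref{ass:rho} for the $|\rho^{(n)}|^2$ term. The paper applies H\"older once on the product space $([0,T]\times\Omega,\,dt\otimes d\Prob)$ rather than ``first in $\omega$, then in $u$,'' and invokes $\int_0^T\E|S_u|^2\,du<\infty$ directly rather than domination by $\sup_{t\le T}|S_t|^2$, but these are cosmetic differences.
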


\begin{proof}
From \eqref{eq:stab-est} with $t=T$ and Gronwall's inequality,
\[
\mathbb E\Big[\sup_{s\le T}\big|\Delta^{(n)}_s\big|^2\Big]
\ \le\ C\int_0^T \mathbb E\Big[(\beta^{(n)}_u)^2\big(1+|\rho^{(n)}_u|^2+|S_u|^2\big)\Big]\,du,
\]
for $C=C(L,T)$. Write
\[
A_n:=\int_0^T \mathbb E\big[(\beta^{(n)}_u)^2\big]\,du,\qquad
B_n:=\int_0^T \mathbb E\big[(\beta^{(n)}_u)^2|\rho^{(n)}_u|^2\big]\,du,\qquad
C_n:=\int_0^T \mathbb E\big[(\beta^{(n)}_u)^2|S_u|^2\big]\,du.
\]
By Lemma~\ref{lem:beta-vanish}, $A_n\to 0$.

For $B_n$, apply Hölder's inequality on the product space $([0,T]\times\Omega,\,dt\otimes d\mathbb P)$ with conjugate exponents $p>1$, $q:=\frac{p}{p-1}\in(1,\infty)$:
\[
B_n=\|(\beta^{(n)})^2\,|\rho^{(n)}|^2\|_{L^1}
\ \le\ \|(\beta^{(n)})^2\|_{L^q}\ \||\rho^{(n)}|^2\|_{L^p}.
\]
Since $0\le (\beta^{(n)})^2\le 1$, we have $(\beta^{(n)})^{2q}\le (\beta^{(n)})^2$ and hence
$\|(\beta^{(n)})^2\|_{L^q}\le A_n^{1/q}$. By Assumption~\ref{ass:rho},
$\||\rho^{(n)}|^2\|_{L^p}=(\int_0^T\mathbb E|\rho^{(n)}_u|^{2p}du)^{1/p}\le C$ uniformly in $n$.
Therefore $B_n\le C\,A_n^{1/q}\to 0$.

For $C_n$, fix $A>0$ and split
\[
C_n
=\int_0^T \mathbb E\big[(\beta^{(n)}_u)^2|S_u|^2\mathbf 1_{\{|S_u|\le A\}}\big]\,du
+\int_0^T \mathbb E\big[(\beta^{(n)}_u)^2|S_u|^2\mathbf 1_{\{|S_u|>A\}}\big]\,du
\ \le\ A^2 A_n+\int_0^T \mathbb E\big[|S_u|^2\mathbf 1_{\{|S_u|>A\}}\big]\,du.
\]
The second term is independent of $n$ and goes to $0$ as $A\to\infty$ because
$\int_0^T \mathbb E|S_u|^2\,du<\infty$ (from \eqref{eq:trueS-drift} and Assumption~\ref{ass:LS}).
Hence, given $\varepsilon>0$, choose $A$ such that the tail term $\le\varepsilon$, and then $n$ large so that $A^2 A_n\le\varepsilon$, yielding $C_n\le 2\varepsilon$ for $n$ large. Combining, we obtain $A_n+B_n+C_n\to 0$, and the claim follows.
\end{proof}

The above analysis hinges on the following precise identification, which also
motivates our choice of the functional form for measuring ambiguity.

\begin{proposition}\label{prop:cv-is-w2}
For any sub–$\sigma$–algebra $\mathcal G\subset\mathcal F$ and any $X\in L^2(\Omega;\mathbb R^d)$,
\[
\inf_{z\in\mathbb R^d} W_2^2\big(\mathcal{L}(X\mid\mathcal G),\delta_z\big)
= W_2^2\big(\mathcal{L}(X\mid\mathcal G),\delta_{\mathbb E[X\mid\mathcal G]}\big)
=\mathrm{Var}(X\mid\mathcal G).
\]
Consequently, the choice
\[
\beta^{(n)}_t=\beta\!\Big(W_2\!\big(\mathcal{L}(S_t\mid\mathcal G^n_t),\delta_{\widehat S^{(n)}_t}\big)\Big)
=\beta\!\Big(\sqrt{\mathrm{Var}(S_t\mid\mathcal G^n_t)}\Big)
\]
shrinks to $0$ exactly at the (square–root) rate at which the conditional variance vanishes.
\end{proposition}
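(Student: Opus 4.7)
The plan is to reduce the proposition to the classical fact that, in $L^2$, the best constant approximation of a square-integrable random variable is its mean, localized to the conditional setting. The only input needed from optimal transport is the elementary observation that a Dirac target leaves no freedom in the choice of coupling.

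First, I would recall that for any $\mu\in\mathcal P_2(\mathbb R^d)$ and any $z\in\mathbb R^d$, the unique coupling of $\mu$ with $\delta_z$ is the product $\mu\otimes\delta_z$, so $W_2^2(\mu,\delta_z)=\int|y-z|^2\,\mu(dy)$. Applying this with $\mu$ equal to the regular conditional distribution $\mathcal{L}(X\mid\mathcal G)$ (which exists since $\mathbb R^d$ is Polish) and invoking the defining property of conditional distributions, one obtains the $\mathcal G$-measurable random-variable identity
\[
W_2^2\!\big(\mathcal{L}(X\mid\mathcal G),\delta_z\big)\;=\;\mathbb E\!\big[\,|X-z|^2\,\big|\,\mathcal G\big]
\]
for each fixed constant $z\in\mathbb R^d$.

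Second, I would fix $\omega$ and minimize the right-hand side in $z$. Writing $m(\omega):=\mathbb E[X\mid\mathcal G](\omega)$ and using $\mathbb E[X-m\mid\mathcal G]=0$, the bias--variance decomposition yields
\[
\mathbb E\!\big[\,|X-z|^2\,\big|\,\mathcal G\big](\omega)\;=\;\mathrm{Var}(X\mid\mathcal G)(\omega)+|z-m(\omega)|^2,
\]
whose pointwise minimum over $z\in\mathbb R^d$ is $\mathrm{Var}(X\mid\mathcal G)(\omega)$, attained precisely at $z=m(\omega)$. This establishes both equalities simultaneously, and the consequence for $\beta^{(n)}_t$ follows immediately by specializing to $X=S_t$ and $\mathcal G=\mathcal G^n_t$.

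There is no substantive obstacle here; the result is essentially the conditional $L^2$-projection statement in disguise. The only subtle point worth flagging is that the infimum, although written over deterministic $z\in\mathbb R^d$, is the pointwise-in-$\omega$ infimum of a random quantity, so the attaining value $m(\omega)$ is $\mathcal G$-measurable rather than a single constant. This is harmless for the intended use in \eqref{eq:gamma-beta-def}, where $\widehat S^{(n)}_t$ is itself $\mathcal G^n_t$-measurable.
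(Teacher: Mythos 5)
Your proposal is correct and follows the same route as the paper: reduce $W_2^2(\mu,\delta_z)$ to $\int|y-z|^2\,\mu(dy)$ via the trivial coupling, apply the bias--variance decomposition of the resulting conditional second moment, and minimize over $z$, which is exactly "immediate from Lemma~\ref{lem:W2-identity} by minimizing over $z$." Your closing remark that the infimum is taken pointwise in $\omega$ (so the minimizer $z=\mathbb E[X\mid\mathcal G](\omega)$ is $\mathcal G$-measurable rather than a constant) is a correct and welcome clarification of the notation.
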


\begin{proof}
Immediate from Lemma~\ref{lem:W2-identity} by minimizing over $z$.
\end{proof}

\subsection{ Rate of Information Convergence}\label{subsec:rates}

Suppose, in addition, that for some $\eta\in(0,1]$ and all $u\in[0,T]$,
\begin{equation}\label{eq:var-rate}
\mathbb E\,\mathrm{Var}\!\big(S_u\mid \mathcal G^n_u\big)\ \le\ C\,n^{-2\eta}
\qquad\text{(a numerical rate of information gain).}
\end{equation}
Assume further that $\beta$ is \emph{globally} Lipschitz on $[0,\infty)$ with constant $L_\beta$ and $\beta(0)=0$.
Then, recalling $\gamma^{(n)}_u=\sqrt{\mathrm{Var}(S_u\mid\mathcal G^n_u)}$,
\[
\big(\beta^{(n)}_u\big)^2=\beta(\gamma^{(n)}_u)^2\ \le\ L_\beta^2\,\mathrm{Var}\!\big(S_u\mid\mathcal G^n_u\big),
\qquad\text{hence}\qquad
\int_0^T \mathbb E\big(\beta^{(n)}_u\big)^2\,du\ \le\ C\,L_\beta^2\,n^{-2\eta}.
\]

To convert this into a rate for $\mathbb E\big[\sup_{t\le T}|\widetilde S^{(n)}_t-S_t|^2\big]$, we need to control the
mixed terms containing $|S|^2$ and $|\rho^{(n)}|^2$ appearing in \eqref{eq:stab-est}.
Assume the following fourth--moment bounds:
\begin{equation}\label{eq:fourth-moments}
\int_0^T \mathbb E|S_u|^4\,du\ <\ \infty,
\qquad
\sup_{n\ge 1}\ \int_0^T \mathbb E\big[|\rho^{(n)}_u|^4\big]\,du\ <\ \infty.
\end{equation}

\begin{proposition}\label{prop:rate}
Under \eqref{eq:var-rate}, the global Lipschitz condition on $\beta$, and \eqref{eq:fourth-moments},
there exists $C'=C'(L,T,L_\beta,\text{data})$ such that
\[
\mathbb E\Big[\sup_{t\le T}\big|\widetilde S^{(n)}_t-S_t\big|^2\Big]\ \le\ C'\,n^{-\eta},\qquad n\to\infty.
\]
\end{proposition}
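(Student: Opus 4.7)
The plan is to start from the stability estimate in Theorem~\ref{thm:stab}, apply Gronwall to move the $|\Delta^{(n)}_u|^2$ term to the left--hand side, and obtain
\[
\mathbb E\Big[\sup_{s\le T}\big|\Delta^{(n)}_s\big|^2\Big]
\ \le\ C\int_0^T \mathbb E\Big[(\beta^{(n)}_u)^2\big(1+|\rho^{(n)}_u|^2+|S_u|^2\big)\Big]\,du,
\]
so that the entire task reduces to bounding three integrals involving $(\beta^{(n)}_u)^2$ times $1$, $|\rho^{(n)}_u|^2$, and $|S_u|^2$ respectively.

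The first integral is the easy one: the global Lipschitz assumption on $\beta$ together with $\beta(0)=0$ yields $(\beta^{(n)}_u)^2\le L_\beta^2\,\mathrm{Var}(S_u\mid\mathcal G^n_u)$, and combining with the hypothesized rate \eqref{eq:var-rate} gives $\int_0^T\mathbb E(\beta^{(n)}_u)^2\,du\le CL_\beta^2\,n^{-2\eta}$, which is even better than the required $n^{-\eta}$.

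The main obstacle, and the reason the stated rate is $n^{-\eta}$ rather than $n^{-2\eta}$, lies in the two cross terms. The key observation is a one--sided interpolation: since $\beta^{(n)}_u\in[0,1]$ we have $(\beta^{(n)}_u)^4\le (\beta^{(n)}_u)^2\le L_\beta^2(\gamma^{(n)}_u)^2$, so that
\[
\mathbb E\big[(\beta^{(n)}_u)^4\big]\ \le\ L_\beta^2\,\mathbb E\,\mathrm{Var}(S_u\mid\mathcal G^n_u)\ \le\ C L_\beta^2\,n^{-2\eta}.
\]
Applying Cauchy--Schwarz pointwise in $u$ and then in $u$ over $[0,T]$,
\[
\int_0^T \mathbb E\big[(\beta^{(n)}_u)^2|\rho^{(n)}_u|^2\big]\,du
\ \le\ \int_0^T \sqrt{\mathbb E(\beta^{(n)}_u)^4}\,\sqrt{\mathbb E|\rho^{(n)}_u|^4}\,du
\ \le\ L_\beta\sqrt{C}\,n^{-\eta}\,T^{1/2}\Big(\!\int_0^T\!\mathbb E|\rho^{(n)}_u|^4\,du\Big)^{1/2},
\]
which is $O(n^{-\eta})$ by the uniform fourth--moment bound on $\rho^{(n)}$ in \eqref{eq:fourth-moments}. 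The $|S_u|^2$ term is handled identically using the fourth--moment bound on $S$.

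Putting the three bounds together, and noting $n^{-2\eta}\le n^{-\eta}$ for $n\ge 1$ since $\eta\in(0,1]$, yields $\mathbb E[\sup_{t\le T}|\Delta^{(n)}_t|^2]\le C'n^{-\eta}$ as claimed. The main obstacle to anticipate is the temptation to bound $\mathbb E[(\beta^{(n)}_u)^2|\rho^{(n)}_u|^2]$ by first moving all the decay onto $\beta^{(n)}_u$ via $(\beta^{(n)}_u)^2\le L_\beta^2(\gamma^{(n)}_u)^2$ and then Cauchy--Schwarz, which would require a rate on $\mathbb E(\gamma^{(n)}_u)^4$ that is not assumed; the correct move is to first take the fourth power of $\beta^{(n)}_u$ (using that it lies in $[0,1]$) and only then apply the variance rate, which trades the squared decay rate for the fourth--moment hypothesis on $\rho^{(n)}$ and $S$.
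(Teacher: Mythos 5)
Your proof is correct and takes essentially the same approach as the paper: the same Gronwall step, the same bound $A_n\le CL_\beta^2 n^{-2\eta}$, and the same key insight that $(\beta^{(n)}_u)^4\le(\beta^{(n)}_u)^2$ (since $\beta^{(n)}\in[0,1]$) converts the assumed variance rate into a fourth--moment rate for $\beta^{(n)}$, which is then paired with the fourth--moment hypotheses on $\rho^{(n)}$ and $S$ via Cauchy--Schwarz. The only cosmetic difference is that you apply Cauchy--Schwarz first in $\omega$ (pointwise in $u$) and then in $u$, while the paper applies H\"older with exponents $(2,2)$ once on the product space $([0,T]\times\Omega,\,dt\otimes d\mathbb P)$; both yield the same $O(n^{-\eta})$ bound.
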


\begin{proof}
From \eqref{eq:stab-est} with $t=T$ and Gronwall's inequality,
\[
\mathbb E\Big[\sup_{s\le T}\big|\Delta^{(n)}_s\big|^2\Big]
\ \le\ C\!\int_0^T \mathbb E\Big[(\beta^{(n)}_u)^2\Big]\,du
+ C\!\int_0^T \mathbb E\Big[(\beta^{(n)}_u)^2|\rho^{(n)}_u|^2\Big]\,du
+ C\!\int_0^T \mathbb E\Big[(\beta^{(n)}_u)^2|S_u|^2\Big]\,du,
\]
with $C=C(L,T)$. Denote
\[
A_n:=\int_0^T \mathbb E\big[(\beta^{(n)}_u)^2\big]\,du,\qquad
B_n:=\int_0^T \mathbb E\big[(\beta^{(n)}_u)^2|\rho^{(n)}_u|^2\big]\,du,\qquad
C_n:=\int_0^T \mathbb E\big[(\beta^{(n)}_u)^2|S_u|^2\big]\,du.
\]
By the computation above, $A_n\le C L_\beta^2 n^{-2\eta}$. Since $0\le \beta^{(n)}_u\le 1$, we have $(\beta^{(n)}_u)^4\le (\beta^{(n)}_u)^2$ and hence
\[
\int_0^T \mathbb E\big[(\beta^{(n)}_u)^4\big]\,du\ \le\ A_n\ \le\ C L_\beta^2 n^{-2\eta}.
\]
By Hölder's inequality on $([0,T]\times\Omega,dt\otimes d\mathbb P)$ with exponents $(2,2)$ and \eqref{eq:fourth-moments},
\[
B_n\le \Big(\int_0^T \mathbb E[(\beta^{(n)}_u)^4]\,du\Big)^{1/2}
      \Big(\int_0^T \mathbb E[|\rho^{(n)}_u|^4]\,du\Big)^{1/2}
   \ \le\ C\,n^{-\eta},
\]
and similarly
\[
C_n\le \Big(\int_0^T \mathbb E[(\beta^{(n)}_u)^4]\,du\Big)^{1/2}
      \Big(\int_0^T \mathbb E[|S_u|^4]\,du\Big)^{1/2}
   \ \le\ C\,n^{-\eta}.
\]
Thus $A_n+B_n+C_n\le C'(n^{-2\eta}+n^{-\eta}+n^{-\eta})\le C''\,n^{-\eta}$, and the claim follows.
\end{proof}

\subsection{Simulation of the Shrinking Bias}

We work on a filtered probability space $(\Omega,\mathcal F,\mathbb F,\mathbb P)$ carrying a one–dimensional Brownian motion $W$. The \emph{true} price process $S$ solves
\begin{equation}\label{eq:true}
dS_t=\alpha(t,S_t)\,dt+\sigma(t,S_t)\,dW_t,\qquad S_0>0,
\end{equation}
and for the simulation we again take the geometric Brownian motion (GBM) model
\[
\alpha(t,x)=\mu_\star x,\qquad \sigma(t,x)=\sigma_\star x.
\]

We set the weight function to be  
\[
\beta^{(n)}_t \;=\; 1-\exp\!\Big(-\kappa_b\,(\gamma^{(n)}_t)^{p_b}\Big) \;=\; 1-\exp\!\Big(-\kappa_b\,\big(\mathrm{Var}(S_t\mid \mathcal G^n_t)\big)^{p_b/2}\Big),\quad \kappa_b=10^{-3},\ p_b=2.4.
\]

We use a specification of $\rho^{(n)}$ proportional to the conditional mean (amplified by relative ambiguity):
\[
\rho^{(n)}_t=\mu_{\mathrm{op}}\,\widehat S^{(n)}_t\Big(1+c_{\mathrm{rel}}\frac{\gamma^{(n)}_t}{\widehat S^{(n)}_t+\varepsilon}\Big),\qquad
\mu_{\mathrm{op}}>0,\ c_{\mathrm{rel}}\ge 0,\ \varepsilon>0.
\]

The synthetic price $\widetilde S^{(n)}$ is obtained by convexly mixing the true drift with the opinion drift, with $\beta^{(n)}$ as weight, and keeping the same diffusion and the same Brownian motion $W$:
\begin{equation}\label{eq:synthetic}
d\widetilde S^{(n)}_t=\Big[(1-\beta^{(n)}_t)\,\alpha\!\big(t,\widetilde S^{(n)}_t\big)
+\beta^{(n)}_t\,\rho^{(n)}_t\Big]\,dt
+\sigma\!\big(t,\widetilde S^{(n)}_t\big)\,dW_t,\qquad \widetilde S^{(n)}_0=S_0.
\end{equation}
At zero ambiguity ($\gamma^{(n)}_t=0$) one has $\beta^{(n)}_t=0$, so \eqref{eq:synthetic} reduces to the true dynamics \eqref{eq:true}. Under the standard Lipschitz and linear–growth assumptions, increasing information implies $\gamma^{(n)}\to 0$, which in turn implies $\beta^{(n)}\to 0$, and stability yields $\widetilde S^{(n)}\to S$ in $L^2(\Omega;C)$. The common Brownian motion $W$ couples the paths so differences are solely driven by the bias term.
  
   Below are $4\times2$ plots, with rows corresponding to information levels $n\in\{1,10,100,1000\}$. The true process $S$ (blue) is on the left, the synthetic process $\widetilde S^{(n)}$ (red) is on the right. We generated $30$ paths per panel with daily steps over $[0,1]$ year. 

\begin{figure}[p]
  \centering
 \includegraphics[width=\linewidth,height=0.9\textheight,keepaspectratio]{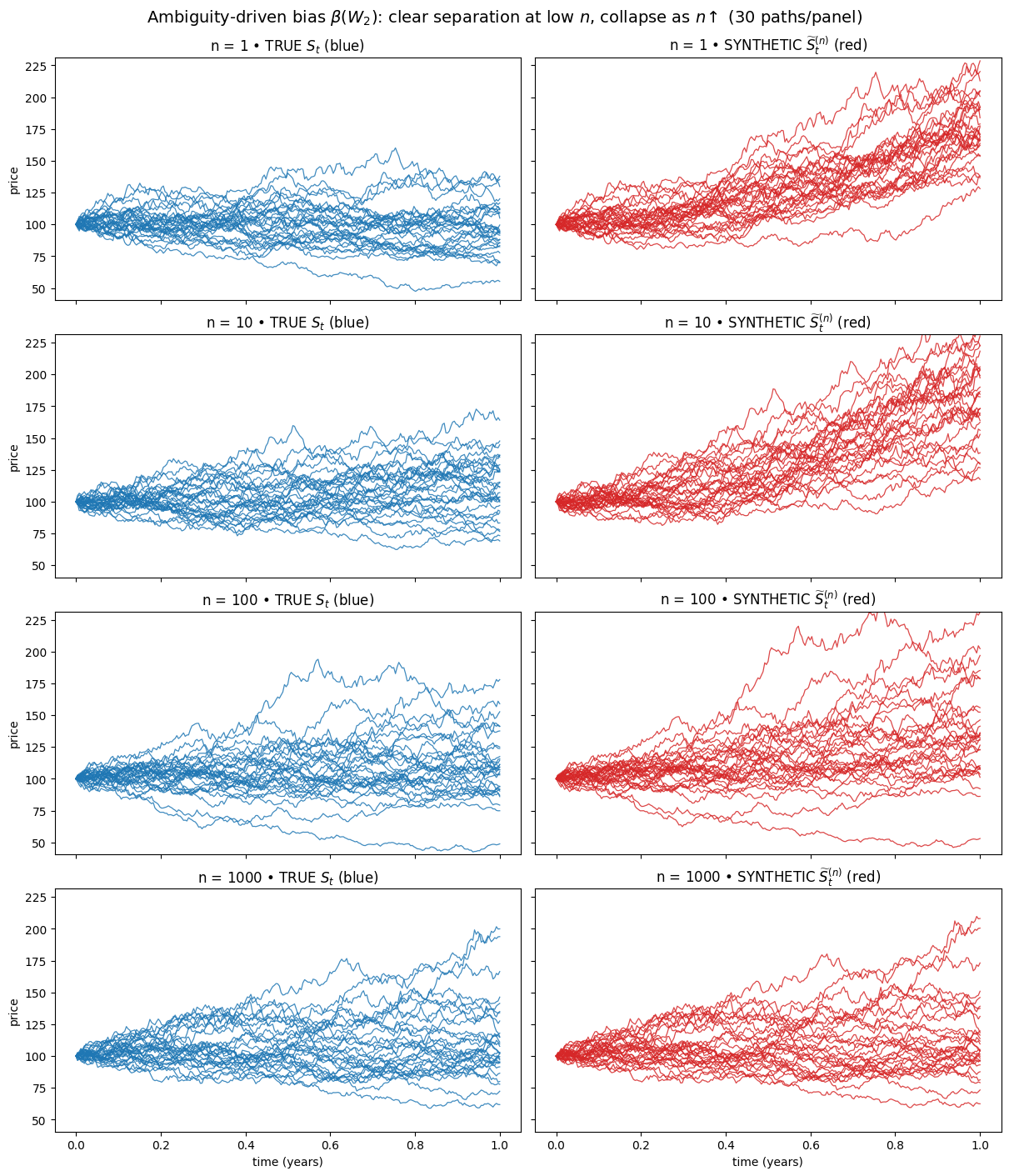}
  \caption{Ambiguity–driven bias: at low information ($n=1,10$) the synthetic process deviates due to a nonzero $\beta^{(n)}$, while as $n$ increases ($100,1000$) the bias weight shrinks and $\widetilde S^{(n)}$ collapses to $S$.}
  \label{fig:Sim2}
\end{figure}

\clearpage

\subsection{Conclusions}

We have constructed a rigorous, explicit mechanism in which a \textcolor{black}{trader's} opinion drift
$\rho^{(n)}$ is convexly combined with the true drift $\alpha$, with weight
\[
\beta^{(n)}_t=\beta\!\Big(W_2\!\big(\mathcal{L}(S_t\mid\mathcal G^n_t),\delta_{\widehat S^{(n)}_t}\big)\Big)
=\beta\!\Big(\sqrt{\mathrm{Var}(S_t\mid\mathcal G^n_t)}\Big)\ \in[0,1].
\]
\emph{The Wasserstein distance measures a sense of ambiguity regarding using the estimate $\widehat S_t$ instead of the true value of $S_t$}. As the sub–$\sigma$–algebras $\mathcal G^n_t$ increase to $\mathcal F_t$ (for each $t$),
$\beta^{(n)}$ vanishes in $L^2(dt\otimes d\mathbb P)$, and the modified (synthetic) price process
\eqref{eq:tildeS-beta} converges to the true price process \eqref{eq:trueS-drift} in
$L^2(\Omega;C([0,T];\mathbb R^d))$ (\Cref{thm:conv}). The shrinking mechanism is
explicitly tied to the conditional variance of $S_t$ given the current information level (Proposition~\ref{prop:cv-is-w2}).

\section{Optimal Aggregation of Expert Opinions under Information Constraints}
\label{sec:KL-budget-aggregation}

\subsection{Preliminaries and Model Set-Up}
\label{subsec:preliminaries-KL}

We now study an optimal aggregation problem where a \emph{trader is acting under information constraints $\mathbb G \subset \mathbb F$ and consults a (now possibly continuous) set of experts}, represented by some compact space $\Lambda$, who propose corrections to the observed drift, represented by expert-indexed flows of probability measures $\rho^\lambda$, which the \emph{trader wishes to aggregate to obtain the best estimate of the unknown true drift $a$}. The trader tries to minimize the distance between the aggregated correction term $\bar\rho$ and the trader's own estimate $\widehat a$ based on available information $\mathbb G$, taking into account their own prior beliefs on the expert community, represented by a flow $\pi$ of probability measures over $\Lambda$. \textcolor{black}{From a financial point of view, the trader is seeking positive alphas as defined in \cite{JarrowProtter}, i.e. an arbitrage opportunity or a dominated asset.}

We fix a finite horizon $T>0$ and a filtered probability space
$(\Omega,\mathcal F,\mathbb F=(\mathcal F_t)_{0\le t\le T},\mathbb P)$ satisfying the usual conditions. Let $W$ and $B$ be independent one–dimensional $\mathbb F$–Brownian motions.
Assume the stock price follows
\begin{equation}\label{eq:signal-SDE}
dS_t = a_t\,dt + \sigma_t\,dW_t,\qquad S_0\in L^2(\Omega,\mathcal F_0,\mathbb P),
\end{equation}
with $a,\sigma$ $\mathbb F$–progressively measurable and $\mathbb E\!\int_0^T |a_t|\,dt < \infty,
\mathbb E\!\int_0^T \sigma_t^2\,dt < \infty,
\sigma_t\ge 0\ \text{a.s.}$

We assume that $S$ is only \emph{partially observable} and the trader is only able to observe the process $Y$ given by 
\[
dY_t \;=\; h_t(S_t)\,dt \;+\; R_t^{1/2}\,dB_t,\qquad R_t\in[r_-,r_+] \text{ a.s. for some }0<r_-\le r_+<\infty,
\]
where $h$ is $\mathbb G$–progressively measurable with $h_t:\mathbb R\to\mathbb R$ and linear growth
$|h_t(x)|\le C(1+|x|)$ a.s.
Define the observation filtration $\mathbb G$ as the usual augmentation of
$\sigma(Y_s:0\le s\le t)$. Set
\[
\widehat h_t:=\mathbb E\!\left[h_t(S_t)\,\middle|\,\mathcal G_t\right].
\qquad
\widehat W_t\ :=\ \int_0^t R_s^{-1/2}\big(dY_s-\widehat h_s\,ds\big).
\]
Then $\widehat W$ is the innovation process, see \cite{BainCrisan, Xiong}, which is a $\mathbb G$–Brownian motion, and the following standard representation holds
\begin{equation}\label{eq:obs-innovation}
dY_t\ =\ \widehat h_t\,dt\ +\ R_t^{1/2}\,d\widehat W_t.
\end{equation}

Consider the (conditional–mean) filter $\widehat S_t:=\mathbb E[S_t\mid\mathcal G_t]$.
Under the integrability condition above, $\widehat S$ is an $\mathbb G$–continuous semimartingale with the \emph{innovation representation}
\begin{equation}\label{eq:filtered-dynamics}
d\widehat S_t \;=\; \widehat a_t\,dt \;+\; \widehat\sigma_t\,d\widehat W_t,
\end{equation}
where
\begin{equation}\label{eq:filtered-coeffs}
\widehat a_t \;:=\; \mathbb E[a_t\mid\mathcal G_t],
\qquad
\widehat\sigma_t \;:=\; \big(\mathbb E[S_t h_t\mid\mathcal G_t]-\mathbb E[S_t\mid\mathcal G_t]\mathbb E[h_t\mid\mathcal G_t]\big)R_t^{-1/2}.
\end{equation}
Writing $\pi_t(\varphi):=\mathbb E[\varphi(S_t)\mid\mathcal G_t]$, \eqref{eq:filtered-dynamics}–\eqref{eq:filtered-coeffs} follow from the standard Kushner–Stratonovich equation, see \cite{BainCrisan, Xiong},
\[
d\pi_t(\varphi)=\pi_t(a_t)\,dt+\big(\pi_t(\varphi h_t)-\pi_t(\varphi)\pi_t(h_t)\big)\,R_t^{-1/2}\,d\widehat W_t
\]
for the test function $\varphi(x)=x$.

Besides partial observability, we incorporate \emph{expert biases} by defining the expert-indexed proposal drifts $\rho^\lambda \coloneqq (\rho_t^\lambda)_{0\le t\le T}$ such that the real-valued map $(t, \omega, \lambda)\mapsto \rho^{\lambda}_t(w)$ is  $\mathcal P(\mathbb G)\otimes\mathcal B(\Lambda)$-measurable and $m=(m_t)_{0\le t\le T}$ is a $\mathbb G$-progressively measurable flow of probability kernels on $\Lambda$ which is some compact metric space representing the set of experts that, for simplicity, we take to be $[0,1]$, and let
\[
\bar\rho_t(m_t)\;:=\;\int_{\Lambda}\rho_t^\lambda\,m_t(d\lambda)
\]
be the \emph{aggregated bias/correction term} at time $t$, and $m$ is the aggregator which dynamically assigns weight to each expert's opinion. With a progressively measurable weight process $\beta=(\beta_t)_{0\le t\le T}$ taking values in $[0,1]$, we define the \emph{bias–adjusted filtered price} $\widetilde S$ by
\begin{equation}\label{eq:prelim-SDE}
d\widetilde S_t \;=\; \Big(\widehat a_t \;+\; \beta_t\big(\bar\rho_t(m_t)-\widehat a_t\big)\Big)\,dt \;+\; \widehat\sigma_t\,d\widehat W_t.
\end{equation}
Thus $\beta_t\equiv 0$ recovers the baseline filter \eqref{eq:filtered-dynamics}, while $\beta_t\equiv 1$ replaces the filtered drift by the aggregated bias $\bar\rho_t(m_t)$. Below, $\pi=(\pi_t)_{0\le t\le T}$ is an $\mathbb G$–progressively measurable flow of probability kernels on $\Lambda$ representing the prior beliefs of the trader regarding the experts. Note that the bias-adjusted price process can be written in the following equivalent form, emphasizing the fact that the \emph{trader forms a convex combination of the observed drift with the proposed correction term}:
\begin{equation}\label{eq:prelim-SDE}
d\widetilde S_t \;=\; \Big((1-\beta_t)\widehat a_t \;+\; \beta_t\bar\rho_t(m_t)\Big)\,dt \;+\; \widehat\sigma_t\,d\widehat W_t.
\end{equation}

We measure the discrepancy between $\bar\rho$ and $\widehat a$ by the functional
\begin{equation}\label{eq:prelim-L2-loss}
\mathcal L(m)\ :=\ \mathbb E\!\int_0^T \frac{\gamma}{2}\,\big(\bar\rho_t(m_t)-\widehat a_t\big)^2\,dt,
\qquad \gamma>0,
\end{equation}
which is convex in $m$ because $m\mapsto\bar\rho_t(m)$ is linear.

Finally, for flows of measures $m,\pi$ as above, we define the time–integrated relative entropy
\begin{equation}\label{eq:prelim-KL-distance}
\mathcal D_{\mathrm{KL}}(m\|\pi)
:= \mathbb E\!\int_0^T \mathrm{KL}\!\big(m_t\|\pi_t\big)\,dt
= \mathbb E\!\int_0^T\!\int_{\Lambda} \log\!\Big(\frac{dm_t}{d\pi_t}\Big)\,m_t(d\lambda)\,dt,
\end{equation}
with $\mathrm{KL}(m_t\|\pi_t)=+\infty$ if $m_t\not\ll\pi_t$. 

\begin{problem}\label{prob:KL}
Fix $0<K< \infty$. Minimize $\mathcal L(m)$ over $\mathbb G$–progressively measurable flows of probability kernels $m=(m_t)_{t\in[0,T]}$ subject to
\begin{equation}
\label{eq:KL-budget-constraint}
\mathcal D_{\mathrm{KL}}(m\|\pi)\ \le\ K,\qquad m_t\ll \pi_t\quad\text{a.s.\ for a.e.\ }t.
\end{equation}
\end{problem}

Thus, we study an optimization problem where the \emph{trader wants to select an optimal aggregator process} $m^{\ast}$ such that the posited drift is close to the filtered drift $\widehat a$, while keeping $m$ close to the given prior $\pi$. We make the following assumptions. 

\begin{assumption}\label{ass:data}
\leavevmode
\begin{enumerate}
\item[(A1)] For
each $(t,\omega)$, $\lambda\mapsto \rho_t^\lambda(\omega)$ is continuous and $(t,\omega,\lambda)\mapsto \rho_t^\lambda(\omega)$ is bounded. 
\item[(A2)] $\widehat a=(\widehat a_t)_{0\le t\le T}$ satisfies $\displaystyle \mathbb E \int_0^T |\widehat a_t|^2\,dt<\infty$.
\item[(A3)] For each fixed $t$, the kernel $\pi_t$ has full support. Since $\Lambda$ is compact and $\lambda\mapsto\rho_t^\lambda$ is bounded, the log–moment generating function
\begin{equation}
\label{eq:finite-mgf}
\Phi_t(\eta)\ :=\ \log\!\int_{\Lambda} e^{\eta\,\rho_t^\lambda}\,\pi_t(d\lambda)
\end{equation}
is finite for all $\eta\in\mathbb R$, a.s., for a.e.\ $t$.
\item[(A4)] $m=(m_t)_{0\le t\le T}$ satisfies $m_t\ll \pi_t$ a.s.\ for a.e.\ $t$.
\end{enumerate}
\end{assumption}

\begin{assumption}\label{ass:attainability}
For a.e.\ $(t,\omega)$, the filtered drift $\widehat a_t(\omega)$ belongs to the closed convex hull of the expert range:
\[
\widehat a_t(\omega)\ \in\ \overline{\mathrm{co}}\big\{\rho_t^\lambda(\omega):\lambda\in\Lambda\big\}
\ =\ \big[\mathrm{inf}_{\lambda}\rho_t^\lambda(\omega),\ \mathrm{sup}_{\lambda}\rho_t^\lambda(\omega)\big].
\]
\end{assumption}

\begin{problem}\label{prob:unconstrained}
Minimize $\mathcal L(m)$ in \eqref{eq:prelim-L2-loss} over \emph{all} $\mathbb F$–progressively measurable flows of probability kernels $m=(m_t)_{0\le t\le T}$ on $\Lambda$ (no absolute continuity requirement with respect to $\pi$).
\end{problem}

\begin{proposition}\label{prop:zero-loss-benchmark}
Under Assumption~\ref{ass:attainability}, Problem~\ref{prob:unconstrained} has minimal value $0$. Moreover, there exists an $\mathbb F$–progressively measurable flow $m^0=(m_t^0)_{t\in[0,T]}$ such that
\[
\bar\rho_t(m_t^0)\ =\ \int_{\Lambda}\rho_t^\lambda\,m_t^0(d\lambda)\ =\ \widehat a_t
\quad\text{a.s.\ for a.e.\ }t,
\]
so $\mathcal L(m^0)=0$.
\end{proposition}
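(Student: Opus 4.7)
The plan is to split the statement into a trivial lower bound and an explicit attainability construction. First, I would observe that $\mathcal L(m)\ge 0$ for every admissible flow $m$, since the integrand in \eqref{eq:prelim-L2-loss} is nonnegative. Hence it suffices to exhibit an $\mathbb F$--progressively measurable flow $m^{0}$ of probability kernels on $\Lambda$ satisfying $\bar\rho_t(m_t^{0})=\widehat a_t$ a.s.\ for a.e.\ $t$, because then $\mathcal L(m^{0})=0$ is the infimum.

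For the construction, the idea is to concentrate the aggregator on the two extremal experts: one achieving the minimum and one the maximum of $\lambda\mapsto \rho_t^\lambda$. By Assumption~\ref{ass:data}(A1), $\lambda\mapsto\rho_t^\lambda(\omega)$ is continuous on the compact set $\Lambda$, so the envelopes
\[
\underline\rho_t(\omega):=\min_{\lambda\in\Lambda}\rho_t^\lambda(\omega),\qquad
\overline\rho_t(\omega):=\max_{\lambda\in\Lambda}\rho_t^\lambda(\omega)
\]
are attained and (as envelopes of a normal integrand) are $\mathbb F$--progressively measurable. Invoking a standard measurable selection theorem for argmin/argmax of Carathéodory-type maps on compact metric spaces (e.g.\ Kuratowski--Ryll-Nardzewski, or \cite{RockafellarWets} Thm.~14.37, as already cited in Section~2), one obtains $\mathbb F$--progressively measurable selections $\underline\lambda_t(\omega)$ and $\overline\lambda_t(\omega)$ of the argmin and argmax correspondences. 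On the event $\{\overline\rho_t>\underline\rho_t\}$, define
\[
\theta_t:=\frac{\widehat a_t-\underline\rho_t}{\overline\rho_t-\underline\rho_t},
\]
which belongs to $[0,1]$ by Assumption~\ref{ass:attainability}; on the complementary event set $\theta_t:=0$, noting that then $\widehat a_t=\underline\rho_t=\overline\rho_t$ by attainability. Finally, put
\[
m_t^{0}:=\theta_t\,\delta_{\overline\lambda_t}+(1-\theta_t)\,\delta_{\underline\lambda_t}.
\]
A direct computation gives $\bar\rho_t(m_t^{0})=\theta_t\,\overline\rho_t+(1-\theta_t)\,\underline\rho_t=\widehat a_t$ in both cases.

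It remains to verify that $m^{0}$ is a progressively measurable flow of probability kernels. For this, I would use the characterization recalled in the appendix: it suffices to check that $(t,\omega)\mapsto m_t^{0}(\omega)(B)$ is progressively measurable for every Borel set $B\subset\Lambda$, or equivalently, by a monotone class argument, that $(t,\omega)\mapsto \int f\,dm_t^{0}=\theta_t f(\overline\lambda_t)+(1-\theta_t)f(\underline\lambda_t)$ is progressively measurable for every $f\in C(\Lambda)$. This is immediate from the progressive measurability of $\theta_t,\overline\lambda_t,\underline\lambda_t$ and continuity of $f$. I expect the measurable selection step to be the only delicate point: it is routine but requires the joint $\mathcal P(\mathbb G)\otimes\mathcal B(\Lambda)$--measurability of $\rho$ together with the continuity in $\lambda$ in order to apply the normal-integrand framework, whereas the arithmetic step of producing the convex weight $\theta_t$ and the final identity $\bar\rho_t(m_t^{0})=\widehat a_t$ are straightforward.
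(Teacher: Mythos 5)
The paper states Proposition~\ref{prop:zero-loss-benchmark} without a proof, so there is no authors' argument to compare against; your proof fills that gap and it is correct. The decomposition into a trivial lower bound plus an explicit attainability construction is sound, and the two-point mixture $m_t^0=\theta_t\delta_{\overline\lambda_t}+(1-\theta_t)\delta_{\underline\lambda_t}$ supported on extremal experts is exactly the minimal object one needs, since on the real line $\overline{\mathrm{co}}\{\rho_t^\lambda\}=[\min_\lambda\rho_t^\lambda,\max_\lambda\rho_t^\lambda]$ and every point of that interval is a convex combination of the endpoints. The handling of the degenerate event $\{\overline\rho_t=\underline\rho_t\}$ is correct: Assumption~\ref{ass:attainability} forces $\widehat a_t=\underline\rho_t=\overline\rho_t$ there up to a $\mathbb P\otimes dt$--null set, which is all the statement requires. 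The measurability chain is also right: $(t,\omega,\lambda)\mapsto\rho_t^\lambda(\omega)$ being $\mathcal P(\mathbb G)\otimes\mathcal B(\Lambda)$--measurable and continuous in $\lambda$ on the compact $\Lambda$ makes it a Carathéodory map, so the envelopes $\underline\rho,\overline\rho$ are $\mathcal P(\mathbb G)$--measurable and the argmin/argmax correspondences admit measurable selections by Kuratowski--Ryll-Nardzewski; the resulting kernel is then $\mathbb G$--progressively (hence $\mathbb F$--progressively) measurable by the duality-against-$C(\Lambda)$ criterion of Lemma~\ref{lem:kernel-equivalences}. One small remark: since the paper later specializes to $\Lambda=[0,1]$, an even shorter route is available there via the intermediate value theorem, which produces a single $\lambda_t^\ast$ with $\rho_t^{\lambda_t^\ast}=\widehat a_t$ and hence a Dirac selector $m_t^0=\delta_{\lambda_t^\ast}$; your argument is slightly longer but has the advantage of working for an arbitrary compact (possibly disconnected) $\Lambda$, so it is the preferable version at this level of generality.
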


\begin{remark}\label{rem:compatibility-largeK}
In Problem~\ref{prob:KL}, for each finite $K$ we keep the absolute continuity constraint $m_t\ll\pi_t$. As $K\uparrow\infty$, the KL constraint becomes asymptotically nonbinding and the optimizers may concentrate (in the limit) on sets of $\pi_t$–measure zero; thus Problem~\ref{prob:KL} $\Gamma$–converges to Problem~\ref{prob:unconstrained}, and the minimum loss approaches $0$.
\end{remark}

\begin{remark}\label{progressiveversions}
Basic facts regarding measurability properties of flows of probability kernels are collected in the appendix, where references are also given. In the following, we will use some techniques from the theory of relaxed controls, see e.g. \cite{CarmonaDelarue2}, \cite{Ahmed}, \cite{Fattorini}. In particular, once a $\mathbb G$-adapted relaxed control $Q(\omega, dt, d\lambda)$ is disintegrated into a product $dtQ(t,\omega, d\lambda)$, the Proposition 6.41, p.584, \cite{CarmonaDelarue2}, allows to choose a $\mathbb G$-progressively measurable version of $(t,\omega)\to Q(t,\omega, d\lambda)$, hence we do not always differentiate between adaptedness and progressiveness below.  
\end{remark}

\begin{lemma}\label{lem:KL-meas}
Equip $\mathcal P(\Lambda)$ with the weak topology. The map
$(\mu,\nu)\mapsto \mathrm{KL}(\mu\|\nu)$, extended by $+\infty$ when $\mu\not\ll\nu$,
is Borel and lower semicontinuous on $\mathcal P(\Lambda)\times\mathcal P(\Lambda)$.
Consequently, if $m,\pi$ are  $\mathbb G$-progressively measurable kernels, then
$(t,\omega)\mapsto \mathrm{KL}(m_t(\omega)\|\pi_t(\omega))$ is $\mathcal P(\mathbb G)$–measurable.
\end{lemma}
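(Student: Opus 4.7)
The plan is to first establish lower semicontinuity of the KL functional on $\mathcal P(\Lambda)\times\mathcal P(\Lambda)$ via the Donsker--Varadhan variational formula, then to deduce Borel measurability as an automatic consequence, and finally to obtain the progressive measurability of $(t,\omega)\mapsto \mathrm{KL}(m_t(\omega)\|\pi_t(\omega))$ by composition of measurable maps.

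First, I would invoke the Donsker--Varadhan variational representation
\[
\mathrm{KL}(\mu\|\nu)\;=\;\sup_{f\in C(\Lambda)} \Big\{\int_\Lambda f\,d\mu \;-\; \log \int_\Lambda e^f\,d\nu\Big\},
\]
valid on the compact metric space $\Lambda$, where $C(\Lambda)=C_b(\Lambda)$. For each fixed $f\in C(\Lambda)$, the map $\mu\mapsto \int f\,d\mu$ is continuous in the weak topology, and since $\nu\mapsto \int e^f\,d\nu$ is weakly continuous and strictly positive, its logarithm is continuous in $\nu$. Hence, for each $f$, the functional $(\mu,\nu)\mapsto \int f\,d\mu - \log\int e^f\,d\nu$ is jointly weakly continuous. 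Being a pointwise supremum of a family of jointly continuous functionals, $\mathrm{KL}(\cdot\|\cdot)$ is lower semicontinuous on $\mathcal P(\Lambda)\times\mathcal P(\Lambda)$, and hence Borel measurable since its sublevel sets $\{\mathrm{KL}\le c\}$ are closed.

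Second, for the progressive measurability conclusion, I would argue by composition. Since $\mathcal P(\Lambda)$ equipped with the weak topology is a Polish space (as $\Lambda$ is compact metric), the Borel $\sigma$-algebra on $\mathcal P(\Lambda)\times\mathcal P(\Lambda)$ agrees with the product of the Borel $\sigma$-algebras on the two factors. The joint map $(t,\omega)\mapsto(m_t(\omega),\pi_t(\omega))$ is therefore $\mathcal P(\mathbb G)$-measurable into $\mathcal P(\Lambda)\times\mathcal P(\Lambda)$ endowed with its Borel $\sigma$-algebra, because both components are $\mathbb G$-progressively measurable by assumption. Composing with the Borel map $\mathrm{KL}$ then yields the desired $\mathcal P(\mathbb G)$-measurability of $(t,\omega)\mapsto \mathrm{KL}(m_t(\omega)\|\pi_t(\omega))$.

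The main (and essentially the only substantive) obstacle is confirming the identification of the Donsker--Varadhan supremum over $C(\Lambda)$ with the measure-theoretic relative entropy, including the convention $\mathrm{KL}(\mu\|\nu)=+\infty$ when $\mu\not\ll\nu$. On a Polish space this identity is classical: in the absolutely continuous case one recovers the integral $\int \log(d\mu/d\nu)\,d\mu$ by taking $f$ to be a truncation of $\log(d\mu/d\nu)$ and then approximating in $C(\Lambda)$, while if $\mu$ charges a $\nu$-null Borel set $A$, one exhibits continuous approximants to $c\,\mathbf{1}_A$ with $c\to\infty$ to drive the variational expression to $+\infty$. I would simply invoke this identification from standard references such as \cite{DupuisEllis, DemboZeitouni} rather than re-derive it.
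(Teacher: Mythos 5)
Your proof is correct and complete, but it takes a genuinely different route than the paper: the paper's entire proof is a citation of van Erven and Harremo\"{e}s (Theorem~19), whose joint lower semicontinuity of R\'enyi/KL divergence in the weak topology is obtained by approximating through finite measurable partitions and passing to the supremum. You instead invoke the Donsker--Varadhan variational formula
\[
\mathrm{KL}(\mu\|\nu)=\sup_{f\in C(\Lambda)}\Big\{\int_\Lambda f\,d\mu-\log\int_\Lambda e^{f}\,d\nu\Big\}
\]
and observe that a pointwise supremum of jointly weakly continuous functionals is lower semicontinuous. This is slicker, \emph{provided} one accepts that the supremum over $C(\Lambda)$ (rather than over bounded measurable functions) already recovers the extended-real-valued $\mathrm{KL}$, including $+\infty$ when $\mu\not\ll\nu$. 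You correctly flag this as the single substantive point and defer it to standard references; on a Polish (here compact metric) space it is classical, though the $+\infty$ case requires a small amount of care — one must choose the Urysohn bump $f_c$ \emph{after} fixing the truncation level $c$ (e.g.\ supported on an open $G_c\supset F$ with $\nu(G_c)$ super-exponentially small in $c$), since for a fixed $f$ the $\log\int e^{cf}\,d\nu$ term grows like $c$ and can cancel the linear gain $c\int f\,d\mu$. Your composition argument for the progressive-measurability consequence is sound and usefully spells out a step the paper's one-line citation leaves entirely implicit: $\mathcal P(\Lambda)$ is Polish (indeed compact metric), so $\mathcal B(\mathcal P(\Lambda)\times\mathcal P(\Lambda))=\mathcal B(\mathcal P(\Lambda))\otimes\mathcal B(\mathcal P(\Lambda))$, whence $(t,\omega)\mapsto(m_t(\omega),\pi_t(\omega))$ is jointly $\mathcal P(\mathbb G)$-measurable, and postcomposition with the Borel map $\mathrm{KL}$ gives the claim.
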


\begin{proof}
    See Tim van Erven, Peter Harremoës, \cite{ErvenHarremoes}, Theorem 19.
\end{proof}

For each $(t,\omega)$ the map $\mu\mapsto \mathrm{KL}(\mu\|\pi_t(\omega))$ is proper, strictly convex, and lower semicontinuous on $\mathcal P(\Lambda)$ (for the weak topology). Since (A1) ensures that $m\mapsto \bar\rho_t(m)$ is continuous and bounded on $\mathcal P(\Lambda)$, integrands below will be normal convex integrands in the sense of Rockafellar; this allows the interchange of infimum and integral used later, see \cite{Bonnans}, Section 3.2.

For $F:\mathcal P(\Lambda)\to\mathbb R$, the (linear) functional derivative $\frac{\delta F}{\delta m}(m,\lambda)$ is characterized by
\[
\frac{d}{d\varepsilon}F\big((1-\varepsilon)m+\varepsilon m'\big)\Big|_{\varepsilon=0}
=\int_{\Lambda}\frac{\delta F}{\delta m}(m,\lambda)\,(m'-m)(d\lambda),
\quad \forall\, m'\in\mathcal P(\Lambda),
\]
and we normalize it by $\int_\Lambda \frac{\delta F}{\delta m}(m,\lambda)\,m(d\lambda)=0$.
For the linear map $\bar\rho_t(m)=\int\rho_t^\lambda\,m(d\lambda)$ we have
\begin{equation}\label{eq:lions-bar-rho}
\frac{\delta \bar\rho_t}{\delta m}(m,\lambda)=\rho_t^\lambda-\bar\rho_t(m).
\end{equation}

\subsection{Existence of the Minimizing Flow}
\label{subsec:existence-duality}

\begin{theorem}\label{thm:existence and scalar}
Under Assumption~\ref{ass:data}, Problem~\ref{prob:KL} admits a solution $m^{\ast}$.
Moreover, there exists a scalar $\alpha^{\ast}\ge 0$ such that $m^{\ast}$ minimizes the Lagrangian
\begin{equation}
\label{eq:lagrangian}
\mathbb E\!\int_0^T \Big[\tfrac{\gamma}{2}\big(\bar\rho_t(m_t)-\widehat a_t\big)^2+\alpha^{\ast}\,\mathrm{KL}(m_t\|\pi_t)\Big]\,dt,
\end{equation}
and complementary slackness holds:
\begin{equation}
\label{eq:comp-slackness}
\alpha^{\ast}\Big(\mathcal D_{\mathrm{KL}}(m^{\ast}\|\pi)-K\Big)=0.
\end{equation}
\end{theorem}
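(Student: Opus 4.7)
My plan is to treat this as a convex optimization problem with a single scalar inequality constraint and proceed in two stages: first prove existence of a minimizer by the direct method, then derive the Lagrangian characterization by convex duality under Slater's condition. Convexity is automatic: $\mathcal{L}$ is the time-expected integral of a convex quadratic in the \emph{linear} functional $\bar\rho_t(m)$, and $\{m : \mathcal{D}_{\mathrm{KL}}(m\|\pi) \le K\}$ is convex because $\mu \mapsto \mathrm{KL}(\mu\|\pi_t)$ is convex.

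For existence, I first observe that the feasible set is nonempty, since $m = \pi$ gives $\mathcal{D}_{\mathrm{KL}}(\pi\|\pi) = 0 \le K$, and $\mathcal{L} \ge 0$. I would take a minimizing sequence $\{m^{(k)}\}$ satisfying \eqref{eq:KL-budget-constraint}, and represent each $m^{(k)}$ as a relaxed control $M^{(k)}(\omega, dt, d\lambda) := dt\, m^{(k)}_t(\omega, d\lambda)$ on $[0,T] \times \Lambda$. Since $\Lambda$ is compact metric, the set of $\mathbb{G}$-adapted relaxed controls is compact in the stable (Young-measure) topology, so I may extract a subsequence converging to some $M^*$ and invoke the disintegration result cited in Remark~\ref{progressiveversions} to obtain a $\mathbb{G}$-progressively measurable kernel flow $m^*$ with $M^*(\omega, dt, d\lambda) = dt\, m^*_t(\omega, d\lambda)$. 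Lower semicontinuity then closes the argument. Using $f(\omega,t,\lambda) = \rho^\lambda_t(\omega)\,g(\omega,t)$ (bounded measurable in $(\omega,t)$ and continuous in $\lambda$ by (A1)) as a stable-convergence test function for any bounded measurable $g$, I obtain $\bar\rho_t(m^{(k)}_t) \to \bar\rho_t(m^*_t)$ weakly in $L^2(\Omega\times[0,T])$. Weak $L^2$ lower semicontinuity of the convex quadratic $u \mapsto \tfrac{\gamma}{2}(u - \widehat a_t)^2$, together with (A2), then yields $\mathcal{L}(m^*) \le \liminf_k \mathcal{L}(m^{(k)})$. For the KL constraint, joint lower semicontinuity from Lemma~\ref{lem:KL-meas} combined with Fatou's lemma on $\Omega \times [0,T]$ gives $\mathcal{D}_{\mathrm{KL}}(m^*\|\pi) \le \liminf_k \mathcal{D}_{\mathrm{KL}}(m^{(k)}\|\pi) \le K$, so $m^*$ is feasible and optimal.

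For the Lagrangian characterization, set $g(m) := \mathcal{D}_{\mathrm{KL}}(m\|\pi) - K$; the point $m = \pi$ satisfies $g(\pi) = -K < 0$ strictly, so Slater's condition holds. The Kuhn--Tucker theorem for convex programming in infinite dimensions (e.g.\ Rockafellar's duality theorem, or Ekeland--Temam Ch.~III) then produces a multiplier $\alpha^* \ge 0$ such that $m^*$ minimizes the Lagrangian \eqref{eq:lagrangian} over all $\mathbb{G}$-progressively measurable flows with $m_t \ll \pi_t$, together with complementary slackness \eqref{eq:comp-slackness}: if the KL constraint is inactive at $m^*$, we may take $\alpha^* = 0$; otherwise $\alpha^*$ is a maximizer of the dual functional $\alpha \mapsto \inf_m\{\mathcal{L}(m) + \alpha(\mathcal{D}_{\mathrm{KL}}(m\|\pi) - K)\}$, which is finite and attained by Slater.

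The main obstacle is the measurability-preservation step in the compactness argument: weak (stable) limits of $\mathbb{G}$-adapted random measures must themselves remain $\mathbb{G}$-adapted and admit a $\mathbb{G}$-progressively measurable disintegration. This is precisely what the relaxed-controls machinery cited in Remark~\ref{progressiveversions} handles, but it must be applied carefully because the loss is only weakly continuous in $m$ (via the boundedness and continuity of $\rho$ from (A1)), not continuous in the uniform topology on measures; a related subtlety is checking that the set $\{m_t \ll \pi_t\ \text{a.s.\ for a.e.\ }t\}$ is closed in the stable topology, which follows once again from the lower semicontinuity of $\mathrm{KL}(\cdot\|\pi_t)$ as recorded in Lemma~\ref{lem:KL-meas}.
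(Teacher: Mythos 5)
Your proposal takes a genuinely different route from the paper in the existence step. The paper passes immediately to the Radon--Nikod\'ym densities $f_t=dm_t/d\pi_t$ and works in $L^1(\Pi)$ with $\Pi=\mathbb P\otimes dt\otimes\pi_t$: the KL bound gives uniform integrability (de la Vall\'ee Poussin), Dunford--Pettis yields an $L^1$-weakly convergent subsequence, and weak closedness of the constraint set plus weak l.s.c.\ of the cost follows from convexity of $u\mapsto u\log u$ and of the quadratic integrand. (The paper also runs a preliminary truncation $\widehat a^{(n)}$ before removing it by Fatou.) You instead view each $m^{(k)}$ as a $\mathbb G$-adapted relaxed control / Young measure on $[0,T]\times\Lambda$, use stable compactness, and then disintegrate the limit back into a progressively measurable kernel flow via Remark~\ref{progressiveversions}. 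Your handling of the loss functional is actually cleaner: because $|\bar\rho_t(m_t)|\le\sup_\lambda|\rho_t^\lambda|$ is uniformly bounded by (A1), stable convergence upgrades to weak $L^2(\Omega\times[0,T])$ convergence of $\bar\rho(m^{(k)})$, and the $L^2$-norm is weakly l.s.c., so the truncation of $\widehat a$ is not needed.

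However, there is a genuine gap in your treatment of the KL constraint. You argue ``joint lower semicontinuity from Lemma~\ref{lem:KL-meas} combined with Fatou's lemma on $\Omega\times[0,T]$'' to conclude $\mathcal D_{\mathrm{KL}}(m^*\|\pi)\le\liminf_k\mathcal D_{\mathrm{KL}}(m^{(k)}\|\pi)$. Fatou plus \emph{pointwise} l.s.c.\ of $(\mu,\nu)\mapsto\mathrm{KL}(\mu\|\nu)$ would require that the fiber measures $m^{(k)}_t(\omega,\cdot)$ converge weakly to $m^*_t(\omega,\cdot)$ for a.e.\ $(t,\omega)$; but stable convergence of the joint relaxed controls $M^{(k)}\to M^*$ does \emph{not} imply this kind of fiberwise convergence of the disintegrations. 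So your step as stated does not follow. The conclusion is nevertheless correct and the fix is standard: identify $\mathcal D_{\mathrm{KL}}(m\|\pi)$ with the \emph{joint} relative entropy $\mathrm{KL}(M\|\Pi)$ (chain rule / disintegration), which as a functional of the joint measure $M$ is weakly l.s.c.\ under stable convergence; equivalently, move to densities $f=dM/d\Pi$ as the paper does and use $L^1$-weak l.s.c.\ of $\int f\log f\,d\Pi$. Once you patch this step, your existence argument is sound, and your duality step (Slater via the interior point $m=\pi$, then Kuhn--Tucker in infinite dimensions) matches the paper's.
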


\begin{proof}
We begin by setting up some notation. The proof is technical but is in fact a modification of standard arguments used for relaxed controls, which can be found in \cite{Ahmed, Fattorini}. Write the optimization over flows \(m=(m_t)_{t\in[0,T]}\) with the absolute continuity constraint \(m_t\ll \pi_t\) by means of the Radon–Nikodým densities
\[
f_t(\omega,\lambda)\ :=\ \frac{dm_t(\omega,\cdot)}{d\pi_t(\omega,\cdot)}(\lambda),\qquad f_t\ge 0,\qquad \int_{\Lambda} f_t(\omega,\lambda)\,\pi_t(\omega,d\lambda)=1\ \ \text{a.s.\ for a.e.\ }t .
\]
Set
\[
\mathsf X:=\Omega\times[0,T],\qquad \mathbb Q:=\mathbb P\otimes dt,\qquad
\Pi(d\omega,dt,d\lambda):=\mathbb P(d\omega)\,dt\,\pi_t(\omega,d\lambda).
\]

Note $\Pi$ is well-defined on $(\Omega\times[0,T]\times\Lambda, \mathcal P(\mathbb G)\otimes\mathcal B(\Lambda)$.

Then \(f\in L^1(\Pi)\) and the integrated KL constraint can be written as
\[
\mathcal D_{\mathrm{KL}}(m\|\pi)\,=\,\int f\log f\,d\Pi .
\]
Let \(M_t(\omega):=\sup_{\lambda\in\Lambda}|\rho_t^\lambda(\omega)|\); by (A1), \(M\in L^1(\mathbb Q)\) and \(|\bar\rho_t(m_t)|=\Big|\int \rho_t^\lambda f_t\,\pi_t(d\lambda)\Big|\le M_t\) a.s.\ for a.e.\ \(t\).
Finally, denote
\[
\mathcal A_K\ :=\ \Big\{f\in L^1(\Pi):\ f\ge 0,\ \int_{\Lambda} f_t(\omega,\lambda)\,\pi_t(\omega,d\lambda)=1\ \ \mathbb Q\text{-a.e.},\ \int f\log f\,d\Pi\le K\Big\}.
\]

Feasibility holds since \(f\equiv 1\in\mathcal A_K\) (this corresponds to \(m\equiv\pi\)), and \(\mathcal D_{\mathrm{KL}}(\pi\|\pi)=0<K\), i.e. Slater's condition is satisfied.

\medskip
We now show compactness of the feasible set. Let \(\{f^n\}\subset\mathcal A_K\). Since \(\int_\Lambda f_t^n\,d\pi_t=1\) \(\mathbb Q\)-a.e., we have \(\|f^n\|_{L^1(\Pi)}=\int f^n\,d\Pi=\int 1\,d\mathbb Q=\mathbb Q(\mathsf X)<\infty\) for all \(n\). Moreover, the KL bound \(\int f^n\log f^n\,d\Pi\le K\) gives uniform integrability of \(\{f^n\}\) by the de la Vallée Poussin criterion (use \(\Phi(u)=u\log u-u+1\), which is superlinear and \(\int \Phi(f^n)\,d\Pi\le K+\mathbb Q(\mathsf X)\)). Hence, by the Dunford--Pettis theorem, there is a subsequence (not relabelled) of \(\{f^n\}\) and \(f\in L^1(\Pi)\) with \(f^n\rightharpoonup f\) in \(L^1(\Pi)\), i.e. \(\mathcal A_K\) is relatively weakly compact in \(L^1(\Pi)\).

We next verify weak closedness. Let \(f^n\in\mathcal A_K\) with \(f^n\rightharpoonup f\) in \(L^1(\Pi)\).
Nonnegativity is preserved under \(L^1\)–weak limits. The normalization constraint passes to the limit because for every \(\varphi\in L^\infty(\mathsf X)\),
\[
\int_{\mathsf X}\!\varphi(\omega,t)\Big(\int_{\Lambda} f^n_t(\omega,\lambda)\,\pi_t(\omega,d\lambda)\Big)\,d\mathbb Q
=\int \varphi(\omega,t)\,f^n(\omega,t,\lambda)\,d\Pi
\ \longrightarrow\ 
\int \varphi\,f\,d\Pi,
\]
which implies \(\int f\,\pi_t(d\lambda)=1\) \(\mathbb Q\)-a.e. Likewise, the KL constraint is weakly lower semicontinuous: since \(u\mapsto u\log u\) is convex l.s.c.\ on \([0,\infty)\), the functional \(f\mapsto\int f\log f\,d\Pi\) is l.s.c.\ for the \(L^1\)–weak topology, hence \(\int f\log f\,d\Pi\le \liminf_n\int f^n\log f^n\,d\Pi\le K\). We also note that, by standard arguments, integrals of the limiting function $f$ with respect to $\pi_t$ are progressively measurable. Thus \(\mathcal A_K\) is weakly closed, hence weakly compact in \(L^1(\Pi)\).

\medskip
First we prove existence for bounded targets.
For \(n\in\mathbb N\), set the truncation \(\widehat a^{(n)}_t:=(-n)\vee(\widehat a_t\wedge n)\). Define the truncated functional
\[
\mathcal L_n(f)\ :=\ \mathbb E\!\int_0^T \frac{\gamma}{2}\Big(\textstyle\int_\Lambda \rho_t^\lambda f_t(\cdot,\lambda)\,\pi_t(d\lambda)-\widehat a^{(n)}_t\Big)^2 dt
=\frac{\gamma}{2}\int_{\mathsf X}\Big(S_f(\omega,t)-\widehat a^{(n)}_t(\omega)\Big)^2\,d\mathbb Q,
\]
where \(S_f(\omega,t):=\int_\Lambda \rho_t^\lambda(\omega) f_t(\omega,\lambda)\,\pi_t(\omega,d\lambda)\) (note this is just $\bar{\rho}_t$).
For each \(n\), we claim that \(\mathcal L_n\) is weakly lower semicontinuous on \(\mathcal A_K\). Define $\Phi_n(S_f)\coloneqq\mathcal L_n(f)$ and note that since $x\mapsto (x-\widehat a^{(n)})^2$ is convex and continuous in $x$, then by it is standard that $\Phi_n(S_f)$ is weakly lower semicontinuous as a functional of $S_f$, hence it suffices to show that \(S_{f^k}\rightharpoonup S_f\) in \(L^1(\mathsf X)\) as \(f^k\rightharpoonup f\) in \(L^1(\Pi)\). Indeed, if \(f^k\rightharpoonup f\) in \(L^1(\Pi)\), then for any \(\psi\in L^\infty(\mathsf X)\),
\[
\int_{\mathsf X}\psi S_{f^k}\,d\mathbb Q=\int \psi(\omega,t)\rho_t^\lambda(\omega)\,f^k(\omega,t,\lambda)\,d\Pi
\ \longrightarrow\ \int \psi(\omega,t)\rho_t^\lambda(\omega)\,f(\omega,t,\lambda)\,d\Pi
=\int_{\mathsf X}\psi S_f\,d\mathbb Q,
\]
because $(\omega,t,\lambda)\mapsto \psi(\omega,t)\rho_t^\lambda(\omega)\in L^\infty(\Pi)$ by (A1). Hence \(S_{f^k}\rightharpoonup S_f\) in \(L^1(\mathsf X)\).

By the above, \(\mathcal A_K\) is weakly compact. Since \(\mathcal L_n\) is weakly lower semicontinuous on \(\mathcal A_K\), there exists a minimizer \(f^{(n)}\in\mathcal A_K\) of \(\mathcal L_n\).

\medskip
We now remove the boundedness assumption. 
By Step~1, the sequence \(\{f^{(n)}\}_{n\ge 1}\subset\mathcal A_K\) is relatively weakly compact in \(L^1(\Pi)\). Extract a subsequence (not relabelled) such that \(f^{(n)}\rightharpoonup f^\ast\) in \(L^1(\Pi)\). Then \(f^\ast\in\mathcal A_K\) by the above.

For each fixed $f$, we have by elementary algebra
\[
\big|(S_f-\widehat a^{(n)})^2-(S_f-\widehat a)^2\big|
\le 2\,|S_f-\widehat a|\,|\widehat a-\widehat a^{(n)}|+(\widehat a-\widehat a^{(n)})^2.
\]
By (A1)–(A2), $S_f\in L^2(\mathsf X)$ (since $|S_f|\le M\in L^2$) and $\widehat a^{(n)}\to\widehat a$ in $L^2(\mathsf X)$. Hence the right-hand side converges to $0$ in $L^1(\mathsf X)$, and thus
\[
\mathcal L_n(f)\ \xrightarrow[n\to\infty]{}\ \mathcal L(f)
:=\frac{\gamma}{2}\int_{\mathsf X}\big(S_f-\widehat a\big)^2\,d\mathbb Q.
\]
Since $f^{(n)}\rightharpoonup f^\ast$ in $L^1(\Pi)$ and $f^\ast\in\mathcal A_K$, Fatou and the pointwise convergence of $\mathcal L_n$ yield $\mathcal L(f^\ast)=\inf_{\mathcal A_K}\mathcal L$.

Therefore, with \(\ell_n:=\inf_{g\in\mathcal A_K}\mathcal L_n(g)=\mathcal L_n(f^{(n)})\) and \(\ell:=\inf_{g\in\mathcal A_K}\mathcal L(g)\), we have \(\ell_n\downarrow L_\infty:=\lim_{n\to\infty}\ell_n\) and, by the previous result and Fatou,
\[
\mathcal L(f^\ast)\ \le\ \liminf_{n\to\infty}\mathcal L_n(f^{(n)})\ =\ L_\infty.
\]
On the other hand, for any \(g\in\mathcal A_K\), \(\mathcal L(g)=\lim_{n\to\infty}\mathcal L_n(g)\ge \lim_{n\to\infty}\ell_n=L_\infty\); taking the infimum over \(g\) yields \(\ell\ge L_\infty\). Combining the two expressions,
\[
\ell\ \le\ \mathcal L(f^\ast)\ \le\ L_\infty\ \le\ \ell,
\]
so \(\mathcal L(f^\ast)=\ell\). In terms of kernels \(m^\ast_t(\omega,d\lambda):=f^\ast_t(\omega,\lambda)\,\pi_t(\omega,d\lambda)\), this proves existence of an optimal adapted flow \(m^\ast\) for Problem~\ref{prob:KL}.

\medskip
Finally, we show strong duality and the existence of the Lagrange multiplier. The problem is convex (the map \(m\mapsto\bar\rho_t(m)\) is linear and \(x\mapsto x^2\) is convex; KL is convex) and proper, and Slater’s condition holds. Therefore, classical convex duality yields strong duality and existence of a Lagrange multiplier $\alpha^\ast\ge 0$ such that $m^\ast$ minimizes the Lagrangian
\[
\mathbb E\!\int_0^T\!\Big[\tfrac{\gamma}{2}\big(\bar\rho_t(m_t)-\widehat a_t\big)^2+\alpha^\ast\,\mathrm{KL}(m_t\|\pi_t)\Big]\,dt
\]
over all $\mathbb G$–progressively measurable flows $m$ with $m_t\ll\pi_t$ a.s.\ for a.e.\ $t$, and hence over all progressively measurable flows, since $\mathrm{KL}(m_t\|\pi_t)=+\infty$ when $m_t\not\ll\pi_t$. Moreover, complementary slackness holds:
\(\alpha^\ast\big(\mathcal D_{\mathrm{KL}}(m^\ast\|\pi)-K\big)=0\). 
\end{proof}

The Lagrangian integrand is a normal convex integrand in \((t,\omega,m_t)\) in the sense of Rockafellar, hence the minimization problem separates pointwise in \((t,\omega)\) into independent convex problems over \(\mathcal P(\Lambda)\). 
Fix $(t,\omega)$ and abbreviate $a=\widehat a_t(\omega)$, $\rho(\cdot)=\rho_t^\cdot(\omega)$, and $\pi=\pi_t(\omega)$.
The pointwise problem is
\begin{equation}
\label{eq:pointwise-problem}
\begin{aligned}
\min_{\mu\in\mathcal P(\Lambda),\,\mu\ll\pi}\ f(\mu)
&:=\frac{\gamma}{2}\Big(\int_{\Lambda}\rho\,d\mu - a\Big)^2+\alpha\,\mathrm{KL}(\mu\|\pi),
\qquad \alpha:=\alpha^{\ast}\ge 0.
\end{aligned}
\end{equation}

\subsection{Identification of the Form of the Minimizing Measures}
\label{subsec:first-variation}

Let $b(\mu):=\int \rho\,d\mu$.
By equation~\ref{eq:lions-bar-rho} and the Gâteaux derivative of KL, the Lions first variation of $f$ at $\mu$ is
\begin{equation}
\label{eq:first-variation}
\frac{\delta f}{\delta \mu}(\mu,\lambda)
=\gamma\big(b(\mu)-a\big)\,\rho(\lambda)
\ +\ \alpha\Big(\log\tfrac{d\mu}{d\pi}(\lambda)+1\Big)\ +\ \mathrm{const},
\end{equation}
with the additive constant fixed by normalization.

At any minimizer $\mu^{\ast}$ (necessarily with strictly positive density when $\alpha>0$), the KKT optimality condition yields
\begin{equation}
\label{eq:euler-lagrange}
\gamma\big(b^{\ast}-a\big)\,\rho(\lambda)
+\alpha\Big(\log\tfrac{d\mu^{\ast}}{d\pi}(\lambda)+1\Big)
=\mathrm{const},
\qquad b^{\ast}:=\int \rho\,d\mu^{\ast}.
\end{equation}

\begin{proposition}\label{prop:gibbs-solution}
If $\alpha>0$, the unique minimizer of \eqref{eq:pointwise-problem} is the Gibbs measure
\begin{equation}
\label{eq:gibbs-tilt}
\begin{aligned}
d\mu^{\ast}_\theta(\lambda)
&=\frac{e^{-\theta\,\rho(\lambda)}}{Z(\theta)}\,\pi(d\lambda),\\
Z(\theta)
&:=\int_{\Lambda} e^{-\theta\rho(\lambda)}\,\pi(d\lambda),\\
\theta
&=\frac{\gamma}{\alpha}\,(b^{\ast}-a),
\end{aligned}
\end{equation}
with mean
\begin{equation}
\label{eq:psi}
\begin{aligned}
\psi(\theta)
&:=\int \rho\,d\mu^{\ast}_\theta
=\frac{\displaystyle\int_{\Lambda} \rho(\lambda) e^{-\theta\rho(\lambda)}\,\pi(d\lambda)}
{\displaystyle\int_{\Lambda} e^{-\theta\rho(\lambda)}\,\pi(d\lambda)}
=-\frac{d}{d\theta}\log Z(\theta).
\end{aligned}
\end{equation}
The scalar $\theta$ is the unique solution of
\begin{equation}
\label{eq:theta-fixed-pt}
\psi(\theta)=a+\frac{\alpha}{\gamma}\,\theta.
\end{equation}
\end{proposition}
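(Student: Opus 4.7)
The plan is to exploit strict convexity to obtain existence and uniqueness of the minimizer of $f$, read off the Gibbs form directly from the Euler--Lagrange identity \eqref{eq:euler-lagrange}, and then analyze the self-consistency equation via properties of the log--partition function $\log Z(\theta)$.

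First, since $\alpha>0$, the functional $f$ in \eqref{eq:pointwise-problem} is the sum of the convex term $\mu\mapsto \tfrac{\gamma}{2}(b(\mu)-a)^2$ and the strictly convex term $\alpha\,\mathrm{KL}(\cdot\|\pi)$, hence is strictly convex on $\{\mu\ll\pi\}$; thus it admits at most one minimizer. Existence follows by the direct method on the fixed $(t,\omega)$-slice: along a minimizing sequence, the KL values are uniformly bounded, so by de la Vallée Poussin and Dunford--Pettis the densities $d\mu_n/d\pi$ are uniformly integrable in $L^1(\pi)$ and admit a weak limit; this limit preserves the normalization and nonnegativity constraints by linearity, and joint lower semicontinuity of KL together with weak continuity of $\mu\mapsto b(\mu)$ (using boundedness of $\rho$ from Assumption~\ref{ass:data}(A1)) yields a minimizer. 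Alternatively, this is immediate from \Cref{thm:existence and scalar} applied fibrewise.

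Second, the KKT identity \eqref{eq:euler-lagrange}, with the additive constant playing the role of the Lagrange multiplier for the normalization $\int d\mu^\ast=1$, can be solved explicitly: rearranging yields $\log\!\tfrac{d\mu^\ast}{d\pi}(\lambda)=c-\theta\,\rho(\lambda)$ with $\theta:=(\gamma/\alpha)(b^\ast-a)$, and normalization forces $e^{c}=1/Z(\theta)$, delivering the Gibbs form \eqref{eq:gibbs-tilt}. Taking logarithms is legitimate because, under (A1) and (A3), $\rho$ is bounded and $\pi$ has full support, so $Z(\theta)$ is finite for every $\theta\in\mathbb R$ and the resulting density is strictly positive $\pi$-a.s.; conversely any probability measure of this exponential form is admissible.

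Finally, substituting $b^\ast=\psi(\theta)$ into the definition of $\theta$ yields the scalar fixed-point equation \eqref{eq:theta-fixed-pt}. The main obstacle, and the only nontrivial analytic step, is uniqueness of its solution: I would differentiate $\psi(\theta)=-\tfrac{d}{d\theta}\log Z(\theta)$ once more to obtain $\psi'(\theta)=-\Var_{\mu^\ast_\theta}(\rho)\le 0$, so $\psi$ is non-increasing, while the affine right-hand side $a+(\alpha/\gamma)\theta$ is strictly increasing; combining these two monotonicity facts immediately gives at most one intersection. Existence of the crossing follows from continuity of $\psi$, the a priori bounds $\inf_{\lambda\in\Lambda}\rho(\lambda)\le\psi(\theta)\le\sup_{\lambda\in\Lambda}\rho(\lambda)$, and the observation that the linear right-hand side sweeps through all of $\mathbb R$; uniqueness of $\theta$ is then consistent with the already established uniqueness of $\mu^\ast$.
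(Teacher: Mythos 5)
Your argument is correct, but it takes a genuinely different route from the paper's proof. You proceed ``forward'': establish existence by the direct method (de la Vall\'ee Poussin / Dunford--Pettis on the densities, weak lower semicontinuity of KL, weak continuity of $\mu\mapsto b(\mu)$), then invoke the Euler--Lagrange identity \eqref{eq:euler-lagrange} and rearrange it to read off the Gibbs form. The paper instead avoids the KKT machinery entirely and argues ``backward'' via the Donsker--Varadhan variational formula: for every $\theta$, $\mathrm{KL}(\mu\|\pi)\ge -\theta\,b(\mu)-\log Z(\theta)$ with equality iff $\mu=\mu^\ast_\theta$, which yields $F(\mu)\ge\Phi_\theta(b(\mu))$ for a strictly convex quadratic $\Phi_\theta$; optimizing the right-hand side over $b(\mu)$ and then checking when all inequalities become equalities simultaneously delivers both the Gibbs form and the fixed-point equation. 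The trade-off is that your route tacitly relies on the formal derivation of \eqref{eq:euler-lagrange}, which presupposes that the minimizer's density is strictly positive $\pi$-a.s.\ (otherwise $\log\tfrac{d\mu^\ast}{d\pi}$ is not defined) and that the Lions/G\^ateaux first variation of KL can be taken at the optimum --- a point the paper notes only parenthetically and circumvents by the DV argument, which requires no differentiability and no a priori positivity. Your justification of positivity is a posteriori (once you have the Gibbs form it is clear), so there is a mild circularity that the paper's lower-bound argument avoids. Everything else --- strict convexity giving at most one minimizer, the monotonicity $\psi'(\theta)=-\mathrm{Var}_{\mu^\ast_\theta}(\rho)\le 0$ against the strictly increasing affine right-hand side, the range of $\psi$ sandwiched between $\inf_\lambda\rho$ and $\sup_\lambda\rho$, and the resulting existence and uniqueness of the crossing --- matches the paper's reasoning.
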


\begin{proof}
Fix $(t,\omega)$ and set $\rho:=\rho_t^\cdot(\omega)$, $\pi:=\pi_t(\omega)$ and $a:=\widehat a_t(\omega)$; we suppress the indices for clarity. For $\mu\ll\pi$ set
\[
F(\mu):=\frac{\gamma}{2}\big(b(\mu)-a\big)^2+\alpha\,\mathrm{KL}(\mu\|\pi),
\qquad b(\mu):=\int_\Lambda \rho\,d\mu.
\]
Assume $\alpha>0$. By the boundedness of $\rho$ (Assumption~\ref{ass:data}) the partition function
\[
Z(\theta):=\int_{\Lambda} e^{-\theta\rho(\lambda)}\,\pi(d\lambda)\in(0,\infty),\qquad \theta\in\mathbb R,
\]
is finite and $C^1$ in $\theta$, with
\[
\psi(\theta):=\int \rho\,d\mu_\theta^\ast
=\frac{\int \rho e^{-\theta\rho}\,d\pi}{\int e^{-\theta\rho}\,d\pi}
=-\frac{d}{d\theta}\log Z(\theta),
\]
where $\mu_\theta^\ast$ is defined by \eqref{eq:gibbs-tilt}.

\smallskip

The variational formula for KL divergence, see, e.g., \cite{DupuisEllis}, p.29, Lemma 1.4.3, asserts that for any bounded measurable $\varphi$ and any $\mu\ll\pi$,
\[
\mathrm{KL}(\mu\|\pi)\ \ge\ \int \varphi\,d\mu-\log\!\int e^{\varphi}\,d\pi,
\]
with equality iff $\frac{d\mu}{d\pi}\propto e^{\varphi}$. Applying this with $\varphi=-\theta\rho$ gives, for all $\theta\in\mathbb R$ and $\mu\ll\pi$,
\begin{equation}\label{eq:entropy-ineq}
\mathrm{KL}(\mu\|\pi)\ \ge\ -\theta\,b(\mu)-\log Z(\theta),
\end{equation}
with equality iff $\mu=\mu_\theta^\ast$ as in \eqref{eq:gibbs-tilt}. Hence
\[
F(\mu)\ \ge\ \alpha\big(-\theta\,b(\mu)-\log Z(\theta)\big)+\frac{\gamma}{2}\big(b(\mu)-a\big)^2
=: \Phi_\theta\big(b(\mu)\big).
\]

For fixed $\theta$, $\Phi_\theta(x)$ is a strictly convex quadratic in $x$ with unique minimizer
\[
x^\ast(\theta)=a+\frac{\alpha}{\gamma}\,\theta,
\]
and minimal value
\begin{equation}\label{eq:theta-lb}
\inf_{x\in\mathbb R}\Phi_\theta(x)
=-\alpha\Big(\theta a+\log Z(\theta)\Big)-\frac{\alpha^2}{2\gamma}\,\theta^2.
\end{equation}
Therefore, for all $\mu\ll\pi$ and $\theta\in\mathbb R$,
\begin{equation}\label{eq:global-lb}
F(\mu)\ \ge\ -\alpha\Big(\theta a+\log Z(\theta)\Big)-\frac{\alpha^2}{2\gamma}\,\theta^2.
\end{equation}

Equality in \eqref{eq:global-lb} for some $\theta$ occurs iff simultaneously:
(i) equality holds in \eqref{eq:entropy-ineq}, i.e.\ $\mu=\mu_\theta^\ast$, and
(ii) $b(\mu)=x^\ast(\theta)$, i.e.\ $b(\mu_\theta^\ast)=a+\frac{\alpha}{\gamma}\theta$.
Since $b(\mu_\theta^\ast)=\psi(\theta)$, condition (ii) is equivalent to the fixed–point equation
\begin{equation}\label{eq:fixed-point-proof}
\psi(\theta)=a+\frac{\alpha}{\gamma}\,\theta,
\end{equation}
which is precisely \eqref{eq:theta-fixed-pt}. If $\theta^\ast$ solves \eqref{eq:fixed-point-proof}, then $\mu_{\theta^\ast}^\ast$ achieves equality in \eqref{eq:global-lb} and is therefore optimal for the primal problem, with optimizer of the form \eqref{eq:gibbs-tilt} and tilted mean given by \eqref{eq:psi}.

Because $\rho$ is bounded, $Z$ is $C^2$ and
\[
\psi'(\theta)=-\frac{d^2}{d\theta^2}\log Z(\theta)=-\mathrm{Var}_{\mu_\theta^\ast}(\rho)\le 0.
\]
Hence the function $g(\theta):=\psi(\theta)-a-\frac{\alpha}{\gamma}\theta$ is strictly decreasing:
\[
g'(\theta)=\psi'(\theta)-\frac{\alpha}{\gamma}\le -\frac{\alpha}{\gamma}<0.
\]
Moreover, $\psi(\theta)$ is bounded while the linear term diverges, so $g(\theta)\to +\infty$ as $\theta\to -\infty$ and $g(\theta)\to -\infty$ as $\theta\to +\infty$. By continuity there exists a unique $\theta^\ast$ solving \eqref{eq:fixed-point-proof}. For $\alpha>0$, $F$ is strictly convex on $\{\mu\ll\pi\}$ (sum of the strictly convex $\alpha\,\mathrm{KL}(\cdot\|\pi)$ and a convex function of the linear statistic $b(\mu)$), hence the minimizer is unique. Since $\mu_{\theta^\ast}^\ast$ is feasible and optimal, it is the unique minimizer.

This proves that for $\alpha>0$ the unique optimizer is the Gibbs measure \eqref{eq:gibbs-tilt}; its mean is $\psi(\theta)$ in \eqref{eq:psi}, and the parameter $\theta$ is uniquely determined by \eqref{eq:theta-fixed-pt}.
\end{proof}

\begin{lemma}\label{lem:unique-theta}
Under Assumption~\ref{ass:data}\,(A3), $\psi$ in \eqref{eq:psi} is continuous and nonincreasing on $\mathbb R$, with range
$\psi(\mathbb R)=[\mathrm{inf}_\lambda \rho(\lambda),\ \mathrm{sup}_\lambda \rho(\lambda)]$.
Hence \eqref{eq:theta-fixed-pt} admits a unique solution $\theta\in\mathbb R$.
\end{lemma}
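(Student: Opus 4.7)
The plan is to prove the three analytical properties of $\psi$ (continuity, monotonicity, and the identification of its range) and then derive uniqueness of the fixed point in \eqref{eq:theta-fixed-pt} by a direct intermediate--value argument. Throughout, I would lean on the boundedness of $\rho$ from Assumption~\ref{ass:data}(A1): for some $M<\infty$, $|\rho(\lambda)|\le M$ for all $\lambda\in\Lambda$.

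First I would establish continuity and smoothness: since $e^{-\theta\rho(\lambda)}\in[e^{-|\theta|M},e^{|\theta|M}]$ uniformly in $\lambda$, both $Z(\theta)$ and the numerator $\int\rho\,e^{-\theta\rho}\,d\pi$ are $C^{\infty}$ in $\theta$ by dominated convergence. Differentiating the identity $\psi(\theta)=-(\log Z)'(\theta)$ then gives the standard variance formula
\[
\psi'(\theta)\;=\;-(\log Z)''(\theta)\;=\;-\,\mathrm{Var}_{\mu^{\ast}_\theta}(\rho)\;\le\;0,
\]
which yields both continuity and monotonicity (strict unless $\rho$ is $\pi$--a.s.\ constant).

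Next I would identify the range by computing the limits of $\psi$ at $\pm\infty$. Set $r_-:=\inf_\lambda\rho(\lambda)$ and $r_+:=\sup_\lambda\rho(\lambda)$; strict convexity of the tilt already gives $\psi(\theta)\in(r_-,r_+)$ for every $\theta$ when $\rho$ is nonconstant. For the limit as $\theta\to+\infty$, I would fix $\varepsilon>0$, split $\Lambda=\{\rho\le r_-+\varepsilon\}\cup\{\rho>r_-+\varepsilon\}$, observe that the first set has positive $\pi$--measure (using the full support of $\pi$ in (A3) together with the continuity of $\rho$ on compact $\Lambda$), and carry out a Laplace--type estimate: as $\theta\to+\infty$ the contribution of $\{\rho>r_-+\varepsilon\}$ to both the numerator and the denominator of $\psi$ is exponentially suppressed relative to the contribution of $\{\rho\le r_-+\varepsilon\}$, so $\limsup_{\theta\to+\infty}\psi(\theta)\le r_-+\varepsilon$. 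Sending $\varepsilon\downarrow 0$ gives $\psi(\theta)\to r_-$, and the symmetric computation gives $\psi(\theta)\to r_+$ as $\theta\to-\infty$. Together with continuity and monotonicity, this yields the range claim.

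Finally, for uniqueness of $\theta$ in \eqref{eq:theta-fixed-pt} with $\alpha=\alpha^{\ast}>0$ as in the setting of Proposition~\ref{prop:gibbs-solution}, I would set $g(\theta):=\psi(\theta)-a-\tfrac{\alpha}{\gamma}\theta$ and observe
\[
g'(\theta)\;=\;\psi'(\theta)-\tfrac{\alpha}{\gamma}\;\le\;-\tfrac{\alpha}{\gamma}\;<\;0,
\]
so that $g$ is strictly decreasing. The limits just computed give $g(\theta)\to+\infty$ as $\theta\to-\infty$ and $g(\theta)\to-\infty$ as $\theta\to+\infty$, so the intermediate value theorem produces exactly one root. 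The hard step is the careful identification of the extremal limits of $\psi$: the Laplace--type concentration argument relies essentially on both the full support of $\pi$ from (A3) and the continuity/boundedness of $\rho$ from (A1), so that every neighbourhood of the extremizers of $\rho$ carries positive $\pi$--mass and the tilted normalising constant remains well behaved.
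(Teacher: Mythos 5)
Your proposal is correct and follows essentially the same route as the paper: dominated convergence for smoothness of $Z$, the variance identity $\psi'=-\mathrm{Var}_{\mu_\theta^\ast}(\rho)\le 0$ for monotonicity, a Laplace-type split $\{\rho\le r_-+\varepsilon\}\cup\{\rho>r_-+\varepsilon\}$ using full support of $\pi$ for the extremal limits, and the strictly decreasing map $g(\theta)=\psi(\theta)-a-(\alpha/\gamma)\theta$ with the intermediate value theorem for uniqueness. The only small differences are cosmetic: you invoke boundedness of $\rho$ from (A1) for the DCT step where the paper appeals to the finiteness of the log-moment generating function in (A3), and you correctly attribute the positive-mass step to (A3) (full support) and continuity of $\rho$ on compact $\Lambda$, which is cleaner than the paper's citation of (A4) at that point.
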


\begin{proof}
Fix $(t,\omega)$ and set $\rho:=\rho_t^\cdot(\omega)$ and $\pi:=\pi_t(\omega)$, as well as
\[
Z(\theta):=\int_\Lambda e^{-\theta\,\rho(\lambda)}\,\pi(d\lambda),\qquad
d\mu_\theta(\lambda):=\frac{e^{-\theta\,\rho(\lambda)}}{Z(\theta)}\,\pi(d\lambda),\qquad
\psi(\theta):=\int_\Lambda \rho\,d\mu_\theta.
\]
Write $m:=\mathrm{ess\,inf}_\lambda \rho(\lambda)$ and $M:=\mathrm{ess\,sup}_\lambda \rho(\lambda)$ (finite due to the assumptions).

By Assumption~\ref{ass:data}\,(A3), $\Phi(\eta):=\log\!\int e^{\eta \rho}\,d\pi$ is finite for all $\eta\in\mathbb R$. Hence $Z(\theta)=e^{\Phi(-\theta)}\in(0,\infty)$ for every $\theta\in\mathbb R$. Moreover, on any compact interval $I\subset\mathbb R$ there exists $c=\sup_{\theta\in I}|\theta|<\infty$ and, for any $\delta>0$,
\[
|\,\rho\,|e^{-\theta\rho}\ \le\ \tfrac1\delta e^{(c+\delta)|\rho|}\ \le\ \tfrac1\delta\big(e^{(c+\delta)\rho}+e^{-(c+\delta)\rho}\big),\qquad
\rho^2 e^{-\theta\rho}\ \le\ \tfrac{2}{\delta^2} e^{(c+\delta)|\rho|},
\]
so Assumption~\ref{ass:data}\,(A3) implies the right-hand sides are $\pi$–integrable. Dominated convergence then yields $Z\in C^2(\mathbb R)$ with
\[
Z'(\theta)=-\!\int \rho\,e^{-\theta\rho}\,d\pi,\qquad
Z''(\theta)=\!\int \rho^2 e^{-\theta\rho}\,d\pi.
\]
Let $A(\theta):=\log Z(\theta)$. Then $A'(\theta)=Z'(\theta)/Z(\theta)=-\psi(\theta)$ and
\[
A''(\theta)\ =\ \frac{Z''(\theta)}{Z(\theta)}-\Big(\frac{Z'(\theta)}{Z(\theta)}\Big)^2\ =\ \mathrm{Var}_{\mu_\theta}(\rho)\ \ge\ 0,
\]
so $\psi(\theta)=-A'(\theta)$ is continuous and nonincreasing on $\mathbb R$. If $\rho$ is not $\pi$–a.s.\ constant, then $\mu_\theta$ is equivalent to $\pi$ (its density $e^{-\theta\rho}/Z(\theta)$ is strictly positive $\pi$–a.s.), hence $\rho$ is not $\mu_\theta$–a.s.\ constant and $\mathrm{Var}_{\mu_\theta}(\rho)>0$ for all $\theta$, i.e.\ $\psi$ is strictly decreasing.

For every $\theta$, since $\mu_\theta$ is a probability measure and $\rho\in L^1(\mu_\theta)$,
\[
m\ \le\ \psi(\theta)\ \le\ M.
\]
We show $\lim_{\theta\to+\infty}\psi(\theta)=m$; the case $\theta\to-\infty$ is analogous (with limit $M$). Fix $\varepsilon>0$ and set
\[
A_\varepsilon:=\{\lambda:\rho(\lambda)\le m+\varepsilon\},\qquad B_\varepsilon:=\Lambda\setminus A_\varepsilon=\{\lambda:\rho(\lambda)> m+\varepsilon\}.
\]
By definition of $m=\mathrm{inf}_\lambda \rho(\lambda)$, $\pi(A_\varepsilon)>0$ due to (A4). Decompose, for $\theta>0$,
\[
Z(\theta)=\!\int e^{-\theta\rho}\,d\pi
= e^{-\theta(m+\varepsilon)}\!\left(\underbrace{\int_{A_\varepsilon} e^{-\theta(\rho-m-\varepsilon)}\,d\pi}_{=:C_\varepsilon(\theta)}
+\underbrace{\int_{B_\varepsilon} e^{-\theta(\rho-m-\varepsilon)}\,d\pi}_{=:L_\varepsilon(\theta)}\right),
\]
and
\[
\int \rho\,e^{-\theta\rho}\,d\pi
= e^{-\theta(m+\varepsilon)}\!\left(\int_{A_\varepsilon} \rho\,e^{-\theta(\rho-m-\varepsilon)}\,d\pi
+\int_{B_\varepsilon} \rho\,e^{-\theta(\rho-m-\varepsilon)}\,d\pi\right).
\]
Since $\rho\le m+\varepsilon$ on $A_\varepsilon$, we have
\[
\int_{A_\varepsilon} \rho\,e^{-\theta(\rho-m-\varepsilon)}\,d\pi\ \le\ (m+\varepsilon)\,C_\varepsilon(\theta).
\]
On $B_\varepsilon$ we have $\rho-m-\varepsilon>0$, hence $e^{-\theta(\rho-m-\varepsilon)}\downarrow 0$ pointwise as $\theta\to\infty$ and is dominated by $1$. Because $\rho\in L^1(\pi)$ (by Assumption~\ref{ass:data}\, using $|x|\le \delta^{-1}e^{\delta |x|}$ and two-sided exponential integrability), dominated convergence gives
\[
L_\varepsilon(\theta)\ \longrightarrow\ 0
\quad\text{and}\quad
R_\varepsilon(\theta):=\int_{B_\varepsilon} \rho\,e^{-\theta(\rho-m-\varepsilon)}\,d\pi\ \longrightarrow\ 0
\qquad\text{as }\theta\to\infty.
\]
Therefore,
\[
\psi(\theta)=\frac{\int \rho\,e^{-\theta\rho}\,d\pi}{Z(\theta)}
\ \le\ \frac{(m+\varepsilon)\,C_\varepsilon(\theta)+R_\varepsilon(\theta)}{C_\varepsilon(\theta)+L_\varepsilon(\theta)}
\ \xrightarrow[\theta\to\infty]{}\ m+\varepsilon.
\]
Since $\varepsilon>0$ is arbitrary and $\psi(\theta)\ge m$, it follows that $\lim_{\theta\to\infty}\psi(\theta)=m$. The limit $\lim_{\theta\to-\infty}\psi(\theta)=M$ is proved symmetrically by applying the same argument to the sets $A_\varepsilon^{+}:=\{\rho\ge M-\varepsilon\}$ and $B_\varepsilon^{+}:=\{\rho<M-\varepsilon\}$. Consequently,
\[
\psi(\mathbb R)=[m,M]=\big[\mathrm{inf}_\lambda \rho(\lambda),\ \mathrm{sup}_\lambda \rho(\lambda)\big].
\]

The map $\theta\mapsto \psi(\theta)$ is continuous and bounded, with $\psi(\theta)\to M$ as $\theta\to-\infty$ and $\psi(\theta)\to m$ as $\theta\to+\infty$. If $\alpha>0$, then
\[
g(\theta):=\psi(\theta)-a-\frac{\alpha}{\gamma}\theta
\]
is strictly decreasing, with $\lim_{\theta\to-\infty}g(\theta)=+\infty$ and $\lim_{\theta\to+\infty}g(\theta)=-\infty$, hence there exists a unique $\theta\in\mathbb R$ solving \eqref{eq:theta-fixed-pt}. If $\alpha=0$, then \eqref{eq:theta-fixed-pt} reduces to $\psi(\theta)=a$, which has a solution iff $a\in[m,M]$; it is unique when $\rho$ is not $\pi$–a.s.\ constant (since $\psi$ is strictly decreasing), and either has no solution or infinitely many solutions when $\rho$ is $\pi$–a.s.\ constant (according as $a\neq m=M$ or $a=m=M$).

\end{proof}

\begin{lemma}\label{lem:monotonicity-alpha}
For $\alpha_2>\alpha_1>0$, the corresponding optimizers $\mu^{\ast}_{\alpha_i}$ satisfy
\begin{equation*}
\mathrm{KL}(\mu^{\ast}_{\alpha_2}\|\pi)\ \le\ \mathrm{KL}(\mu^{\ast}_{\alpha_1}\|\pi).
\end{equation*}
\end{lemma}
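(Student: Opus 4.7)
The plan is to use a \emph{cross--optimality} (four--point) argument, which is the textbook way to obtain parametric monotonicity in KL--regularized convex problems. Fix $(t,\omega)$ and work with the pointwise formulation \eqref{eq:pointwise-problem}. For each $\alpha\ge 0$, set
\[
f_\alpha(\mu):=\frac{\gamma}{2}\big(b(\mu)-a\big)^2+\alpha\,\mathrm{KL}(\mu\|\pi),
\]
and abbreviate $K_i:=\mathrm{KL}(\mu^{\ast}_{\alpha_i}\|\pi)$ and $b_i:=b(\mu^{\ast}_{\alpha_i})$ for $i=1,2$.

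First, I would write down the defining minimization inequality for each $\mu^{\ast}_{\alpha_i}$ using the other as a competitor. Both optimizers are feasible for each other's problem since, by \Cref{prop:gibbs-solution}, they are Gibbs measures of the form \eqref{eq:gibbs-tilt} and hence equivalent to $\pi$. Therefore
\[
f_{\alpha_1}(\mu^{\ast}_{\alpha_1})\ \le\ f_{\alpha_1}(\mu^{\ast}_{\alpha_2}),\qquad
f_{\alpha_2}(\mu^{\ast}_{\alpha_2})\ \le\ f_{\alpha_2}(\mu^{\ast}_{\alpha_1}).
\]

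Next, I would add the two inequalities. The quadratic losses $\tfrac{\gamma}{2}(b_i-a)^2$ appear on both sides with the same coefficient $\gamma/2$ and with the same pair of arguments after swapping, so they cancel identically, leaving
\[
\alpha_1 K_1+\alpha_2 K_2\ \le\ \alpha_1 K_2+\alpha_2 K_1.
\]
Rearranging gives $(\alpha_1-\alpha_2)(K_1-K_2)\le 0$, and since $\alpha_2>\alpha_1$ we conclude $K_2\le K_1$, which is the desired inequality.

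There is essentially no technical obstacle: the only point worth checking is the feasibility of each optimizer as a competitor in the other's problem, which is automatic from the Gibbs form established in \Cref{prop:gibbs-solution}. The merit of the cross--comparison argument is precisely that it avoids differentiating the map $\alpha\mapsto\mu^{\ast}_\alpha$ and sidesteps any envelope/implicit--function calculus; consequently the monotonicity holds in full generality under the standing Assumption~\ref{ass:data} without any extra regularity on $\rho$ or $\pi$.
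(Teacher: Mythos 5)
Your proof is correct and is essentially the same cross-optimality (four-point) argument used in the paper: write the two optimality inequalities with the opposite optimizer as competitor, add, cancel the quadratic terms, and read off the sign. The only cosmetic difference is how finiteness/feasibility is justified — the paper notes $F_{\alpha_i}(\pi)<\infty$ to conclude $\mathrm{KL}(\mu^{\ast}_{\alpha_i}\|\pi)<\infty$, while you invoke the Gibbs form from Proposition~\ref{prop:gibbs-solution}; both are valid and the substance is identical.
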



\begin{proof}
Let $F_\alpha(\mu):=\frac{\gamma}{2}\big(b(\mu)-a\big)^2+\alpha\,\mathrm{KL}(\mu\|\pi)$ and let $\mu^*_{\alpha_i}$ be any minimizer of $F_{\alpha_i}$ for $i=1,2$. By boundedness of $\rho$ we have $F_{\alpha_i}(\pi)<\infty$, hence $F_{\alpha_i}(\mu^*_{\alpha_i})\le F_{\alpha_i}(\pi)<\infty$, which in particular implies $\mathrm{KL}(\mu^*_{\alpha_i}\|\pi)<\infty$.

By optimality,
\begin{align}
F_{\alpha_2}(\mu^*_{\alpha_2}) &\le F_{\alpha_2}(\mu^*_{\alpha_1}), \label{eq:opt-alpha2}\\
F_{\alpha_1}(\mu^*_{\alpha_1}) &\le F_{\alpha_1}(\mu^*_{\alpha_2}). \label{eq:opt-alpha1}
\end{align}
Expanding \eqref{eq:opt-alpha2}--\eqref{eq:opt-alpha1} yields
\begin{align*}
\frac{\gamma}{2}\big(b(\mu^*_{\alpha_2})-a\big)^2+\alpha_2\,\mathrm{KL}(\mu^*_{\alpha_2}\|\pi)
&\le \frac{\gamma}{2}\big(b(\mu^*_{\alpha_1})-a\big)^2+\alpha_2\,\mathrm{KL}(\mu^*_{\alpha_1}\|\pi),\\
\frac{\gamma}{2}\big(b(\mu^*_{\alpha_1})-a\big)^2+\alpha_1\,\mathrm{KL}(\mu^*_{\alpha_1}\|\pi)
&\le \frac{\gamma}{2}\big(b(\mu^*_{\alpha_2})-a\big)^2+\alpha_1\,\mathrm{KL}(\mu^*_{\alpha_2}\|\pi).
\end{align*}
Adding the two inequalities and cancelling the matching quadratic terms gives
\[
\alpha_2\,\mathrm{KL}(\mu^*_{\alpha_2}\|\pi)+\alpha_1\,\mathrm{KL}(\mu^*_{\alpha_1}\|\pi)
\ \le\
\alpha_2\,\mathrm{KL}(\mu^*_{\alpha_1}\|\pi)+\alpha_1\,\mathrm{KL}(\mu^*_{\alpha_2}\|\pi),
\]
i.e.
\[
(\alpha_2-\alpha_1)\Big(\mathrm{KL}(\mu^*_{\alpha_2}\|\pi)-\mathrm{KL}(\mu^*_{\alpha_1}\|\pi)\Big)\ \le\ 0.
\]
Since $\alpha_2>\alpha_1>0$, it follows that $\mathrm{KL}(\mu^*_{\alpha_2}\|\pi)\le \mathrm{KL}(\mu^*_{\alpha_1}\|\pi)$, as claimed.
\end{proof}

\subsection{Optimal Adapted Measure-Valued Control}
\label{subsec:assemble}

Define $\theta_t^{\ast}(\omega)$ to be the unique solution of \eqref{eq:theta-fixed-pt} with $a=\widehat a_t(\omega)$, $\rho=\rho_t^\cdot(\omega)$, and $\pi=\pi_t(\omega)$.
Set
\begin{equation}
\label{eq:optimal-m-process}
m_t^{\ast}(d\lambda)\ :=\
\frac{\exp\!\big(-\theta_t^{\ast}(\omega)\,\rho_t^\lambda(\omega)\big)}{\displaystyle \int_{\Lambda}\exp\!\big(-\theta_t^{\ast}(\omega)\,\rho_t^{\lambda'}(\omega)\big)\,\pi_t(\omega,d\lambda')}
\ \pi_t(\omega,d\lambda).
\end{equation}

\begin{proposition}\label{prop:adaptedness} $(\theta_t^{\ast})_{t\in[0,T]}$ is a $\mathbb G$–adapted real–valued process. Consequently, for every $t\in[0,T], B\in\mathcal B(\Lambda)$, the map $\omega\mapsto m_t^{\ast}(\omega,B)$ is $\mathcal G_t$–measurable, so $(m_t^{\ast})_{t\in[0,T]}$ is an $\mathbb G$–adapted flow of probability kernels. Moreover, $m^{\ast}$ is optimal for Problem~\ref{prob:KL}.

If, in addition, the maps $(t,\omega)\mapsto\rho_t^\lambda(\omega)$, $(t,\omega)\mapsto\widehat a_t(\omega)$, and $(t,\omega)\mapsto\pi_t(\omega,B)$ are progressively measurable for each fixed $\lambda\in\Lambda$ and $B\in\mathcal B(\Lambda)$, then the maps $(t,\omega)\mapsto\theta_t^{\ast}(\omega)$ and $(t,\omega)\mapsto m_t^{\ast}(\omega,B)$ are $\mathcal P(\mathbb G)$–measurable (i.e.\ progressively measurable).
\end{proposition}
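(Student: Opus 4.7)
The plan is to first show that the scalar process $(\theta_t^{\ast})$ defined implicitly by \eqref{eq:theta-fixed-pt} is $\mathbb G$--adapted by a Carath\'eodory--type measurable--selection argument, then deduce the kernel adaptedness of $(m_t^{\ast})$ through the explicit formula \eqref{eq:optimal-m-process}, and finally obtain optimality by combining Theorem~\ref{thm:existence and scalar} (which produces an optimal flow and the Lagrange multiplier $\alpha^{\ast}$) with the pointwise characterization in Proposition~\ref{prop:gibbs-solution}. Under the strengthened hypotheses the same argument runs jointly in $(t,\omega)$ and delivers $\mathcal P(\mathbb G)$--measurability.

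For adaptedness of $\theta_t^{\ast}$, define
\[
g_t(\omega,\theta)\;:=\;\psi_t(\omega,\theta)-\widehat a_t(\omega)-\frac{\alpha^{\ast}}{\gamma}\theta,
\qquad
\psi_t(\omega,\theta)=\frac{\int_{\Lambda}\rho_t^{\lambda}(\omega)\,e^{-\theta\rho_t^{\lambda}(\omega)}\,\pi_t(\omega,d\lambda)}{\int_{\Lambda}e^{-\theta\rho_t^{\lambda}(\omega)}\,\pi_t(\omega,d\lambda)}.
\]
For each fixed $\theta$, the numerator and denominator are $\mathcal G_t$--measurable in $\omega$: by Assumption~\ref{ass:data} the integrand $(\omega,\lambda)\mapsto\rho_t^{\lambda}(\omega)\,e^{-\theta\rho_t^{\lambda}(\omega)}$ is $\mathcal G_t\otimes\mathcal B(\Lambda)$--measurable and bounded, $\pi_t$ is a $\mathcal G_t$--measurable kernel, and integration of a bounded measurable function against a measurable kernel preserves measurability (appendix). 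Since $\widehat a_t=\mathbb E[a_t\mid\mathcal G_t]$ is $\mathcal G_t$--measurable, $g_t(\cdot,\theta)$ is $\mathcal G_t$--measurable for every $\theta$. Boundedness of $\rho$ and dominated convergence yield continuity in $\theta$, while Lemma~\ref{lem:unique-theta} gives strict monotonicity together with the limits $\pm\infty$. Hence for every $c\in\mathbb R$,
\[
\{\omega:\theta_t^{\ast}(\omega)\le c\}\;=\;\{\omega:g_t(\omega,c)\le 0\}\;\in\;\mathcal G_t,
\]
so $\theta_t^{\ast}$ is $\mathcal G_t$--measurable.

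For the kernel, formula \eqref{eq:optimal-m-process} reads, for $B\in\mathcal B(\Lambda)$,
\[
m_t^{\ast}(\omega,B)=\frac{\int_{B}e^{-\theta_t^{\ast}(\omega)\rho_t^{\lambda}(\omega)}\,\pi_t(\omega,d\lambda)}{\int_{\Lambda}e^{-\theta_t^{\ast}(\omega)\rho_t^{\lambda}(\omega)}\,\pi_t(\omega,d\lambda)}.
\]
Viewing numerator and denominator as $\mathcal G_t$--measurable Carath\'eodory functions of $(\omega,\theta)$ (measurable in $\omega$ for every $\theta$, continuous in $\theta$ for every $\omega$) and substituting $\theta=\theta_t^{\ast}(\omega)$, we obtain $\mathcal G_t$--measurability of $m_t^{\ast}(\cdot,B)$ by composition. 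Optimality is then immediate: Theorem~\ref{thm:existence and scalar} furnishes an optimizer $m^{\sharp}$ and a multiplier $\alpha^{\ast}\ge 0$, and Proposition~\ref{prop:gibbs-solution} identifies the unique pointwise minimizer of the Lagrangian integrand with the Gibbs tilt \eqref{eq:gibbs-tilt} at parameter $\theta_t^{\ast}$; by uniqueness, $m^{\ast}=m^{\sharp}$ up to $dt\otimes d\mathbb P$--null sets, and the case $\alpha^{\ast}=0$ reduces to Problem~\ref{prob:unconstrained} and is treated via the same formula interpreted as the degenerate tilt.

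Under the additional hypothesis that $(t,\omega)\mapsto\rho_t^{\lambda}(\omega)$, $\widehat a$, and $\pi$ are progressively measurable, the same kernel--integration fact (now applied jointly in $(t,\omega)$, see the appendix) shows that $(t,\omega)\mapsto g_t(\omega,\theta)$ is $\mathcal P(\mathbb G)$--measurable for each fixed $\theta$, while remaining continuous and strictly monotone in $\theta$. The sub--level identity
\[
\{(t,\omega)\in[0,T]\times\Omega:\theta_t^{\ast}(\omega)\le c\}\;=\;\{(t,\omega):g_t(\omega,c)\le 0\}\;\in\;\mathcal P(\mathbb G)
\]
then gives progressive measurability of $\theta^{\ast}$, and a final composition yields $\mathcal P(\mathbb G)$--measurability of $(t,\omega)\mapsto m_t^{\ast}(\omega,B)$. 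The main technical obstacle is to keep track of the right $\sigma$--algebra when passing integrals of $(\omega,\lambda)$--measurable functions through the kernel $\pi_t$; this is precisely the content of the standard measurability results on probability kernels recalled in the appendix, and once invoked the rest of the argument is the elementary observation that the unique zero of a Carath\'eodory function that is strictly monotone in its second variable is measurable in its first variable.
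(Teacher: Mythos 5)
Your proof is correct and follows essentially the same route as the paper: show $\psi_t(\omega,\theta)$ is a Carath\'eodory function (measurable in $\omega$ for fixed $\theta$ via the kernel-integration lemma, continuous in $\theta$ by dominated convergence), deduce measurability of the implicitly defined root $\theta_t^{\ast}$, and then compose through the exponential-tilt formula for $m_t^{\ast}$. The only genuine difference is in the sub-level set step: you use the direct equivalence $\{\theta_t^{\ast}\le c\}=\{g_t(\cdot,c)\le 0\}$, which is valid because $g_t(\omega,\cdot)$ is strictly decreasing with $g_t(\omega,\theta_t^{\ast}(\omega))=0$, whereas the paper writes $\theta_t^{\ast}(\omega)=\inf\{q\in\mathbb Q:f(\omega,q)\le 0\}$ and decomposes $\{\theta_t^{\ast}<r\}$ as a countable union over rationals. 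Your identity is a cleaner observation that makes the monotonicity do all the work in one line; both are sound. One minor point worth tightening: your phrase ``by uniqueness, $m^{\ast}=m^{\sharp}$ up to $dt\otimes d\mathbb P$--null sets'' implicitly invokes the strict convexity / uniqueness from Proposition~\ref{prop:gibbs-solution}, which only holds when $\alpha^{\ast}>0$; the $\alpha^{\ast}=0$ case should be handled as the paper does, by selecting any measurable root of $\psi_t(\theta)=\widehat a_t$ (e.g.\ the minimal one) under Assumption~\ref{ass:attainability}, rather than by appealing to a ``degenerate tilt.''
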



\begin{proof}
Fix $t\in[0,T]$ and set $\rho(\omega,\lambda):=\rho_t^\lambda(\omega)$, $\pi(\omega,\cdot):=\pi_t(\omega,\cdot)$ and $a(\omega):=\widehat a_t(\omega)$. For each $\theta\in\mathbb R$ set
\[
Z(\omega,\theta):=\int_\Lambda e^{-\theta\,\rho(\omega,\lambda)}\,\pi(\omega,d\lambda),\qquad
\psi(\omega,\theta):=\frac{\displaystyle\int_\Lambda \rho(\omega,\lambda)\,e^{-\theta\,\rho(\omega,\lambda)}\,\pi(\omega,d\lambda)}{Z(\omega,\theta)}.
\]

We begin by showing the measurability of $(\omega,\theta)\mapsto \psi(\omega,\theta)$ at fixed $t$.
From assumptions it follows that $(\omega,\lambda)\mapsto \rho(\omega,\lambda)$ is $\mathcal G_t\otimes\mathcal B(\Lambda)$–measurable. Consequently, for each fixed $\theta$, the functions
\[
(\omega,\lambda)\mapsto e^{-\theta\,\rho(\omega,\lambda)}
\quad\text{and}\quad
(\omega,\lambda)\mapsto \rho(\omega,\lambda)\,e^{-\theta\,\rho(\omega,\lambda)}
\]
are $\mathcal G_t\otimes\mathcal B(\Lambda)$–measurable and bounded. Since $\omega\mapsto\pi(\omega,\cdot)$ is an $\mathcal G_t$–measurable kernel on $\Lambda$, the map $\omega\mapsto \int_\Lambda g(\omega,\lambda)\,\pi(\omega,d\lambda)$ is $\mathcal G_t$–measurable for any $\mathcal G_t$–measurable $\omega\mapsto g(\omega,\lambda)$, see \cite{Kallenberg}, Lemma 1.41. Therefore, for each $\theta$, both $\omega\mapsto Z(\omega,\theta)$ and $\omega\mapsto \int_{\Lambda} \rho(\omega,\lambda) e^{-\theta\rho(\omega,\lambda)}\,d\pi(\omega,d\lambda)$ are $\mathcal G_t$–measurable; as $Z(\omega,\theta)>0$, the ratio $\psi(\omega,\theta)$ is $\mathcal G_t$–measurable. Moreover, by boundedness of $\rho$ and dominated convergence, $\theta\mapsto \psi(\omega,\theta)$ is continuous for each $\omega$, hence a Caratheodory function, hence by Lemma 4.51, \cite{AliprantisBorder}, p.153, it is $\mathcal G_t\otimes\mathcal B(\mathbb R)$–measurable.

Then, we show $\mathcal G_t$–measurability of the fixed point $\theta_t^{\ast}$.
Assume $\alpha>0$. Define
\[
f(\omega,\theta):=\psi(\omega,\theta)-a(\omega)-\frac{\alpha}{\gamma}\,\theta.
\]
By the previous step and measurability of $a(\cdot)$, for each fixed $\theta$ the map $\omega\mapsto f(\omega,\theta)$ is $\mathcal G_t$–measurable, and for each $\omega$, $\theta\mapsto f(\omega,\theta)$ is continuous and strictly decreasing (Lemma~\ref{lem:unique-theta}). For each $\omega$ there is a unique root $\theta_t^{\ast}(\omega)$ of $f(\omega,\theta)=0$. Using monotonicity,
\[
\theta_t^{\ast}(\omega)\ =\ \inf\big\{q\in\mathbb Q:\ f(\omega,q)\le 0\big\}.
\]
Hence, for any $r\in\mathbb R$,
\[
\{\omega:\ \theta_t^{\ast}(\omega)<r\}
=\bigcup_{q\in\mathbb Q,\ q<r}\{\omega:\ f(\omega,q)\le 0\}\in\mathcal G_t,
\]
since each $\omega\mapsto f(\omega,q)$ is $\mathcal G_t$–measurable. Thus $\omega\mapsto \theta_t^{\ast}(\omega)$ is $\mathcal G_t$–measurable.

If $\alpha^*=0$ and, in addition, 
$\widehat a_t(\omega)\in\big[\mathrm{inf}_\lambda \rho(\lambda),\ \mathrm{sup}_\lambda \rho(\lambda)\big]
\quad\text{for a.e.\ $(t,\omega)$},$ then one can define $\theta_t^*(\omega)$ as any (measurable) solution of $\psi_t(\omega,\theta)=\widehat a_t(\omega)$ (e.g., the minimal root), and the conclusions above still hold. 

To show the adaptedness of $m_t^{\ast}$, define
\[
h_t(\omega,\lambda):=\exp\!\big(-\theta_t^{\ast}(\omega)\,\rho(\omega,\lambda)\big).
\]
Note $h_t(\omega,\lambda)$ is $\mathcal G_t$-measurable in $\omega$ and continuous in $\lambda$, hence a Caratheodory function, hence the map $(\omega,\lambda)\mapsto h_t(\omega,\lambda)$ is $\mathcal G_t\otimes\mathcal B(\Lambda)$–measurable, and strictly positive. Therefore, for each $B\in\mathcal B(\Lambda)$ the numerator
\[
\omega\ \longmapsto\ \int_B h_t(\omega,\lambda)\,\pi(\omega,d\lambda)
\]
and the (strictly positive) denominator
\[
\omega\ \longmapsto\ \int_\Lambda h_t(\omega,\lambda)\,\pi(\omega,d\lambda)
\]
are $\mathcal G_t$–measurable. Their ratio equals $\omega\mapsto m_t^{\ast}(\omega,B)$ by \eqref{eq:optimal-m-process}, hence $m_t^{\ast}$ is $\mathcal G_t$–measurable for each $t$, i.e., $(m_t^{\ast})_{t\in[0,T]}$ is an $\mathbb G$–adapted flow of kernels.

Finally, we must show that the obtained controls are optimal. Let $\alpha^{\ast}$ be the scalar from Theorem~\ref{thm:existence and scalar}. For this choice, the above gives $\theta_t^{\ast}$ and that $m^{\ast}$ is adapted. By Proposition~\ref{prop:gibbs-solution}, for each $(t,\omega)$ the measure $m_t^{\ast}(\omega,\cdot)$ solves the corresponding pointwise minimization of the Lagrangian integrand, hence $m^{\ast}$ minimizes \eqref{eq:lagrangian}. Theorem~\ref{thm:existence and scalar} then implies $m^{\ast}$ is optimal for Problem~\ref{prob:KL}.
\end{proof}

\begin{lemma}\label{lem:KL-budget}
At $(t,\omega)$ and $\theta=\theta_t^{\ast}(\omega)$,
\begin{equation}
\label{eq:KL-value}
\mathrm{KL}\big(m_t^{\ast}\|\pi_t\big)
=-\theta\,\psi_t(\theta)-\log Z_t(\theta),
\qquad Z_t(\theta):=\int_{\Lambda} e^{-\theta\rho_t^\lambda}\,\pi_t(d\lambda).
\end{equation}
The map $\alpha\mapsto \mathbb E\!\int_0^T \mathrm{KL}(m_t^{\ast}(\alpha)\|\pi_t)\,dt$ is continuous and nonincreasing on $(0,\infty)$. It is strictly decreasing provided
\[
\mathbb Q\big\{(t,\omega): a_t(\omega)\neq \psi_t(0,\omega)\big\}>0
\quad\text{and}\quad
\rho_t^\cdot(\omega)\ \text{is not }\pi_t(\omega)\text{-a.s.\ constant a.e.}
\]
Hence there exists at least one $\alpha^{\ast}\ge 0$ enforcing the global constraint:
\begin{equation*}
\mathbb E\!\int_0^T \mathrm{KL}(m_t^{\ast}\|\pi_t)\,dt\ =\ K,
\end{equation*}
with $\alpha^{\ast}>0$ when $0<K< G(0+):=\lim_{\alpha\downarrow 0}\mathbb E\!\int_0^T \mathrm{KL}(m_t^{\ast}(\alpha)\|\pi_t)\,dt$, and $\alpha^{\ast}=0$ when $K\ge G(0+)$. If, in addition, the strictness condition holds, then this $\alpha^{*}$ is unique.

\end{lemma}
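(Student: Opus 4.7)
The plan is to split the proof into the pointwise identity \eqref{eq:KL-value}, a regularity analysis of the scalar map $G(\alpha):=\mathbb E\!\int_0^T\mathrm{KL}(m_t^{\ast}(\alpha)\|\pi_t)\,dt$ on $(0,\infty)$, and a one–dimensional root–finding argument at the end.

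First I would prove the identity \eqref{eq:KL-value}: since $dm_t^{\ast}/d\pi_t=e^{-\theta\rho_t^\lambda}/Z_t(\theta)$, taking logs and integrating against $m_t^{\ast}$ gives
\[
\mathrm{KL}(m_t^{\ast}\|\pi_t)=\int_\Lambda\!\big(-\theta\rho_t^\lambda-\log Z_t(\theta)\big)\,m_t^{\ast}(d\lambda)=-\theta\,\psi_t(\theta)-\log Z_t(\theta),
\]
with $\theta=\theta_t^{\ast}$. Next, nonincreasingness of $G$ is immediate by integrating Lemma~\ref{lem:monotonicity-alpha} in $(t,\omega)$. For pointwise continuity in $\alpha$, $\theta_t^{\ast}(\alpha,\omega)$ is the unique root of $F(\theta,\alpha):=\psi_t(\theta)-\widehat a_t-(\alpha/\gamma)\theta=0$ with $\partial_\theta F=\psi_t'(\theta)-\alpha/\gamma<0$, so the implicit function theorem yields $\alpha\mapsto\theta_t^{\ast}(\alpha,\omega)$ continuously differentiable, and composition with the continuous map $\theta\mapsto -\theta\psi_t(\theta)-\log Z_t(\theta)$ gives pointwise continuity of $\alpha\mapsto\mathrm{KL}(m_t^{\ast}(\alpha)\|\pi_t)$. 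To pass to the integrated version on any interval $[\alpha_1,\alpha_2]\subset(0,\infty)$, I would use monotonicity together with the crude bound obtained by testing $\pi$ as a feasible point of the Lagrangian at level $\alpha_1$:
\[
\alpha_1\,\mathrm{KL}(m_t^{\ast}(\alpha_1)\|\pi_t)\le F_{\alpha_1}(m_t^{\ast}(\alpha_1))\le F_{\alpha_1}(\pi_t)=\tfrac{\gamma}{2}(\bar\rho_t(\pi_t)-\widehat a_t)^2,
\]
which is integrable on $[0,T]\times\Omega$ by boundedness of $\rho$ and Assumption~\ref{ass:data}(A2). Dominated convergence then yields continuity of $G$ on $(0,\infty)$.

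For strict monotonicity, I would differentiate the pointwise KL value: with $K(\theta):=-\theta\psi(\theta)-\log Z(\theta)$ one has $K'(\theta)=-\theta\psi'(\theta)$, and implicit differentiation gives $\theta'(\alpha)=(\theta/\gamma)/(\psi'(\theta)-\alpha/\gamma)$, whence
\[
\partial_\alpha\mathrm{KL}(m_t^{\ast}(\alpha)\|\pi_t)=-\frac{\theta^2\,\psi'(\theta)}{\gamma\,(\psi'(\theta)-\alpha/\gamma)}\le 0,
\]
with strict inequality exactly when $\theta\ne 0$ and $\psi'(\theta)<0$, i.e.\ when $\widehat a_t\ne\psi_t(0)=\bar\rho_t(\pi_t)$ and $\rho_t^\cdot$ is not $\pi_t$-a.s.\ constant. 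Under the strictness hypotheses this holds on a set of positive $\mathbb Q$-measure, so $G$ is strictly decreasing.

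Finally, for the endpoints and existence of $\alpha^{\ast}$: the same crude bound gives $\mathrm{KL}(m_t^{\ast}(\alpha)\|\pi_t)\le\tfrac{\gamma}{2\alpha}(\bar\rho_t(\pi_t)-\widehat a_t)^2\to 0$ pointwise and is dominated on $[\alpha_0,\infty)$, so $G(\alpha)\to 0$ as $\alpha\to\infty$ by dominated convergence; by monotonicity, $G(0+):=\lim_{\alpha\downarrow 0}G(\alpha)\in[0,\infty]$ exists. If $K\ge G(0+)$ the KL constraint is asymptotically nonbinding and $\alpha^{\ast}=0$ satisfies complementary slackness \eqref{eq:comp-slackness}; if $0<K<G(0+)$, pick $\alpha_1>0$ small enough that $G(\alpha_1)\ge K$, apply the intermediate value theorem to the continuous nonincreasing $G$ on $[\alpha_1,\infty)$, and uniqueness of $\alpha^{\ast}$ follows from strict monotonicity under the stated conditions. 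The main obstacle is the dominated convergence step uniformly in $\alpha$, because $\theta_t^{\ast}(\alpha)$ may diverge as $\alpha\downarrow 0$ precisely on the event $\{\widehat a_t\notin[\inf_\lambda\rho_t^\lambda,\sup_\lambda\rho_t^\lambda]\}$; this is exactly the case yielding $G(0+)=+\infty$, which is why I restrict the continuity argument to compact subsets of $(0,\infty)$ and handle $\alpha=0$ only as a monotone limit.
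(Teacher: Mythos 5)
Your proposal is correct and follows essentially the same route as the paper: the density identity, nonincreasingness from Lemma~\ref{lem:monotonicity-alpha}, implicit-function-theorem continuity of $\theta(\alpha)$, a domination bound to pass to the integral, the $\alpha\to\infty$ limit, and an intermediate-value argument. Two small differences are worth noting. For the domination step, you test $\pi$ itself in the Lagrangian, giving $\mathrm{KL}(m_t^\ast(\alpha)\|\pi_t)\le\frac{\gamma}{2\alpha}(\bar\rho_t(\pi_t)-\widehat a_t)^2$; the paper instead tests $\mu^\ast_{\alpha_0}$ and works on $[\alpha_0/2,2\alpha_0]$. Your bound is a bit cleaner, reuses the same majorant both for continuity on compacts and for $G(\alpha)\to 0$ (by DCT, where the paper uses monotone convergence after showing $\theta(\alpha)\to 0$), and makes integrability transparent from $\rho$ bounded and (A2). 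Second, you make the strict-monotonicity claim explicit by computing $\partial_\alpha\mathrm{KL}(m_t^\ast(\alpha)\|\pi_t)=-\theta^2\psi'(\theta)/\big(\gamma(\psi'(\theta)-\alpha/\gamma)\big)\le 0$, with strictness iff $\theta\ne 0$ and $\mathrm{Var}_{m^\ast_\theta}(\rho)>0$; the paper asserts this without a derivative computation. Your closing remark that $G(0+)=+\infty$ can occur when $\widehat a_t$ falls outside the range of $\rho$ on a positive-measure set, so continuity is only claimed on compacts of $(0,\infty)$, is a correct and useful observation that the paper leaves implicit. (Minor: you write the strictness condition with $\widehat a_t\ne\psi_t(0)$, whereas the lemma statement has $a_t$; your version is the one that matches the fixed-point equation and is presumably what is meant.)
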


\begin{proof}
We omit the dependence on $t$ for simplicity. Since $\frac{dm^{\ast}_\theta}{d\pi}(\lambda)=e^{-\theta\,\rho(\lambda)}/Z(\theta)$, we have
\[
\log\!\Big(\frac{dm^{\ast}_\theta}{d\pi}\Big)=-\theta\,\rho-\log Z(\theta),
\]
and therefore
\[
\mathrm{KL}(m^{\ast}_\theta\|\pi)=\int \log\!\Big(\frac{dm^{\ast}_\theta}{d\pi}\Big)\,dm^{\ast}_\theta
=-\theta\!\int \rho\,dm^{\ast}_\theta-\log Z(\theta)
=-\theta\,\psi(\theta)-\log Z(\theta).
\]
This yields \eqref{eq:KL-value}.

Recall that by Lemma~\ref{lem:monotonicity-alpha} $\alpha\mapsto \mathrm{KL}(\mu^{\ast}_{\alpha}\|\pi)$ is nonincreasing pointwise in $(t,\omega)$. Consequently, $\alpha\mapsto \mathbb E\!\int_0^T \mathrm{KL}(m_t^{\ast}(\alpha)\|\pi_t)\,dt$ is nonincreasing. It is strictly decreasing if 
\[
\mathbb Q\big\{(t,\omega): a_t(\omega)\neq \psi_t(0,\omega)\big\}>0
\ \text{and}\ \rho_t^\cdot(\omega)\ \text{is not }\pi_t(\omega)\text{-a.s.\ constant a.e.}
\]

To show continuity of $\alpha\mapsto \mathbb E\!\int_0^T \mathrm{KL}(m_t^{\ast}(\alpha)\|\pi_t)\,dt$ on $(0,\infty)$, fix $\alpha_0>0$ and work pointwise in $(t,\omega)$.
Let $\theta(\alpha)$ denote the unique solution of the fixed–point equation
\[
\psi(\theta)=a+\frac{\alpha}{\gamma}\,\theta .
\]
Set $f(\theta,\alpha):=\psi(\theta)-a-(\alpha/\gamma)\theta$. Then $f(\theta(\alpha),\alpha)=0$ and
\[
\partial_\theta f(\theta,\alpha)=\psi'(\theta)-\frac{\alpha}{\gamma}=-\mathrm{Var}_{m^{\ast}_\theta}(\rho)-\frac{\alpha}{\gamma}\ <\ 0.
\]
By the (deterministic) implicit function theorem, $\alpha\mapsto \theta(\alpha)$ is $C^1$ in a neighborhood of $\alpha_0$; hence
\[
\alpha\ \longmapsto\ \mathrm{KL}\big(m^{\ast}_{\theta(\alpha)}\|\pi\big)
=-\theta(\alpha)\,\psi(\theta(\alpha))-\log Z(\theta(\alpha))
\]
is continuous pointwise in $(t,\omega)$.

To pass to expectation and time–integration, observe that for all $\alpha$ and any competitor $\mu$,
\[
\mathrm{KL}\big(\mu^{\ast}_\alpha\|\pi\big)\ \le\ \frac{1}{\alpha}\,F_\alpha(\mu).
\]
Choosing $\mu=\mu_{\alpha_0}^{\ast}$ and restricting $\alpha\in[\alpha_0/2,2\alpha_0]$ yields the pointwise bound
\begin{equation*}
\mathrm{KL}\big(m^{\ast}(\alpha)\|\pi\big)
\ \le\ \frac{1}{\alpha}\,F_\alpha\big(\mu^{\ast}_{\alpha_0}\big)
\ \le\ \frac{\gamma}{\alpha_0}\,\frac{\big(b(\mu^{\ast}_{\alpha_0})-a\big)^2}{2}
\ +\ \mathrm{KL}\big(\mu^{\ast}_{\alpha_0}\|\pi\big).
\end{equation*}
Both terms on the right are integrable (they appear in $F_{\alpha_0}(\mu^{\ast}_{\alpha_0})$), so they provide an $\alpha$–uniform integrable bound on $[\alpha_0/2,2\alpha_0]$. Dominated convergence then gives continuity of
\[
\alpha\ \longmapsto\ \mathbb E\!\int_0^T \mathrm{KL}\big(m_t^{\ast}(\alpha)\|\pi_t\big)\,dt
\]
at $\alpha_0$. Since $\alpha_0>0$ was arbitrary, continuity holds on $(0,\infty)$.
Write $G(\alpha):=\mathbb E\!\int_0^T \mathrm{KL}(m_t^{\ast}(\alpha)\|\pi_t)\,dt$. By the above, $G$ is continuous and nonincreasing on $(0,\infty)$. Moreover, for each $(t,\omega)$, $\theta(\alpha)\to 0$ as $\alpha\to\infty$ (since $\alpha\,\theta(\alpha)=\gamma(\psi(\theta(\alpha))-a)$ and $\psi$ is bounded), so $\mathrm{KL}(m_t^{\ast}(\alpha)\|\pi_t)\downarrow 0$. By the monotone convergence theorem, $G(\alpha)\downarrow 0$ as $\alpha\to\infty$. Therefore, for any $K\in(0,G(0+))$ there exists at least one $\alpha^{\ast}>0$ with $G(\alpha^{\ast})=K$; if $K\ge G(0+)$, the budget is slack and we set $\alpha^{\ast}=0$. If, in addition, $G$ is strictly decreasing (e.g. under the strictness condition stated above), then $\alpha^{*}$ is unique.
\end{proof}

\begin{remark}
In information geometry, the \emph{$I$-projection of $\pi\in\mathcal P(\Lambda)$ onto $\mathcal E\subset\mathcal P(\Lambda)$} is any
\[
\mu^{\star}\ \in\ \arg\min_{\mu\in\mathcal E}\ \mathrm{KL}(\mu\|\pi),
\]
whenever the minimum exists, where $(\Lambda,\mathcal B(\Lambda))$ is a measurable space and $\mathcal P(\Lambda)$ is the set of probability measures on it. When $\mathcal E$ is convex and suitably closed and $D_{\mathrm{KL}}(\cdot\|\pi)$ is finite on $\mathcal E$, the $I$-projection exists and is unique. In particular, if $\mathcal E$ is defined by linear moment constraints (e.g., $\int f_i\,d\mu=c_i$), then the $I$-projection $\mu^\star$ has an exponential-tilt density with respect to $\pi$.
\end{remark}

We collect the previous results into a single main theorem.  

\begin{theorem}\label{thm:main-solution}
Under Assumption~\ref{ass:data}, there exists $\alpha^{\ast}\ge 0$ and a $\mathbb G$–progressively measurable optimal control $m^{\ast}$ given by \eqref{eq:optimal-m-process}, where $\theta_t^{\ast}$ solves
\[
\psi_t(\theta)\ =\ \widehat a_t+\frac{\alpha^{\ast}}{\gamma}\,\theta,
\qquad
\psi_t(\theta)=\frac{\displaystyle\int \rho_t^\lambda e^{-\theta\rho_t^\lambda}\,\pi_t(d\lambda)}{\displaystyle\int e^{-\theta\rho_t^\lambda}\,\pi_t(d\lambda)}.
\]
If $\alpha^{\ast}>0$, then for a.e.\ $(t,\omega)$ the equation has a \emph{unique} solution $\theta_t^{\ast}(\omega)$, and the constraint is binding:
\[
\mathbb E\!\int_0^T \mathrm{KL}(m_t^{\ast}\|\pi_t)\,dt=K.
\]
If $\alpha^{\ast}=0$, then any adapted $m$ with $\bar\rho_t(m_t)=\widehat a_t$ a.e.\ and $\mathcal D_{\mathrm{KL}}(m\|\pi)\le K$ is optimal. In particular, whenever $\widehat a_t\in\big[\mathrm{ess\,inf}_\lambda \rho_t^\lambda,\ \mathrm{ess\,sup}_\lambda \rho_t^\lambda\big]$ a.s., one may select the (possibly nonunique) KL $I$–projection given by any solution $\theta_t$ of $\psi_t(\theta_t)=\widehat a_t$ as a canonical adapted minimizer.
\end{theorem}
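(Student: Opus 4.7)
The plan is to assemble the theorem from the already-established pieces. First I would invoke Theorem~\ref{thm:existence and scalar}, which delivers a $\mathbb{G}$–progressively measurable minimizer $m^{\ast}$, the existence of a multiplier $\alpha^{\ast}\ge 0$, complementary slackness, and the fact that $m^{\ast}$ minimizes the Lagrangian. Since the Lagrangian integrand $(t,\omega,\mu)\mapsto \tfrac{\gamma}{2}(\bar\rho_t(\mu)-\widehat a_t)^2+\alpha^{\ast}\mathrm{KL}(\mu\|\pi_t)$ is a normal convex integrand (the KL term is proper, convex, lower semicontinuous by Lemma~\ref{lem:KL-meas}, and the quadratic in $\bar\rho$ is convex and continuous thanks to (A1)), the interchange of infimum and integral over $(t,\omega)$ is justified. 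Hence minimizing the Lagrangian reduces to the pointwise problem \eqref{eq:pointwise-problem}.

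Next, I would split cases on $\alpha^{\ast}$. For $\alpha^{\ast}>0$, I would apply Proposition~\ref{prop:gibbs-solution} pointwise to identify $m_t^{\ast}(\omega,\cdot)$ as the Gibbs tilt \eqref{eq:optimal-m-process} with $\theta_t^{\ast}(\omega)$ satisfying the fixed–point equation $\psi_t(\theta)=\widehat a_t+(\alpha^{\ast}/\gamma)\theta$. Uniqueness of $\theta_t^{\ast}$ follows from Lemma~\ref{lem:unique-theta} (the map $\theta\mapsto\psi_t(\theta)-(\alpha^{\ast}/\gamma)\theta$ is strictly decreasing with range $\mathbb{R}$). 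Progressive measurability of the resulting $(\theta_t^{\ast})$ and hence of $(m_t^{\ast})$ then follows from Proposition~\ref{prop:adaptedness}. To conclude that the KL budget is binding, I would apply Lemma~\ref{lem:KL-budget}: the map $\alpha\mapsto G(\alpha):=\mathbb{E}\!\int_0^T \mathrm{KL}(m_t^{\ast}(\alpha)\|\pi_t)\,dt$ is continuous and nonincreasing with $G(\alpha)\downarrow 0$, so complementary slackness forces $G(\alpha^{\ast})=K$ when $\alpha^{\ast}>0$.

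For $\alpha^{\ast}=0$, the Lagrangian reduces to $\mathcal L(m)$ itself, so any feasible $m$ with $\bar\rho_t(m_t)=\widehat a_t$ a.e.\ is a minimizer (the loss attains its pointwise lower bound $0$). Under the attainability condition $\widehat a_t\in[\mathrm{ess\,inf}_\lambda\rho_t^\lambda,\mathrm{ess\,sup}_\lambda\rho_t^\lambda]$ a.s., Proposition~\ref{prop:zero-loss-benchmark} produces such a feasible flow; to pick a canonical one I would then minimize $\mathrm{KL}(\cdot\|\pi_t)$ over the affine constraint set $\{\mu: \int\rho_t^\lambda\,\mu(d\lambda)=\widehat a_t\}$, and classical $I$–projection theory (exponential family of solutions) gives the Gibbs form $e^{-\theta_t\rho_t^\lambda}/Z_t(\theta_t)$ for some $\theta_t$ solving $\psi_t(\theta_t)=\widehat a_t$, with measurable selection ensured by a version of Proposition~\ref{prop:adaptedness} applied with $\alpha=0$ (using the measurable-root construction sketched in its proof).

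The main obstacle I would expect is not the pointwise optimization but the measurable glueing of these pointwise selections into a progressively measurable process, particularly in the degenerate case $\alpha^{\ast}=0$ where the root of $\psi_t(\theta)=\widehat a_t$ may fail to be unique and the implicit function argument used for $\alpha^{\ast}>0$ breaks down. This is handled by the minimal–root selection $\theta_t^{\ast}(\omega):=\inf\{q\in\mathbb Q: \psi_t(\omega,q)\le \widehat a_t(\omega)\}$ used in the proof of Proposition~\ref{prop:adaptedness}, whose measurability follows from monotonicity of $\psi_t$ in $\theta$ and the Carathéodory property established there. The remainder of the argument is a direct citation chain and requires only routine bookkeeping.
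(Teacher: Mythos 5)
Your proposal is correct and follows essentially the same route as the paper's own proof, which simply cites Theorem~\ref{thm:existence and scalar}, Propositions~\ref{prop:gibbs-solution} and \ref{prop:adaptedness}, and Lemmas~\ref{lem:unique-theta} and \ref{lem:KL-budget}; you have merely unfolded that citation chain, correctly handling the $\alpha^{\ast}>0$ versus $\alpha^{\ast}=0$ dichotomy and the measurable-root selection already built into Proposition~\ref{prop:adaptedness}. The only cosmetic point is that the binding constraint for $\alpha^{\ast}>0$ follows immediately from the complementary slackness identity \eqref{eq:comp-slackness} in Theorem~\ref{thm:existence and scalar}, so the appeal to Lemma~\ref{lem:KL-budget} there is not strictly needed (that lemma is what guarantees such an $\alpha^{\ast}$ can be found), but this does not affect the correctness of your argument.
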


\begin{proof}
Combine Theorem~\ref{thm:existence and scalar}, Propositions~\ref{prop:gibbs-solution} and \ref{prop:adaptedness}, Lemmas~\ref{lem:unique-theta} and \ref{lem:KL-budget}.
\end{proof}

Finally, we check that relaxing the constraints produces the filtered price process. 

\begin{theorem}\label{thm:large-budget-collapse}
Assume \ref{ass:data} and \ref{ass:attainability}. For $K>0$, let $m^{\ast,K}$ denote an optimal solution to Problem~\ref{prob:KL} and write
\[
\psi_t^{(K)}\ :=\ \bar\rho_t\big(m_t^{\ast,K}\big)\ =\ \int_{\Lambda}\rho_t^\lambda\,m_t^{\ast,K}(d\lambda).
\]
Let $\widehat S$ solve \eqref{eq:filtered-dynamics}, and for a fixed $\beta\in[0,1]$ let $\widetilde S^{(K)}$ solve
\begin{equation}\label{eq:synthetic-K}
d\widetilde S^{(K)}_t\ =\ \Big((1-\beta)\widehat a_t+\beta\,\psi_t^{(K)}\Big)\,dt\ +\ \widehat\sigma_t\,d\widehat W_t,
\qquad \widetilde S^{(K)}_0=\widehat S_0.
\end{equation}
Then, as $K\uparrow\infty$,
\begin{align}
\mathcal L\big(m^{\ast,K}\big)\ =\ \frac{\gamma}{2}\,\E\!\int_0^T\!\big(\psi_t^{(K)}-\widehat a_t\big)^2\,dt &\;\longrightarrow\; 0, \label{eq:Lto0}\\
\sup_{t\in[0,T]}\big|\widetilde S^{(K)}_t-\widehat S_t\big|
\;\le\; \beta\int_0^T\big|\psi_t^{(K)}-\widehat a_t\big|\,dt
&\;\xrightarrow[\ K\to\infty\ ]{\ \ \Prob\ \ }\; 0,\label{eq:collapse-prob}
\end{align}
and, in particular,
\[
\E\!\left[\sup_{t\in[0,T]}\big|\widetilde S^{(K)}_t-\widehat S_t\big|\right]
\ \le\ \beta\sqrt{\tfrac{2T}{\gamma}}\;\mathcal L\big(m^{\ast,K}\big)^{1/2}
\ \xrightarrow[\ K\to\infty\ ]{}\ 0.
\]
\end{theorem}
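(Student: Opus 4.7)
The plan is to prove the loss convergence $\mathcal L(m^{\ast,K})\to 0$ first (the only nontrivial part), then deduce the SDE-level claims routinely.

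For the loss convergence, I will construct, for each $N>0$, an explicit $\mathbb G$-progressively measurable feasible competitor $m^N$ whose loss vanishes as $N\to\infty$ and whose KL budget grows only linearly in $N$; optimality of $m^{\ast,K}$ then forces $\mathcal L(m^{\ast,K})\le \mathcal L(m^N)$ as soon as $K\ge \mathcal D_{\mathrm{KL}}(m^N\|\pi)$. Set $M:=\sup_{t,\omega,\lambda}|\rho_t^\lambda|<\infty$ by Assumption~\ref{ass:data}(A1), and define the $\mathcal G_t$-measurable projection
\[
\bar a^N_t\;:=\;\max\bigl(\psi_t(N),\,\min(\widehat a_t,\,\psi_t(-N))\bigr)\;\in\;[\psi_t(N),\psi_t(-N)]\subset\bigl(\mathrm{ess\,inf}_\lambda\rho_t^\lambda,\mathrm{ess\,sup}_\lambda\rho_t^\lambda\bigr).
\]
By strict monotonicity and continuity of $\psi_t$ (Lemma~\ref{lem:unique-theta}), there is a unique $\theta^N_t\in[-N,N]$ with $\psi_t(\theta^N_t)=\bar a^N_t$, and the map $(t,\omega)\mapsto\theta^N_t(\omega)$ is progressively measurable by the same measurable-selection argument used in Proposition~\ref{prop:adaptedness} (via $\theta^N_t=\inf\{\theta\in[-N,N]:\psi_t(\theta)\le\bar a^N_t\}$). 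Take $m^N_t(d\lambda):=e^{-\theta^N_t\rho_t^\lambda}/Z_t(\theta^N_t)\,\pi_t(d\lambda)$, so that $\bar\rho_t(m^N_t)=\bar a^N_t$.

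Two estimates then close this step. Since Assumption~\ref{ass:attainability} gives $\widehat a_t\in[\mathrm{ess\,inf}\rho_t,\mathrm{ess\,sup}\rho_t]$ and Lemma~\ref{lem:unique-theta} gives $\psi_t(N)\downarrow\mathrm{ess\,inf}\rho_t$ and $\psi_t(-N)\uparrow\mathrm{ess\,sup}\rho_t$ as $N\to\infty$, we get $\bar a^N_t\to\widehat a_t$ pointwise with $|\bar a^N_t-\widehat a_t|\le 2M$; dominated convergence yields $\mathcal L(m^N)\to 0$. For the KL, the identity \eqref{eq:KL-value} together with the crude bounds $|\theta^N_t|\le N$, $|\bar a^N_t|\le M$, and $Z_t(\theta)\ge e^{-|\theta|M}$ gives $\mathrm{KL}(m^N_t\|\pi_t)\le 2NM$, hence $\mathcal D_{\mathrm{KL}}(m^N\|\pi)\le 2NMT$. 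Given $\varepsilon>0$, pick $N$ with $\mathcal L(m^N)\le\varepsilon$; then $m^N$ is feasible for Problem~\ref{prob:KL} for every $K\ge 2NMT$, and \eqref{eq:Lto0} follows.

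The remaining statements are immediate. Subtracting \eqref{eq:filtered-dynamics} from \eqref{eq:synthetic-K} cancels the common diffusion $\widehat\sigma_t\,d\widehat W_t$, leaving $\widetilde S^{(K)}_t-\widehat S_t=\beta\int_0^t(\psi^{(K)}_s-\widehat a_s)\,ds$, which is the asserted pathwise bound. Cauchy--Schwarz gives
\[
\E\!\int_0^T|\psi^{(K)}_t-\widehat a_t|\,dt\;\le\;\sqrt{T}\Bigl(\E\!\int_0^T(\psi^{(K)}_t-\widehat a_t)^2\,dt\Bigr)^{\!1/2}=\sqrt{\tfrac{2T}{\gamma}}\,\mathcal L(m^{\ast,K})^{1/2}\longrightarrow 0,
\]
which yields both the $L^1$ bound on $\sup_t|\widetilde S^{(K)}_t-\widehat S_t|$ and convergence in probability. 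The main (mild) technical obstacle is verifying progressive measurability of the truncated-Gibbs competitor $m^N$, but this reduces to inverting the continuous, strictly decreasing map $\psi_t(\cdot)$ on the compact interval $[-N,N]$, which is directly covered by the machinery already developed in Proposition~\ref{prop:adaptedness}.
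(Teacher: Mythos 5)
Your proof is correct, and it takes a genuinely different route from the paper's. The paper's sketch proceeds via the Lagrangian machinery: from Theorem~\ref{thm:existence and scalar} and complementary slackness it deduces $\alpha^{\ast}(K)\downarrow 0$ as $K\uparrow\infty$, then compares the Lagrangian value at $m^{\ast,K}$ against the zero-loss benchmark $m^0$ of Proposition~\ref{prop:zero-loss-benchmark} to obtain $\mathcal L(m^{\ast,K})\le\alpha^{\ast}(K)\,\mathcal D_{\mathrm{KL}}(m^0\|\pi)\to 0$. That comparison is slick but tacitly requires $\mathcal D_{\mathrm{KL}}(m^0\|\pi)<\infty$, which Assumption~\ref{ass:attainability} does not guarantee: if $\widehat a_t$ attains the endpoint $\operatorname{ess\,inf}_\lambda\rho_t^\lambda$ or $\operatorname{ess\,sup}_\lambda\rho_t^\lambda$ on a set of positive $dt\otimes d\mathbb P$ measure, any zero-loss selector must concentrate on the corresponding extremal level set, whose $\pi_t$-measure may vanish, making the KL infinite (this is exactly the boundary pathology the paper itself flags in the remark after Subsection~\ref{subsec:example-affine}). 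Your construction sidesteps this entirely: by projecting $\widehat a_t$ onto the shrinking interior interval $[\psi_t(N),\psi_t(-N)]$ and inverting $\psi_t$ on the compact interval $[-N,N]$, you manufacture a family of feasible Gibbs competitors $m^N$ with explicit, uniform budget $\mathcal D_{\mathrm{KL}}(m^N\|\pi)\le 2NMT$ and loss $\to 0$ by dominated convergence, and then obtain $\mathcal L(m^{\ast,K})\le\mathcal L(m^N)$ directly from optimality over the feasible set once $K\ge 2NMT$, with no reference to the multiplier at all. This is more elementary (no strong duality, no complementary slackness) and strictly more robust, since it handles the boundary case the paper's sketch glosses over; measurability of $\theta^N_t$ is covered by the same monotone-inversion argument as Proposition~\ref{prop:adaptedness}, as you note. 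The final steps (the diffusion terms cancel on subtraction, pathwise bound, Cauchy--Schwarz) are identical to the paper's.
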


\begin{proof}[Proof sketch]
By complementary slackness \eqref{eq:comp-slackness}, the optimal multiplier satisfies $\alpha^{\ast}(K)\downarrow 0$ as $K\uparrow\infty$. Hence the constrained optimizers $m^{\ast,K}$ minimize
\[
\mathbb E\!\int_0^T\!\Big[\tfrac{\gamma}{2}\big(\bar\rho_t(m_t)-\widehat a_t\big)^2+\alpha^{\ast}(K)\,\KL(m_t\|\pi_t)\Big]\,dt.
\]
Let $m^0$ be as in Proposition~\ref{prop:zero-loss-benchmark}; using it as a competitor yields $\mathcal L(m^{\ast,K})\le \alpha^{\ast}(K)\, \E\!\int_0^T\!\KL(m^0_t\|\pi_t)\,dt$. Since $\alpha^{\ast}(K)\to 0$, this forces \eqref{eq:Lto0}. For \eqref{eq:collapse-prob}, note that $\widetilde S^{(K)}-\widehat S$ solves $d(\widetilde S^{(K)}-\widehat S)_t=\beta(\psi_t^{(K)}-\widehat a_t)\,dt$, therefore the pathwise bound and the $L^1$–bound follow by Cauchy–Schwarz.
\end{proof}

\subsection{Example: Uniform Prior and Affine Proposal Drifts}
\label{subsec:example-affine}

Fix $t$ and suppose $\pi_t$ is uniform on $[0,1]$. To ensure relaxing the constraint leads to matching the observable drift, define the expert proposal as a deviation from the filtered drift:
\[
\rho_t^\lambda \;=\; \widehat a_t \;+\; c_1(t)\,\lambda,\qquad c_1(t)>0,\ \ \lambda\in[0,1],
\]
and write $c_1=c_1(t)$ for brevity. Then
\begin{align}
\label{eq:Z-affine}
Z_t(\theta)
&=\int_0^1 e^{-\theta(\widehat a_t+c_1\lambda)}\,d\lambda
= e^{-\theta \widehat a_t}\,\frac{1-e^{-\theta c_1}}{\theta c_1},\\[0.2em]
\label{eq:psi-affine}
\psi_t(\theta)
&=\frac{\int_0^1 (\widehat a_t+c_1\lambda)\,e^{-\theta(\widehat a_t+c_1\lambda)}\,d\lambda}{\int_0^1 e^{-\theta(\widehat a_t+c_1\lambda)}\,d\lambda}
= \widehat a_t+c_1\left(\frac{1}{\theta c_1}-\frac{1}{e^{\theta c_1}-1}\right),
\end{align}
with $\psi_t(0)=\widehat a_t+c_1/2$ and $\psi_t(\theta)\downarrow \widehat a_t$ as $\theta\to\infty$.

For any $\alpha>0$ there is a unique $\theta_t^{\ast}$ solving
\begin{equation}
\label{eq:theta-affine}
c_1\left(\frac{1}{\theta c_1}-\frac{1}{e^{\theta c_1}-1}\right)=\frac{\alpha}{\gamma}\,\theta,
\end{equation}
and the optimal kernel \eqref{eq:optimal-m-process} has density
\begin{equation*}
m_t^{\ast}(d\lambda)=\frac{\theta_t^{\ast} c_1\,e^{-\theta_t^{\ast} c_1\,\lambda}}{1-e^{-\theta_t^{\ast} c_1}}\,d\lambda\quad\text{on }[0,1].
\end{equation*}
The pointwise KL reads
\begin{equation*}
\mathrm{KL}(m_t^{\ast}\|\pi_t)
=-\theta_t^{\ast}\,\psi_t(\theta_t^{\ast})-\log Z_t(\theta_t^{\ast})
=-1+\frac{\theta_t^{\ast} c_1}{e^{\theta_t^{\ast} c_1}-1}-\log\!\big(1-e^{-\theta_t^{\ast} c_1}\big)+\log(\theta_t^{\ast} c_1),
\end{equation*}
and the global constraint uniquely selects $\alpha^{\ast}>0$ solving
$\mathbb E\!\int_0^T \mathrm{KL}(m_t^{\ast}\|\pi_t)\,dt=K$,
or returns $\alpha^{\ast}=0$ if the constraint is slack. Since $\theta\mapsto \mathrm{KL}(m_t^{\ast}\|\pi_t)$ is strictly increasing for $\theta>0$, $K\uparrow$ implies $\theta_t^{\ast}\uparrow$ and thus $\psi_t(\theta_t^{\ast})\downarrow \widehat a_t$, so the aggregated mean (and hence the synthetic drift) converges to the filtered drift as $K\to\infty$. More precisely, as $\theta\uparrow\infty$, the $m_t^{\ast}(d\lambda)$ converges weakly to $\delta_{\{\lambda=0\}}$, the zero–loss selector $m^0$, and $\psi(\theta)\downarrow \widehat a$. Therefore, for the synthetic model
\[
d\widetilde S_t\ =\ \big((1-\beta)\widehat a_t+\beta\,\psi_t(\theta^{\ast}_t)\big)dt\ +\ \widehat\sigma_t\,d\widehat W_t,
\]
the drift collapses monotonically to $\widehat a_t$ as $K\uparrow\infty$, and $\widetilde S$ converges to the observable filter $\widehat S$ in the sense of Theorem~\ref{thm:large-budget-collapse}.

\begin{remark} If $\widehat a_t$ equals the essential infimum/supremum of $\rho_t^\lambda$ (e.g.\ here the infimum at $\lambda=0$ when $c_1>0$), then $\theta_t^{\ast}=\pm\infty$ in \eqref{eq:theta-affine} and $m_t^{\ast}$ concentrates on the corresponding extremal set; this is obtained as a limit of the formulas above.
\end{remark}

\subsection{Simulation of the True, Observed and Opinion-Biased Price Processes}

On a filtered probability space, the \emph{true} log–price \(X\) is given by
\begin{equation*}\label{eq:true}
dX_t \;=\; a_t\,dt \;-\; \tfrac12\,\sigma^2\,dt \;+\; \sigma\,dW_t^{S},\qquad S_t=S_0e^{X_t}.
\end{equation*}
The trader observes a drift–signal (used in the code for filtering)
\begin{equation*}\label{eq:obs}
dY_t \;=\; a_t\,dt \;+\; R^{1/2}\,dB_t,
\end{equation*}
and forms the filtered drift \(\widehat a_t:=\E[a_t\mid\mathcal F_t]\). The \emph{filtered} price admits the innovation representation
\begin{equation*}\label{eq:filtered}
d\widehat S_t \;=\; \widehat a_t\,dt \;+\; \widehat\sigma_t\,d\widehat W_t,
\end{equation*}
with \(\widehat W\) an \(\mathbb F\)–Brownian motion and \(\widehat\sigma_t\ge 0\).
Given an aggregation weight \(\beta\in[0,1]\) and an aggregated mean \(\psi_t\), the \emph{synthetic} price is
\begin{equation}\label{eq:synthetic}
d\widetilde S_t \;=\; \Big((1-\beta)\,\widehat a_t+\beta\,\psi_t\Big)dt \;+\; \widehat\sigma_t\,d\widehat W_t,
\qquad \widetilde S_0=\widehat S_0 .
\end{equation}

At time \(t\), experts \(\lambda\in[0,1]\) propose their individual bias terms
\begin{equation}\label{eq:rho}
\rho_t^\lambda \;=\; \widehat a_t \;+\; c_1(t)\,\lambda,\qquad c_1(t)>0,
\end{equation}
with prior \(\pi_t\) on \([0,1]\), where \(\pi_t=\mathrm{Beta}(a_\pi,b_\pi)\). For \(\theta\in\mathbb R\) the Gibbs measure optimizer is
\begin{equation*}\label{eq:gibbs}
m_t^\ast(d\lambda)\;=\;\frac{e^{-\theta\,\rho_t^\lambda}}{Z_t(\theta)}\,\pi_t(d\lambda),
\qquad
Z_t(\theta):=\int_0^1 e^{-\theta\,\rho_t^\lambda}\,\pi_t(d\lambda),
\end{equation*}
and the \emph{aggregated mean} $\psi_t(\theta)\;:=\;\int_0^1 \rho_t^\lambda\,m_t^\ast(d\lambda)
\;=\;-\frac{d}{d\theta}\log Z_t(\theta)$.

For the Beta prior \(\pi_t=\mathrm{Beta}(a_\pi,b_\pi)\) and \(\rho_t^\lambda=\widehat a_t+c_1\lambda\), letting \(u:=-\theta c_1\) and \(M(u):={}_1F_1(a_\pi;a_\pi+b_\pi;u)\), the confluent hypergeometric function,
\begin{equation*}\label{eq:beta-closed}
Z_t(\theta)=e^{-\theta \widehat a_t}M(u),\qquad 
\psi_t(\theta)=\widehat a_t+c_1\frac{a_\pi}{a_\pi+b_\pi}\frac{{}_1F_1(a_\pi{+}1;a_\pi{+}b_\pi{+}1;u)}{{}_1F_1(a_\pi;a_\pi{+}b_\pi;u)} .
\end{equation*}
The pointwise KL at the optimizer is
\begin{equation*}\label{eq:KL}
\KL\!\big(m_t^\ast\|\pi_t\big)\;=\;-\theta\,\psi_t(\theta)-\log Z_t(\theta)
\;=\; -\theta c_1\,\frac{a_\pi}{a_\pi+b_\pi}\frac{{}_1F_1(a_\pi{+}1;a_\pi{+}b_\pi{+}1;u)}{{}_1F_1(a_\pi;a_\pi{+}b_\pi;u)}-\log M(u).
\end{equation*}
\(\theta=\theta(K)\) is selected from
\(\E\!\int_0^T\KL(m_t^\ast\|\pi_t)\,dt=K\).

From \eqref{eq:psi} and \(Z_t''/Z_t-(Z_t'/Z_t)^2=\mathrm{Var}_{m_t^\ast}(\rho_t^\lambda)\),
\begin{equation}\label{eq:mono}
\frac{d}{d\theta}\psi_t(\theta)\;=\;-\mathrm{Var}_{m_t^\ast}(\rho_t^\lambda)\ \le 0,
\qquad
\frac{d}{d\theta}\KL\big(m_t^\ast\|\pi_t\big)\;=\;\theta\,\mathrm{Var}_{m_t^\ast}(\rho_t^\lambda)\ \ge 0.
\end{equation}
Hence \(K_1<K_2 \Rightarrow \theta(K_1)<\theta(K_2)\Rightarrow
\psi_t\big(\theta(K_1)\big)>\psi_t\big(\theta(K_2)\big)\).
With the choice \eqref{eq:rho}, \(\psi_t(\theta)\downarrow \widehat a_t\) as \(\theta\uparrow\infty\),
so the synthetic drift in \eqref{eq:synthetic} converges monotonically to \(\widehat a_t\) when \(K\to\infty\)
(\(\widetilde S\) collapses to \(\widehat S\)).

\begin{center}
\begin{tabular}{@{}rcccc@{}}
\hline
$K$ & $\theta$ & $\alpha$ & $\mathrm{KL}/T$ & $\delta_{\text{shift}}$ \\
\hline
$0.01$ & $1.013811$ & $0.23388$ & $0.010000$ & $0.237110$ \\
$0.5$  & $9.486238$ & $0.0138132$ & $0.500000$ & $0.131035$ \\
$5$    & $192.478831$ & $5.27449\times 10^{-5}$ & $5.000000$ & $0.010152$ \\
$20$   & $364373.533564$ & $1.50637\times 10^{-11}$ & $20.000000$ & $5\times 10^{-6}$ \\
\hline
\end{tabular}
\end{center}

\begin{figure}[t]
  \centering
  \includegraphics[width=\linewidth,height=0.9\textheight,keepaspectratio]{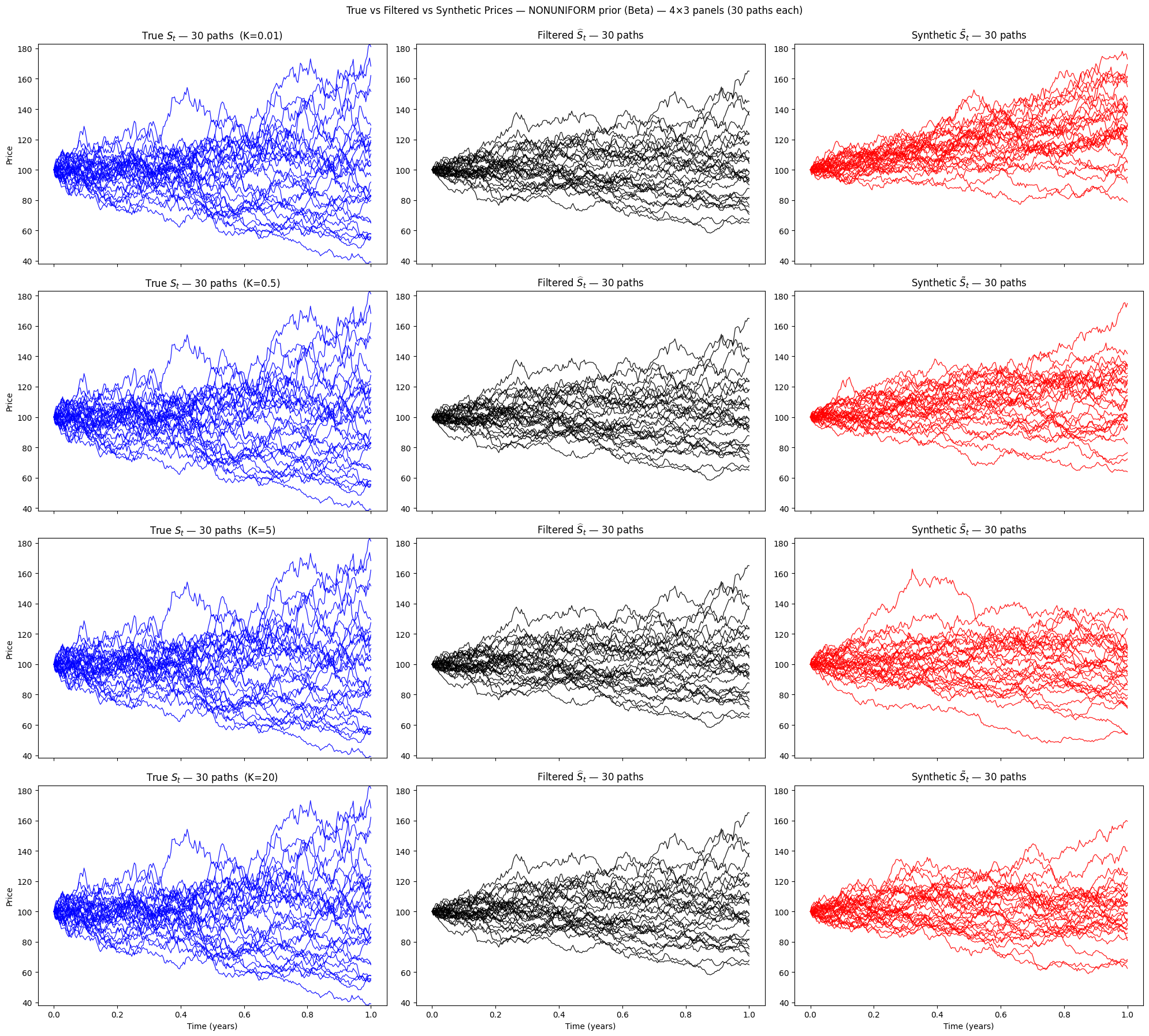}
  \caption{Rows: increasing constraints \(K_1<K_2<K_3<K_4\). Columns: true \(S_t\) (blue), filtered \(\widehat S_t\) (black), synthetic \(\widetilde S_t\) (red). Common $y$–scale across all panels. Average $\mathrm{Corr}(a,\widehat a)$ across 30 paths is $0.8501$.}
  \label{fig:Sim3}
\end{figure}

\clearpage

\begin{color}{black}
\section{Conclusion}
We have studied three successively more specialized models of financial markets under information constraints with traders with differential beliefs. Our analysis has shown that under a fairly natural compatibility condition, increasing information leads to an efficient market as defined in \cite{JarrowLarsson} provided the increase in information is uniform across different traders. The more specialized model incorporating an individual trader's biases introduced a novel way of measuring the impact of a trader's intuitive sense of ambiguity regarding the true value of a partially observed price process. Again, under some natural assumptions, increasing information leads to a decrease in this ambiguity and hence shrinking biases. Finally, we solved a stochastic optimal control problem for a trader seeking positive alphas as defined in \cite{JarrowProtter}, i.e. an arbitrage opportunity or a dominated asset. From a mathematical point of view, our optimal solution is formally similar to well-known results in information theory and information geometry. All our theoretical results were implemented in simulations and hence are well-suited for practical applications to asset pricing in markets with significant information constraints and price evolutions affected by differential beliefs of market participants. 
\end{color}


\appendix                          
\section*{Appendix A}

Let $(\Lambda,d)$ be a compact metric space and let $\mathcal P(\Lambda)$ denote the space of Borel probability measures on $\Lambda$. Write $W_2$ for the 2–Wasserstein distance on $\mathcal P(\Lambda)$ induced by $d$. Since $\Lambda$ is compact, every probability measure has finite second moment, so $\mathcal P(\Lambda)=\mathcal P_2(\Lambda)$, and the $W_2$–topology coincides with the topology of weak convergence. In particular, $(\mathcal P(\Lambda),W_2)$ is compact and its Borel $\sigma$–algebra, denoted as $\mathcal B(\mathcal P(\Lambda))$, agrees with the Borel $\sigma$–algebra $\mathcal B_w(\mathcal P(\Lambda))$ for the weak topology.

A \emph{flow of probability kernels} on $\Lambda$ is a map $(t,\omega)\longmapsto m_t(\omega)\in \mathcal P(\Lambda)$ that is $(\mathcal B([0,T])\otimes\mathcal F)$–measurable as a map from $[0,T]\times\Omega$ into the Polish space $(\mathcal P(\Lambda),\mathcal B(\mathcal P(\Lambda)))$.

We say $m$ is \emph{$\mathbb F$–adapted} if, for each fixed $t\in[0,T]$, the map $\omega\longmapsto m_t(\omega)$ is $(\mathcal F_t,\mathcal B(\mathcal P(\Lambda))$–measurable.

Let $\mathcal P(\mathbb F)$ denote the progressive $\sigma$–algebra on $[0,T]\times\Omega$ so that, for each $t$, the restriction to $[0,t]\times\Omega$ coincides with $\mathcal B([0,t])\otimes\mathcal F_t$. We say $m$ is \emph{progressively measurable} if $(t,\omega)\longmapsto m_t(\omega)$ is $\mathcal P(\mathbb F)$–measurable as a map into $(\mathcal P(\Lambda),\mathcal B(\mathcal P(\Lambda)))$.

\begin{lemma}\label{lem:kernel-equivalences}
Let $\Lambda$ be Polish and let $(\Omega,\mathcal F)$ be a measurable space. For a map
$\mu:\Omega\to\mathcal P(\Lambda)$ with $\mathcal P(\Lambda)$ endowed with the weak topology and its Borel $\sigma$–algebra $\mathcal B_w(\mathcal P(\Lambda))$, the following are equivalent:
\begin{enumerate}
\item[(i)] $\mu$ is $(\mathcal F,\mathcal B_w(\mathcal P(\Lambda)))$–measurable;
\item[(ii)] for all $\varphi\in C_b(\Lambda)$, the map $\omega\mapsto \int_\Lambda \varphi(\lambda)\,d\mu(\omega, d\lambda)$ is $\mathcal F$–measurable;
\item[(iii)] for all open $G\subset\Lambda$, the map $\omega\mapsto \mu(\omega, G)$ is $\mathcal F$–measurable;
\item[(iv)] for all $B\in\mathcal B(\Lambda)$, the map $\omega\mapsto \mu(\omega, B)$ is $\mathcal F$–measurable.
\end{enumerate}
\end{lemma}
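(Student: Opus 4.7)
The plan is to prove the cycle $(i)\Rightarrow(ii)\Rightarrow(iii)\Rightarrow(iv)\Rightarrow(i)$. Three of the four implications are standard measure-theoretic passages; the closing one is the only step that genuinely uses the Polish hypothesis on $\Lambda$, and is where I expect the real work to live.

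First I would handle $(i)\Rightarrow(ii)$ by observing that for each $\varphi\in C_b(\Lambda)$ the evaluation functional $e_\varphi:\nu\mapsto\int\varphi\,d\nu$ is continuous on $(\mathcal P(\Lambda),\mathcal B_w(\mathcal P(\Lambda)))$ by the very definition of the weak topology, so $e_\varphi\circ\mu$ is measurable as the composition of a measurable map with a continuous one. For $(ii)\Rightarrow(iii)$, I would use metrizability of $\Lambda$ to approximate the indicator of an open set $G$ from below by $\varphi_n(\lambda):=\min\{1,\,n\,d(\lambda,\Lambda\setminus G)\}\in C_b(\Lambda)$; monotone convergence then expresses $\mu(\omega,G)=\lim_n\int\varphi_n\,d\mu(\omega)$ as a pointwise limit of $\mathcal F$-measurable functions supplied by (ii). For $(iii)\Rightarrow(iv)$, I would run a Dynkin argument: the class $\mathcal D:=\{B\in\mathcal B(\Lambda):\omega\mapsto\mu(\omega,B)\text{ is }\mathcal F\text{-measurable}\}$ contains $\Lambda$ and is closed under proper differences and increasing countable unions (the latter via monotone convergence applied to $\mu(\omega,\bigcup_k B_k)=\lim_k\mu(\omega,B_k)$); since by (iii) it contains the $\pi$-system of open sets, the $\pi$–$\lambda$ theorem forces $\mathcal D=\mathcal B(\Lambda)$.

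The main obstacle is $(iv)\Rightarrow(i)$, where the point is to identify generators of $\mathcal B_w(\mathcal P(\Lambda))$. My plan is to invoke the classical fact that, since $\Lambda$ is Polish, $\mathcal B_w(\mathcal P(\Lambda))$ coincides with the $\sigma$-algebra generated by the evaluation maps $\{e_B:\nu\mapsto\nu(B):B\in\mathcal B(\Lambda)\}$ (equivalently, by $\{e_\varphi:\varphi\in C_b(\Lambda)\}$). This is proved by choosing a countable convergence-determining family $\{\varphi_k\}\subset C_b(\Lambda)$, noting that $\Phi:\nu\mapsto(\int\varphi_k\,d\nu)_k$ is a topological embedding of $\mathcal P(\Lambda)$ into $\mathbb R^{\mathbb N}$, and then observing that the coordinate maps generate the Borel structure on the image (hence, by the embedding, on $\mathcal P(\Lambda)$). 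Given this identification, (iv) exactly says that $e_B\circ\mu$ is $\mathcal F$-measurable for every $B\in\mathcal B(\Lambda)$, so $\mu$ is $(\mathcal F,\mathcal B_w(\mathcal P(\Lambda)))$-measurable, closing the cycle. In practice I would cite Kallenberg's \emph{Foundations of Modern Probability} (Lemma 1.40 and the associated discussion of random measures) for this generator statement rather than rebuild the embedding argument from scratch, since the lemma is a packaging result collected here for reference in the main text.
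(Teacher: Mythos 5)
Your proposal is correct and amounts to spelling out the standard argument that the paper simply cites (Kallenberg, Lemma~1.40): the cycle $(i)\Rightarrow(ii)\Rightarrow(iii)\Rightarrow(iv)\Rightarrow(i)$ with continuous approximation of open-set indicators, a $\pi$–$\lambda$ argument, and the fact that for Polish $\Lambda$ the weak Borel $\sigma$-algebra on $\mathcal P(\Lambda)$ is generated by the evaluation maps. Since you yourself close the loop by citing Kallenberg for the generator identification, this is the same route as the paper, just with the intermediate steps written out.
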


\begin{proof}
    Standard. See Kallenberg, \cite{Kallenberg}, Lemma 1.40. 
\end{proof}

\begin{lemma}\label{lem:bar-rho-meas}
Assume (A1). Then
\begin{enumerate}
\item If $m$ is adapted, then for each fixed $t$, the map
$\omega\mapsto \bar\rho_t(m_t)(\omega):=\int_\Lambda \rho_t^\lambda(\omega)\,m_t(\omega,d\lambda)$ is
$\mathcal F_t$–measurable.
\item If $m$ is progressively measurable, then the map
$(t,\omega)\mapsto \bar\rho_t(m_t)(\omega)$ is $\mathcal P(\mathbb F)$–measurable (hence progressively
measurable as a real–valued process).
\end{enumerate}
\end{lemma}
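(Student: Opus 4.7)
The plan is to reduce both claims to a single standard ``kernel integration'' fact: if $g$ is a bounded, jointly measurable function on a product of a measurable space with $\Lambda$, and $\nu$ is a measurable probability kernel from that space to $\Lambda$, then the fiberwise integral of $g$ against $\nu$ inherits the measurability of the base space. For part (i) the base space is $(\Omega,\mathcal F_t)$ at fixed $t$; for part (ii) it is $([0,T]\times\Omega,\mathcal P(\mathbb F))$, treated as a single measurable space.

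First I would recall, using assumption (A1) together with the previously imposed joint measurability $(t,\omega,\lambda)\mapsto\rho_t^\lambda(\omega)$ on $\mathcal P(\mathbb F)\otimes\mathcal B(\Lambda)$, that for each fixed $t$ the section $(\omega,\lambda)\mapsto\rho_t^\lambda(\omega)$ is bounded and $\mathcal F_t\otimes\mathcal B(\Lambda)$–measurable (continuity in $\lambda$ plus measurability in $\omega$ gives a Carathéodory function; Aliprantis–Border Lemma~4.51 supplies joint measurability, and the progressive hypothesis gives the $\mathcal F_t$–part at time $t$). By Lemma~\ref{lem:kernel-equivalences}, $\omega\mapsto m_t(\omega,\cdot)$ is an $\mathcal F_t$–measurable probability kernel in the sense that $\omega\mapsto m_t(\omega,B)$ is $\mathcal F_t$–measurable for every $B\in\mathcal B(\Lambda)$.

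Next I would apply the standard monotone–class argument for kernel integrals (e.g.\ Kallenberg, Lemma~1.41). For rectangle integrands $g(\omega,\lambda)=f(\omega)\mathbf 1_B(\lambda)$ with $f$ bounded and $\mathcal F_t$–measurable and $B\in\mathcal B(\Lambda)$, one has $\int g(\omega,\lambda)\,m_t(\omega,d\lambda)=f(\omega)\,m_t(\omega,B)$, which is $\mathcal F_t$–measurable by adaptedness. The collection of bounded $\mathcal F_t\otimes\mathcal B(\Lambda)$–measurable $g$ for which $\omega\mapsto\int g(\omega,\lambda)\,m_t(\omega,d\lambda)$ is $\mathcal F_t$–measurable is closed under pointwise monotone bounded limits and linear combinations, so by the monotone class theorem it equals all bounded $\mathcal F_t\otimes\mathcal B(\Lambda)$–measurable functions. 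Applying this with $g=\rho_t^\cdot$ gives claim (i).

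For part (ii), I would repeat the same argument verbatim with the base measurable space upgraded to $\bigl([0,T]\times\Omega,\mathcal P(\mathbb F)\bigr)$. By hypothesis, $(t,\omega,\lambda)\mapsto\rho_t^\lambda(\omega)$ is bounded and $\mathcal P(\mathbb F)\otimes\mathcal B(\Lambda)$–measurable, and progressive measurability of $m$ combined with Lemma~\ref{lem:kernel-equivalences}(iv) shows that $(t,\omega)\mapsto m_t(\omega,B)$ is $\mathcal P(\mathbb F)$–measurable for every $B\in\mathcal B(\Lambda)$, i.e.\ $m$ is a $\mathcal P(\mathbb F)$–measurable probability kernel from $[0,T]\times\Omega$ to $\Lambda$. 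The same rectangle–plus–monotone–class argument (now with rectangles $f(t,\omega)\mathbf 1_B(\lambda)$ where $f$ is bounded and $\mathcal P(\mathbb F)$–measurable) yields that $(t,\omega)\mapsto\int_\Lambda \rho_t^\lambda(\omega)\,m_t(\omega,d\lambda)$ is $\mathcal P(\mathbb F)$–measurable, which is exactly progressive measurability of $\bar\rho_\cdot(m_\cdot)$ as a real–valued process.

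I do not expect a serious obstacle here; the only mildly delicate point is verifying that the progressive hypothesis on $m$ (as a map into $(\mathcal P(\Lambda),\mathcal B_w)$) is equivalent to $\mathcal P(\mathbb F)$–measurability of $(t,\omega)\mapsto m_t(\omega,B)$ for every Borel $B$, which is handled by the equivalences in Lemma~\ref{lem:kernel-equivalences} applied on $([0,T]\times\Omega,\mathcal P(\mathbb F))$. Once this is noted, both parts are formal consequences of the standard Fubini–type result for measurable kernels.
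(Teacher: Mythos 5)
Your proof is correct and follows essentially the same route the paper intends: the paper disposes of the lemma by citing Kallenberg Lemma~1.41, and your argument simply unwinds that citation into its standard monotone–class proof (measurable rectangles, then a $\pi$–$\lambda$ or functional monotone–class pass to all bounded jointly measurable integrands), applied once with base space $(\Omega,\mathcal F_t)$ for part (i) and once with $([0,T]\times\Omega,\mathcal P(\mathbb F))$ for part (ii), using Lemma~\ref{lem:kernel-equivalences} to translate the kernel's measurability into setwise measurability. One small remark: the appeal to Aliprantis--Border Lemma~4.51 to manufacture joint measurability of $\rho$ is superfluous here, since the section's joint $\mathcal F_t\otimes\mathcal B(\Lambda)$–measurability already follows directly from the assumed $\mathcal P(\mathbb F)\otimes\mathcal B(\Lambda)$–measurability of $(t,\omega,\lambda)\mapsto\rho_t^\lambda(\omega)$ by restricting to $[0,t]\times\Omega\times\Lambda$ and taking the $t$–section; the Carathéodory route is harmless but not needed.
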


\begin{proof}
    Standard. Follows from Kallenberg, \cite{Kallenberg}, Lemma 1.41 by standard arguments. 
\end{proof}


\begin{thebibliography}{99}

\bibitem{AguehCarlier}
M.\,Agueh and G.\,Carlier.
\newblock \emph{Barycenters in the Wasserstein Space}.
\newblock SIAM Journal on Mathematical Analysis, 43(2):904--924, 2011.

\bibitem{Ahmed}
N.\,U. Ahmed.
\newblock \emph{Dynamic Systems and Control with Applications}.
\newblock World Scientific, 2006.

\bibitem{AliprantisBorder}
C.\,D. Aliprantis and K.\,C. Border.
\newblock \emph{Infinite Dimensional Analysis: A Hitchhiker's Guide}.
\newblock 3rd ed., Springer, 2006.


\bibitem{BahlaliMezerdiMezerdi2019}
K.~Bahlali, M.~A. Mezerdi, and B.~Mezerdi.
\newblock Stability of McKean--Vlasov stochastic differential equations and applications.
\newblock arXiv:1902.03478, 2019.


\bibitem{BahlaliMezerdiMezerdi2017}
K.~Bahlali, M.~Mezerdi, and B.~Mezerdi.
\newblock Existence and optimality conditions for relaxed mean-field stochastic control problems.
\newblock \emph{Systems \& Control Letters} \textbf{102} (2017), 1--8.


\bibitem{BainCrisan}
A. Bain and D. Crisan.
\newblock \emph{Fundamentals of Stochastic Filtering}.
\newblock Springer, 2009.

\bibitem{Bonnans}
J.\,Fr\'ed\'eric Bonnans.
\newblock \emph{Convex and Stochastic Optimization}.
\newblock Springer, 2019. 


\bibitem{CarmonaBSDEs}
R.\,Carmona.
\newblock \emph{Lectures on BSDEs, Stochastic Control, and Stochastic Differential Games with Financial Applications}.
\newblock SIAM, 2016. 


\bibitem{CarmonaDelarue1}
R.~Carmona and F.~Delarue.
\newblock \emph{Probabilistic Theory of Mean Field Games with Applications I}.
\newblock Springer, 2018.

\bibitem{CarmonaDelarue2}
R.~Carmona and F.~Delarue.
\newblock \emph{Probabilistic Theory of Mean Field Games with Applications II}.
\newblock Springer, 2018.

\bibitem{CoverThomas}
T.\,M. Cover and J.\,A. Thomas.
\newblock \emph{Elements of Information Theory}.
\newblock Wiley, 2nd ed., 2006. 

\bibitem{Csiszar1975}
I. Csisz{\'a}r.
\newblock $I$-divergence geometry of probability distributions and minimization problems.
\newblock \emph{Annals of Probability}, 3(1):146–158, 1975.

\bibitem{DemboZeitouni}
A. Dembo and O. Zeitouni.
\newblock \emph{Large Deviations Techniques and Applications}.
\newblock Springer, 2nd ed., 1998.


\bibitem{DupuisEllis}
P. Dupuis and R.\,S. Ellis.
\newblock \emph{A Weak Convergence Approach to the Theory of Large Deviations}.
\newblock Wiley, 1997. 

\bibitem{ErvenHarremoes}
T.\,van Erven and P.\,Harremo\"{e}s.
\newblock \emph{R\'enyi Divergence and Kullback--Leibler Divergence}.
\newblock arXiv:1206.2459, 2012. 

\bibitem{Fattorini}
H.\,O. Fattorini.
\newblock \emph{Infinite Dimensional Optimization and Control Theory}.
\newblock Encyclopedia of Mathematics and its Applications, Vol.~62. Cambridge University Press, Cambridge, 1999.

\bibitem{GenestMcConwaySchervish}
C. Genest, K.\,J. McConway, and M.\,J. Schervish.
\newblock Characterization of externally Bayesian pooling operators.
\newblock \emph{Annals of Statistics}, 14(2):487–501, 1986.

\bibitem{GKJR1} K. Grigorian, R.A. Jarrow, 2023, Enlargement of
Filtrations: An Exposition of Core Ideas with Financial Examples,
working paper, Cornell University.

\bibitem{GKJR2} K. Grigorian, R.A. Jarrow, 2024, Filtration Reduction
and Incomplete Markets, \textit{Frontiers of Mathematical Finance}, 3(1), 78-105.

\bibitem{GKJR3} K. Grigorian, R.A. Jarrow, 2025, Filtration Reduction
and Completeness in Brownian Motion Models, \textit{Frontiers of Mathematical Finance}, Vol. 5, 94-120.

\bibitem{GKJR4} K. Grigorian, R.A. Jarrow, 2023, Filtration Reduction and Completeness in Jump-Diffusion Models, working paper, Cornell University.

\bibitem{GKJR5} K. Grigorian, R.A. Jarrow, 2024, Option Pricing in an Incomplete Market, \textit{The Quarterly Journal of Finance}, Vol. 14, No 03.

\bibitem{GKJR6} K. Grigorian, R.A. Jarrow,
2025, No arbitrage for a special class of filtration expansions, \textit{Annals of Finance}, \textbf{21}, 45-68.

\bibitem{Heskes}
T. Heskes.
\newblock Selecting weighting factors in logarithmic opinion pools.
\newblock In \emph{NeurIPS}, 1998.

\bibitem{HuHong2013}
Z. Hu and L.\,J. Hong.
\newblock Kullback–Leibler divergence constrained distributionally robust optimization.
\newblock \emph{Optimization Online} preprint, 2013.

\bibitem{JaimungalPesenti:KLbarycentre}
S.~Jaimungal and S.~M. Pesenti,
\emph{Kullback--Leibler Barycentre of Stochastic Processes}, Apr 2025. \url{https://arxiv.org/abs/2407.04860}.

\bibitem{Jarrow}
R.\,A. Jarrow.
\newblock \emph{Continuous-Time Asset Pricing Theory: A Martingale-Based Approach}, 2nd ed.
\newblock Springer, 2021.

\bibitem{JarrowLarsson}
R.\,A. Jarrow and M. Larsson.
\newblock The Meaning of Market Efficiency.
\newblock \emph{Mathematical Finance}, 22(1):1–30, 2012.

\bibitem{JarrowProtter}
R.\,A. Jarrow and P. Protter.
\newblock \emph{Positive alphas and a generalized multiple-factor asset pricing model}.
\newblock Mathematics and Financial Economics, 10(1):29--48, 2016.



\bibitem{Kallenberg}
O. Kallenberg.
\newblock \emph{Foundations of Modern Probability}, 2nd ed.
\newblock Springer, 2002.


\bibitem{OksendalSulem}
B.~{\O}ksendal and A.~Sulem,
\emph{Applied Stochastic Control of Jump Diffusions}, 3rd ed., Springer, 2019.

\bibitem{RockafellarWets}
R.\,T. Rockafellar and R.\,J.-B. Wets.
\newblock \emph{Variational Analysis}.
\newblock Springer, 1998.


\bibitem{WainwrightJordan2008}
M.\,J. Wainwright and M.\,I. Jordan.
\newblock \emph{Graphical Models, Exponential Families, and Variational Inference}.
\newblock Foundations and Trends in Machine Learning, 1(1–2):1–305, 2008.

\bibitem{Xiong}
J. Xiong.
\newblock \emph{An Introduction to Stochastic Filtering Theory}.
\newblock Oxford University Press, 2008.


\end{thebibliography}
\end{document}